\ifdefined\TITWRAPPER
\else
\documentclass[reqno]{amsart}

\usepackage[T1]{fontenc}
\usepackage[utf8]{inputenc}
\usepackage{lmodern}
\usepackage{amsmath,amssymb,mathtools,mathrsfs}
\usepackage[top=1.5in,bottom=1.43in,left=1.25in,right=1.25in]{geometry}
\usepackage[colorlinks=true,citecolor=blue,linkcolor=blue,urlcolor=blue]{hyperref}

\allowdisplaybreaks
\newtheorem{theorem}{Theorem}[section]
\newtheorem{lemma}[theorem]{Lemma}
\newtheorem{proposition}[theorem]{Proposition}
\newtheorem{corollary}[theorem]{Corollary}
\theoremstyle{definition}

\theoremstyle{remark}
\newtheorem{remark}[theorem]{Remark}

\newcommand{\X}{\mathscr X_n}
\newcommand{\C}{\mathcal C}
\newcommand{\Pcode}{\mathcal P}
\newcommand{\one}{\mathbf 1}
\newcommand{\Dtwo}{D^{L_2}}
\newcommand{\Adual}{A^\bot}
\newcommand{\wh}{\widehat}
\newcommand{\E}{\mathbb E}
\newcommand{\TV}{d_{\mathrm{TV}}}
\newcommand{\wt}{\operatorname{wt}}

\hypersetup{
  pdftitle={Quantitative tiling stability from quadratic discrepancy in Hamming spaces},
  pdfauthor={Valery (Binyamin) Grishin and Aryeh Lev Zabokritskiy (Yohananov)},
  pdfkeywords={quadratic discrepancy, perfect code, stability, code smoothing,
    Hamming space, Krawtchouk polynomial}
}

\title[Quantitative discrepancy stability]
{Quantitative tiling stability from quadratic discrepancy\\
in Hamming spaces}

\author{Valery (Binyamin) Grishin}
\address{Department of Computer Science,
Tel-Hai University of Kiryat Shmona and the Galilee,
Kiryat Shmona, Israel}
\email{valerig.tech@gmail.com}

\author{Aryeh Lev Zabokritskiy (Yohananov)}
\address{Department of Computer Science,
MIGAL -- Galilee Research Institute,
Kiryat Shmona, Israel}
\address{Department of Computer Science,
Tel-Hai University of Kiryat Shmona and the Galilee,
Kiryat Shmona, Israel}
\email{yuhanalev@telhai.ac.il}

\date{August 26, 2026}

\subjclass[2020]{Primary 94B65; Secondary 11K38, 94B25, 05E30}
\keywords{quadratic discrepancy, perfect code, stability, code smoothing,
Hamming space, Krawtchouk polynomial}

\begin{document}

\begin{abstract}
Quadratic ball discrepancy defines an energy on codes in finite Hamming
spaces.  At perfect-code parameters, its exact minimizers are the perfect
codes.  We fix the alphabet size, length, and code cardinality and compare all
codes with these parameters.  We prove \emph{tiling-defect stability}: excess
discrepancy above the perfect-code benchmark controls the squared deviation of
the distinguished ball-covering multiplicity from one.  For one-error
parameters satisfying sphere-packing and Lloyd integrality, the lower
coefficient is $\kappa_{n,q}/q^2\geq1$.  The uniform floor one is sharp, while
the certified parameter-dependent coefficient can be much larger.  An explicit
parameter-dependent upper estimate is also available, and the two certified
coefficients can be far apart.
For any two-error parameter pair with $n\geq5$ satisfying sphere-packing
divisibility and having two distinct integral Lloyd roots in the Hamming weight
range, we obtain an explicit positive coefficient without assuming that a
perfect code exists.  For alphabets of size at least four, this conditional
coefficient has a closed form and fixed-alphabet asymptotics.  Direct
certificates for the repetition and Golay families, combined with perfect-code
classification, give tiling-defect stability for every nontrivial perfect code.
Here stability concerns the ball-covering multiplicity profile, not
symmetric-difference proximity to a particular perfect code.  The defect is
also a normalized chi-square smoothing error under uniform ball noise, so
excess discrepancy controls holes, overlaps, defective ambient points, total
variation, and R\'enyi divergence from uniformity of the ball-noise output.
Competing codes need not be linear or satisfy a distance or error-correction
constraint.
\end{abstract}

\maketitle
\fi

\section{Introduction}

Stolarsky's invariance principle turns quadratic discrepancy into an energy
minimization problem.  Barg developed this principle for finite metric spaces
and gave its concrete Fourier--Krawtchouk form in the binary Hamming space
\cite{Barg2021Stolarsky}.  In particular, he proved by linear programming that
binary perfect codes minimize total quadratic ball discrepancy.  Recent work
of the second author extended the exact-minimizer statement to every finite alphabet
and produced parameter-only benchmarks whose attainment is equivalent to
perfect tiling \cite[Theorems~2.1--2.3]{Zabokritskiy2026Perfect}.  The
competitors in these results are all subsets of the prescribed cardinality,
rather than only linear codes or codes with a prescribed minimum distance.
The present paper is a quantitative continuation of that work.  We rederive
the required spectral and moment identities and reproduce the formal
two-error comparison before deriving the new coercive bounds and their
consequences.

Energy minimization in Hamming spaces has also been studied through binomial
moments and universal optimality
\cite{AshikhminBarg1999Binomial,CohnZhao2014Energy,BoyvalenkovEtAl2017}.
Those results identify exact optimizers for broad classes of potentials.  Our
question is instead coercive: how much total discrepancy must be paid for a
specified failure of perfect tiling?

To state the optimization problem precisely, fix a finite alphabet
$\mathcal A$ of size $q$, a length $n$, and a cardinality $N$.  The competitors
are all $N$-element subsets of the Hamming space $\mathcal A^n$.  We do not
prescribe their minimum distance or the number of errors they correct.  If
these parameters are the parameters of a perfect code, that code may correct
one, two, or another number $e$ of errors; this value only selects the ball
multiplicity whose defect is measured.  It does not restrict the competing
codes.

Barg's exact-minimizer theorem and its $q$-ary extension leave the following
robustness problem, posed explicitly in
\cite[Section~7, Question~1]{Zabokritskiy2026Perfect}.
Suppose that the discrepancy of a code is only slightly above the
perfect-code benchmark.  Must the Hamming balls around its codewords form an
approximate tiling?  For fixed parameters, finiteness and the equality
characterization imply that some positive gap exists.  They do not provide an
explicit, interpretable, or parameter-uniform estimate, nor do they explain
how the discrepancy excess detects holes and overlaps.

For a radius $e$, let $V_e$ denote the cardinality of a radius-$e$ Hamming
ball.  Given a code $\C$, let $m_{e,\C}(x)$ count the radius-$e$ balls centered
at codewords that contain the ambient word $x$, and put
\[
 \Phi_e(\C)\triangleq
 \sum_{x\in\mathcal A^n}\bigl(m_{e,\C}(x)-1\bigr)^2.
\]
When $NV_e=q^n$, the radius-$e$ summand of the discrepancy is exactly
\[
 \frac{\Phi_e(\C)}{N^2}.
\]
This identity alone does not give stability: a competitor could in principle
lower the contributions of other radii below their perfect-code values and
thereby compensate for the radius-$e$ defect.  The central point of this paper
is that such compensation cannot cancel the defect quantitatively.

Related work treats weighted $L_p$ ball discrepancy in the binary Hamming
space \cite{BargSkriganov2021} and uniform relative discrepancy for random
linear codes over finite fields and translates of one ball
\cite{DoronEtAl2026}.  These objectives differ from the all-radii quadratic
excess studied here.

The same question has an information-theoretic interpretation.  Code
smoothing asks whether adding noise to a uniform codeword produces a
distribution close to uniform.  Ball and Bernoulli noise in the binary
Hamming space, together with applications to resolvability and wiretap
channels, were studied by Pathegama and Barg
\cite[Propositions~6.3, 7.2, and A.1]{PathegamaBarg2023Smoothing}; a systematic code-and-lattice framework
appears in \cite{DebrisAlazardEtAl2023Smoothing}.  At perfect-code
cardinality, the known squared ball-multiplicity defect is precisely a
chi-square smoothing error; a related multiscale identity appears in
\cite[Proposition~6.1]{Zabokritskiy2026Perfect}.  The new bridge is a coercive
comparison: excess \emph{total} discrepancy controls that fixed-radius
smoothing defect explicitly.

\paragraph{Our contribution.}
The main results are as follows.
\begin{enumerate}
\item At every one-error parameter set satisfying the sphere-packing and Lloyd
integrality conditions stated below, excess total discrepancy dominates the
radius-one tiling defect with certified coefficient
$\kappa_{n,q}/q^2\geq1$.  The uniform floor one is sharp: even after the
radius-one defect summand is removed, the remaining radii still favor the
perfect-code benchmark, and no larger coefficient works uniformly over all
admissible parameter sets.  The parameter-dependent coefficient can be much
larger.  For $q\geq4$ we evaluate it explicitly.  An explicit upper coefficient
is also available, although the two certified coefficients can differ by many
orders of magnitude.  The result is valid before a perfect code is known to
exist.
\item At every two-error parameter set with $n\geq5$, sphere-packing
integrality, and two distinct integral Lloyd roots, over every alphabet, we
construct an explicit positive stability coefficient without assuming
realization of the formal benchmark.  This is a conditional arithmetic branch,
not an existence assertion.  For alphabet size at least four, the coefficient
has a closed form and we determine its fixed-alphabet asymptotics.  The
arithmetic hypotheses leave exactly two small-alphabet cases, which are handled
directly.
\item For the remaining perfect-code parameters correcting at least three
errors, a direct argument treats the remaining odd binary repetition codes and
an exact certificate treats the binary Golay parameters.
\item We translate the estimates into bounds on holes, overlaps, defective
ambient points, chi-square divergence, total variation, and R\'enyi
divergence.  We also record the resulting discrete gap between attainment and
nonattainment of the variational benchmark.
\item Combining the preceding estimates with the classical classification and
nonexistence results
\cite{Tietavainen1973,ZinovievLeontiev1973,Reuvers1977,Best1983,Hong1984}
yields tiling-defect stability for every perfect code other than the
full-space and one-word codes.
\end{enumerate}

\paragraph{Relation to the preceding paper.}
The exact minimization theorem and several intermediate identities used here
appeared in \cite{Zabokritskiy2026Perfect}.  To keep the present argument
self-contained while separating inherited prerequisites from the new
stability estimates, we retain the necessary statements in the body and
collect their proofs in explicitly attributed appendices.  These include the
spectral and moment identities, Lloyd support, the formal two-error comparison,
the ternary slope certificate, and the exact benchmark calculations underlying
the Golay cases.  The perturbed Golay ratio certificates belong to the present
quantitative argument.  The results specific to this paper are
the coercive discrepancy--defect inequalities, their explicit constants and
asymptotics, the realization-independent stability statements, and the
combinatorial and smoothing consequences.

Table~\ref{tab:dependency-provenance} summarizes the division between inherited
inputs and the present contribution.
\begin{table}[t]
\caption{Origin and role of the main mathematical inputs.}
\label{tab:dependency-provenance}
\centering
\small
\begin{tabular}{@{}p{0.31\textwidth}p{0.25\textwidth}p{0.38\textwidth}@{}}
\hline
Input & Origin in the present paper & Role here\tabularnewline
\hline
Spectral and moment identities; Lloyd support
& Reproduced from \cite{Zabokritskiy2026Perfect} in Appendix~A
& One- and two-error defect comparisons\tabularnewline
Formal two-error quadrature and comparison
& Reproduced from \cite{Zabokritskiy2026Perfect} in Appendix~C
& Reference distribution for the conditional two-error theorem\tabularnewline
Ternary slope and exact Golay benchmarks
& Reproduced from \cite{Zabokritskiy2026Perfect} in Appendices~B and E
& Uniform one-error floor and Golay benchmark evaluation\tabularnewline
Coercive minorants, quantitative constants and ratios, and consequences
& Present paper
& Stability inequalities and their applications\tabularnewline
\hline
\end{tabular}
\end{table}

We call this \emph{tiling-defect stability}.  It controls the covering
multiplicity function, not symmetric-difference distance from a particular
perfect code.  It is complementary to stability of Delsarte-tight spherical
codes \cite{BoroczkyGlazyrin2024Stability}.

For the spectral arguments, choose a finite abelian group $G$ of order $q$ and
a bijection from $\mathcal A$ to $G$.  Its coordinatewise extension is a
Hamming isometry, so this relabeling preserves every discrepancy and tiling
quantity in the paper.  The group structure is used only for Fourier analysis.
We may therefore write $\X=G^n$ without restricting the alphabet size.

\paragraph{Parameter scope.}
In the one-error result, put
\[
 V\triangleq1+n(q-1),\qquad
 N\triangleq\frac{q^n}{V},\qquad
 k\triangleq\frac Vq .
\]
Here $V$ is the volume of a radius-one Hamming ball, $N$ is the cardinality
forced by sphere packing, and $k$ is the root of the radius-one Lloyd
polynomial $L_1(w)=V-qw$.  We assume that $N$ and $k$ are integers and that
$n\geq3$.

For the two-error result, put
\[
 V_2\triangleq1+n(q-1)+\binom n2(q-1)^2,
 \qquad N\triangleq\frac{q^n}{V_2},
\]
where $V_2$ is the volume of a radius-two Hamming ball, and define the
radius-two Lloyd polynomial by
\[
 L_2(w)\triangleq
 (q-1)^2\binom{n-w}{2}
 -(q-1)(w-1)(n-w)+\binom{w-1}{2}.
\]
We assume that $n\geq5$, that $N$ is an integer, and that $L_2$ has two
distinct integral roots in $\{1,\ldots,n\}$.  These roots are the only possible
nonconstant dual
Hamming weights of a perfect two-error code.  The conditions are necessary
arithmetic conditions, not existence assumptions.  The associated moment data
are \emph{formal}:
they are forced algebraically by the perfect-code equations but need not be
realized by an actual code.  The theorem is formulated for every $q\geq2$
subject to these arithmetic hypotheses; it does not assert that an admissible
parameter pair exists for each alphabet size.  For $q=2,3$ the hypotheses
force the binary length-five and ternary length-eleven cases.
The unrestricted non-prime-power two-error existence problem remains open
despite strong arithmetic exclusions \cite{CazorlaGarcia2024,Bennett2026}.

\paragraph{Organization.}
Section~\ref{sec:defect-main} identifies the tiling and smoothing defect and
states the global theorem.  Section~\ref{sec:spectral-inputs} records the
spectral ingredients.  Section~\ref{sec:one-error} proves the one-error
inequalities.  Section~\ref{sec:higher-error} proves the all-alphabet
two-error theorem, treats the small-alphabet cases, and establishes the
repetition-code estimate.  Section~\ref{sec:remaining-perfect} treats the
binary Golay parameters and completes the global theorem.
Section~\ref{sec:consequences} gives the smoothing consequences.
The appendices separate proofs reproduced from
\cite{Zabokritskiy2026Perfect} from the new coefficient evaluation and the
finite ratio tables used by the quantitative estimates.

\section{The tiling defect and the main results}
\label{sec:defect-main}

We first isolate the combinatorial defect controlled by the discrepancy,
identify it with a smoothing error, and then state the global theorem.  Retain
the realization $\X=G^n$ fixed in the Introduction, and let $d$ denote Hamming
distance on $\X$.
For $0\leq t\leq n$, write
\[
 B(x,t)=\{y\in\X:d(x,y)\leq t\},\qquad
 V_t=|B(x,t)|=\sum_{j=0}^t\binom nj(q-1)^j.
\]
For a nonempty code $\C\subseteq\X$ of size $N$, its total quadratic ball
discrepancy is
\begin{equation}
\label{eq:discrepancy}
 \Dtwo(\C)\triangleq\sum_{t=0}^n\sum_{x\in\X}
 \left(\frac{|\C\cap B(x,t)|}{N}-\frac{V_t}{q^n}\right)^2.
\end{equation}
This normalization agrees with \cite{Barg2021Stolarsky,Zabokritskiy2026Perfect}.
For fixed parameters, write
\begin{equation}
\label{eq:extremal-value}
 \Dtwo(q,n,N)\triangleq
 \min_{\substack{\C\subseteq G^n\\|\C|=N}}\Dtwo(\C).
\end{equation}

Fix an integer $e$ with $1\leq e\leq n$ and assume throughout this section that
\begin{equation}
\label{eq:perfect-cardinality}
 NV_e=q^n.
\end{equation}
The code itself need not correct any errors.  For convenient reference, the
ball multiplicity and squared tiling defect introduced above are
\begin{equation}
\label{eq:defect}
 m_{e,\C}(x)=|\C\cap B(x,e)|,
 \qquad
 \Phi_e(\C)=\sum_{x\in\X}\bigl(m_{e,\C}(x)-1\bigr)^2.
\end{equation}
Let
\[
 H_e(\C)\triangleq|\{x:m_{e,\C}(x)=0\}|
\]
be the number of holes, and let
\[
 Z_e(\C)\triangleq|\{x:m_{e,\C}(x)\neq1\}|
\]
be the number of defective ambient points.  For $S\subseteq\X$, $\one_S$
denotes its indicator function.  We use natural logarithms and the conventions
$\TV(P,Q)\triangleq\frac12\sum_x|P(x)-Q(x)|$ and
$\chi^2(P\|Q)\triangleq\sum_x(P(x)-Q(x))^2/Q(x)$ when $Q$ has full support.
The symbols $D(P\|Q)$ and $D_2(P\|Q)$ denote relative entropy and order-two
R\'enyi divergence.

The overlap identity below is the fixed-radius form of
\cite[Theorem~2.1 and Eq.~(8)]{Barg2021Stolarsky}.  The smoothing identities
are equivalent to the convolution formulas in
\cite[Propositions~6.3 and A.1]{PathegamaBarg2023Smoothing} and
\cite[Proposition~6.1]{Zabokritskiy2026Perfect}.  We record them together with
the hole decomposition needed for the stability estimates.

\begin{lemma}[Defect, overlaps, and smoothing]
\label{lem:defect-identities}
Under \eqref{eq:perfect-cardinality},
\begin{align}
 \Phi_e(\C)
 &=\sum_{\substack{z,z'\in\C\\z\neq z'}}
   |B(z,e)\cap B(z',e)|,\label{eq:overlap-identity}\\
 \Phi_e(\C)
 &=2H_e(\C)+
   \sum_{x:m_{e,\C}(x)\geq2}
   (m_{e,\C}(x)-1)(m_{e,\C}(x)-2).
   \label{eq:hole-decomposition}
\end{align}
Consequently,
\begin{equation}
\label{eq:hole-defective}
 2H_e(\C)\leq\Phi_e(\C),
 \qquad
 Z_e(\C)\leq\Phi_e(\C).
\end{equation}

Let $X$ be uniform on $\C$, let $E$ be independent of $X$ and uniform on
$B(0,e)$, and put $Y=X+E$.  If $P_{e,\C}$ is the law of $Y$ and $U$ is
uniform on $\X$, then
\begin{align}
 P_{e,\C}(x)&=\frac{m_{e,\C}(x)}{q^n},
 \label{eq:output-law}\\
 \chi^2(P_{e,\C}\|U)&=\frac{\Phi_e(\C)}{q^n},
 \label{eq:chi-defect}\\
 \TV(P_{e,\C},U)&=\frac{H_e(\C)}{q^n}.
 \label{eq:tv-holes}
\end{align}
Moreover,
\begin{equation}
\label{eq:defect-discrete-range}
 \Phi_e(\C)=0\qquad\text{or}\qquad \Phi_e(\C)\geq4.
\end{equation}
It vanishes if and only if $\C$ is a perfect $e$-error-correcting code.
\end{lemma}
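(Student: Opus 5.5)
The whole lemma rests on two counting identities for $m=m_{e,\C}$:
\[
\sum_x m(x)=NV_e=q^n,\qquad \sum_x m(x)^2=\sum_{z,z'\in\C}|B(z,e)\cap B(z',e)|.
\]
Both are proved by double counting, writing $m(x)=\sum_{z\in\C}\one_{B(z,e)}(x)$. Everything else is elementary algebra on these.

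\textbf{Overlap identity.} Expand
\[
\Phi_e(\C)=\sum_x m^2-2\sum_x m+q^n .
\]
By the two identities and \eqref{eq:perfect-cardinality}, the diagonal terms $z=z'$ contribute $NV_e=q^n$. Hence
\[
\Phi_e(\C)=\sum_{z\neq z'}|B(z,e)\cap B(z',e)|+q^n-2q^n+q^n,
\]
which is \eqref{eq:overlap-identity}.

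\textbf{Hole decomposition.} Since $\sum_x(m(x)-1)=0$, we have $\Phi_e(\C)=\sum_x(m-1)^2-\sum_x(m-1)$. This equals $\sum_x(m-1)(m-2)$. The summand is $2$ when $m=0$, is $0$ when $m\in\{1,2\}$, and is nonnegative when $m\geq2$. This gives \eqref{eq:hole-decomposition}, and in particular $2H_e\leq\Phi_e$.

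\textbf{Defective points.} Every $x$ with $m(x)\neq1$ contributes $(m-1)^2\geq1$ to $\Phi_e$ directly, so $Z_e\leq\Phi_e$.

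\textbf{Smoothing identities.} First, $P(Y=x)=\frac1N\sum_{z\in\C}\Pr(E=x-z)=\frac{m(x)}{NV_e}=\frac{m(x)}{q^n}$, using that $B(0,e)$ is symmetric and translation-invariant in $G^n$. Then
\[
\chi^2(P_{e,\C}\|U)=\sum_x \frac{(m(x)-1)^2/q^{2n}}{q^{-n}}=\frac{\Phi_e(\C)}{q^n}.
\]
For total variation, $\sum_x(m-1)=0$ gives $\sum_x|m-1|=2\sum_{x:\,m<1}(1-m)=2H_e$. Therefore $\TV(P_{e,\C},U)=H_e/q^n$.

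\textbf{Discreteness.} This is the only step needing care. Suppose $\Phi_e>0$.
\begin{itemize}
\item If $H_e\geq2$, then $\Phi_e\geq2H_e\geq4$.
\item If $H_e=1$, the zero-sum condition forces $\sum_{m\geq2}(m-1)=1$. So exactly one point has $m=2$ and it contributes $0$ to the overlap term. Then $\Phi_e=2$, which must be ruled out.
\item If $H_e=0$, then $\sum(m-1)=0$ with all $m\geq1$ forces $m\equiv1$, so $\Phi_e=0$.
\end{itemize}

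The main obstacle is excluding $H_e=1$ with a single double point. The plan is a parity and overlap argument using \eqref{eq:overlap-identity}. Each unordered pair $\{z,z'\}$ appears twice in that sum, so $\Phi_e$ is even. Also, the total overlap $\sum_{z\neq z'}|B\cap B'|=2$ means exactly one pair of balls meets, in exactly one point. I would check whether two radius-$e$ balls can intersect in a single point: if $d(z,z')=2e$, the intersection contains all midpoints, and there are $\binom{2e}{e}\geq2$ of them when $e\geq1$. If $d(z,z')<2e$, the intersection is larger still. So a nonempty intersection always has at least $2$ points. Hence a single intersection point is impossible, and $\Phi_e=2$ cannot occur, giving $\Phi_e\geq4$.

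\textbf{Characterization.} Finally, $\Phi_e=0$ iff $m\equiv1$ iff the balls tile $\X$, i.e.\ $\C$ is a perfect $e$-error-correcting code.
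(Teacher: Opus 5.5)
Your proposal is correct and follows essentially the paper's proof: double counting for \eqref{eq:overlap-identity}, the vanishing sum of the deviations $m_{e,\C}(x)-1$ for the hole decomposition and the total-variation formula, convolution for the output law, and, for \eqref{eq:defect-discrete-range}, the fact that two distinct radius-$e$ balls that meet share at least two points. The paper does not split into cases on $H_e$; it simply notes that a meeting ordered pair and its reverse contribute at least $4$ to \eqref{eq:overlap-identity}.

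The one step to tighten is ``larger still'' for $d(z,z')<2e$. You can justify it directly, as the paper does: the intersection contains both centers when $d(z,z')\leq e$, and two consecutive points of a shortest path from $z$ to $z'$ when $e<d(z,z')<2e$.
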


\begin{proof}
Equation \eqref{eq:perfect-cardinality} gives
$\sum_xm_{e,\C}(x)=q^n$.  Therefore
\[
 \Phi_e(\C)=\sum_xm_{e,\C}(x)^2-q^n.
\]
Double counting triples $(x,z,z')$ with
$x\in B(z,e)\cap B(z',e)$ shows that the right side is precisely the sum of
the ordered overlaps with $z\neq z'$, proving
\eqref{eq:overlap-identity}.

The deviations $m_{e,\C}(x)-1$ sum to zero.  Every negative deviation is
exactly $-1$, so
\[
 \sum_{x:m_{e,\C}(x)\geq2}(m_{e,\C}(x)-1)=H_e(\C).
\]
Expanding each positive square as $r^2=r+r(r-1)$ gives
\eqref{eq:hole-decomposition}.  The first inequality in
\eqref{eq:hole-defective} follows.  The second follows pointwise from
$\one_{\{m\neq1\}}\leq(m-1)^2$ for every nonnegative integer $m$.

The convolution of the uniform measures on $\C$ and $B(0,e)$ gives
\eqref{eq:output-law}.  Since $U(x)=q^{-n}$,
\[
 \chi^2(P_{e,\C}\|U)
 =q^n\sum_x\left(\frac{m_{e,\C}(x)-1}{q^n}\right)^2,
\]
which is \eqref{eq:chi-defect}.  The total negative deviation is
$H_e(\C)$ and equals the total positive deviation, so the $\ell_1$ distance
is $2H_e(\C)/q^n$, proving \eqref{eq:tv-holes}.  Finally,
$r^2\equiv r\pmod2$ and the deviations sum to zero, hence $\Phi_e$ is even.
If it is positive, \eqref{eq:overlap-identity} gives distinct centers whose
radius-$e$ balls meet.  Their intersection contains at least two points: if
their distance is at most $e$, it contains both centers; if it lies strictly
between $e$ and $2e$, two successive points on a shortest path lie in both
balls; and at distance $2e$ there are at least two midpoints.  The ordered
center pair and its reverse therefore contribute at least four in total.
This proves \eqref{eq:defect-discrete-range}.  Vanishing is exactly the
perfect-tiling identity.
\end{proof}

The next proposition translates every stability estimate below into
combinatorial bounds and bounds on the divergences just defined.

\begin{proposition}[From a discrepancy certificate to smoothing]
\label{prop:transfer}
Suppose that a parameter-only benchmark $\delta$ and a constant $\sigma>0$
satisfy
\begin{equation}
\label{eq:abstract-stability}
 \Dtwo(\C)-\delta\geq\sigma\frac{\Phi_e(\C)}{N^2}
\end{equation}
for every $N$-word code, where $NV_e=q^n$.  Put
$\Delta(\C)=\Dtwo(\C)-\delta$.  Then
\begin{equation}
\label{eq:overlap-transfer}
 \sum_{\substack{z,z'\in\C\\z\neq z'}}
 |B(z,e)\cap B(z',e)|
 =\Phi_e(\C)\leq\frac{N^2}{\sigma}\Delta(\C).
\end{equation}
Moreover,
\begin{align}
 H_e(\C)&\leq\frac{N^2}{2\sigma}\Delta(\C),
 &Z_e(\C)&\leq\frac{N^2}{\sigma}\Delta(\C),
 \label{eq:count-transfer}\\
 \chi^2(P_{e,\C}\|U)&\leq
 \frac{N}{\sigma V_e}\Delta(\C),
 &\TV(P_{e,\C},U)&\leq
 \frac{N}{2\sigma V_e}\Delta(\C).
 \label{eq:smoothing-transfer}
\end{align}
Also,
\begin{equation}
\label{eq:renyi-transfer}
 D(P_{e,\C}\|U)
 \leq D_2(P_{e,\C}\|U)
 \leq\log\left(1+\frac{N}{\sigma V_e}\Delta(\C)\right).
\end{equation}
If $\C$ is not perfect, then
\begin{equation}
\label{eq:discrete-gap-abstract}
 \Delta(\C)\geq\frac{4\sigma}{N^2}.
\end{equation}
\end{proposition}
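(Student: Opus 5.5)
The plan is to combine the assumed certificate \eqref{eq:abstract-stability} with the identities of Lemma~\ref{lem:defect-identities}. Each bound is a rescaling of the single inequality
\[
 \Phi_e(\C)\leq\frac{N^2}{\sigma}\Delta(\C).
\]
This follows at once from \eqref{eq:abstract-stability}, since $\sigma>0$ lets us multiply through by $N^2/\sigma$. The equality with the overlap sum in \eqref{eq:overlap-transfer} is exactly \eqref{eq:overlap-identity}, so \eqref{eq:overlap-transfer} is immediate.

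For the counting bounds \eqref{eq:count-transfer}, I would insert the key inequality into \eqref{eq:hole-defective}. This gives $2H_e(\C)\leq\Phi_e(\C)$ and $Z_e(\C)\leq\Phi_e(\C)$, and dividing the first by two yields the stated constants.

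For the smoothing bounds I would use \eqref{eq:chi-defect}:
\[
 \chi^2(P_{e,\C}\|U)=\frac{\Phi_e(\C)}{q^n}\leq\frac{N^2}{\sigma q^n}\Delta(\C).
\]
Substituting $q^n=NV_e$ from \eqref{eq:perfect-cardinality} gives the factor $N/(\sigma V_e)$. The total variation bound follows the same way from \eqref{eq:tv-holes} together with the hole bound: $\TV=H_e/q^n\leq N^2\Delta/(2\sigma q^n)=N\Delta/(2\sigma V_e)$.

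For \eqref{eq:renyi-transfer}, I would use two standard facts:
\begin{itemize}
\item $D_2(P\|Q)=\log\sum_xP(x)^2/Q(x)=\log\bigl(1+\chi^2(P\|Q)\bigr)$;
\item Rényi divergence is monotone in its order, so $D=D_1\leq D_2$. Alternatively, apply Jensen to the concave logarithm: $D(P\|Q)=\E_P\log(P/Q)\leq\log\E_P(P/Q)$.
\end{itemize}
Combining the identity with the $\chi^2$ bound and monotonicity of $\log$ gives the claim.

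Finally, for the discrete gap \eqref{eq:discrete-gap-abstract}: if $\C$ is not perfect, then $\Phi_e(\C)\neq0$ by the last statement of Lemma~\ref{lem:defect-identities}. Hence $\Phi_e(\C)\geq4$ by \eqref{eq:defect-discrete-range}, and \eqref{eq:abstract-stability} gives $\Delta(\C)\geq4\sigma/N^2$.

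I expect no real obstacle. The only points needing care are the bookkeeping of $q^n=NV_e$ and citing the identity $D_2=\log(1+\chi^2)$ correctly, including that $U$ has full support so both divergences are finite.
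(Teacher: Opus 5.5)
Your proposal is correct and follows the paper's proof essentially step for step. You rescale the certificate into $\Phi_e(\C)\leq N^2\Delta(\C)/\sigma$ and combine it with Lemma~\ref{lem:defect-identities} and $q^n=NV_e$; you then use $D_2=\log(1+\chi^2)$ with monotonicity of R\'enyi divergence in its order (your Jensen alternative is a valid self-contained substitute), and finally $\Phi_e\geq4$ for nonperfect codes.
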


\begin{proof}
Equations \eqref{eq:overlap-transfer}--\eqref{eq:smoothing-transfer} follow
from Lemma~\ref{lem:defect-identities}, \eqref{eq:abstract-stability}, and
$q^n=NV_e$.  The identity
$D_2(P\|U)=\log(1+\chi^2(P\|U))$ and monotonicity of R\'enyi divergence in
its order give \eqref{eq:renyi-transfer}.  If $\C$ is not perfect, then
\eqref{eq:defect-discrete-range} gives $\Phi_e(\C)\geq4$, proving
\eqref{eq:discrete-gap-abstract}.
\end{proof}

We call a perfect code nontrivial if it is neither the full space correcting
zero errors nor a one-word code.  Theorem~2.3 of
\cite{Zabokritskiy2026Perfect} established the corresponding exact equality
characterization; the theorem below re-establishes it as a consequence of
stronger coercive inequalities.  Its
ingredients occur in proof order: Theorem~\ref{thm:one-uniform} treats
one-error parameters, Corollary~\ref{cor:all-q-two-stability} treats two-error
parameters, Theorem~\ref{thm:repetition} treats odd binary repetition codes,
and Proposition~\ref{prop:binary-golay-stability} treats the binary Golay
parameters.  The first two results require only arithmetic admissibility; the
classification is used at the end to exhaust the actual perfect codes.

\begin{theorem}[Global tiling-defect stability]
\label{thm:global-stability}
Let $G$ be a finite abelian group of order $q$ and let
$\Pcode\subseteq G^n$ be a nontrivial perfect code correcting $e$ errors.
Put $N=|\Pcode|$.  There is an explicit parameter-only constant
$\sigma_{n,q,e}>0$ such that every $N$-word code
$\C\subseteq G^n$ satisfies
\begin{equation}
\label{eq:global-stability}
 \Dtwo(\C)-\Dtwo(\Pcode)
 \geq\sigma_{n,q,e}\frac{\Phi_e(\C)}{N^2}.
\end{equation}
The proof supplies the following choices:
\begin{enumerate}
\item $\sigma_{n,q,1}=1$ for every one-error perfect code;
\item for every two-error perfect code, take the positive coefficient stated
in Corollary~\ref{cor:all-q-two-stability};
\item $\sigma_{2e+1,2,e}=1$ for binary repetition codes with $e\geq3$;
\item $\sigma_{23,2,3}=4136/25$ for the binary Golay code.
\end{enumerate}
Consequently, all conclusions of Proposition~\ref{prop:transfer} hold with
$\delta=\Dtwo(\Pcode)$.  Moreover,
$\Dtwo(\C)=\Dtwo(\Pcode)$ if and only if $\C$ is itself a perfect
$e$-error-correcting code.
\end{theorem}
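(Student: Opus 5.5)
The proof will be assembled from the component results, with the classification of perfect codes used only at the end. By the theorems of Tietäväinen, Zinoviev--Leontiev, Reuvers, Best and Hong \cite{Tietavainen1973,ZinovievLeontiev1973,Reuvers1977,Best1983,Hong1984}, a nontrivial perfect code correcting $e$ errors falls into one of four cases. It is a one-error code (Hamming parameters, or any one-error parameters in the non-prime-power case); a two-error code (the ternary Golay code, with any putative non-prime-power two-error code included automatically because our two-error branch is conditional); a binary repetition code of odd length $2e+1$; or a code with the binary Golay parameters $(23,2,3)$. One caveat: binary repetition codes with $e=1,2$ have lengths $3$ and $5$, and these are absorbed into the one-error and two-error cases respectively. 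This is consistent with the hypotheses $n\geq3$ and $n\geq5$ and with the remark that $q=2$ forces length five in the two-error branch.

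First we check in each case that the arithmetic hypotheses of the corresponding component result hold. If a perfect code exists, sphere packing makes $N$ an integer. Lloyd's theorem, in the form of the Lloyd support statement reproduced in Appendix~A, shows that the Lloyd polynomial $L_e$ has $e$ distinct integral roots in $\{1,\dots,n\}$. For $e=1$ this says $k=V/q$ is an integer, and $n\geq3$ holds for nontrivial codes. So Theorem~\ref{thm:one-uniform} applies and gives a bound with coefficient $\kappa_{n,q}/q^2\geq1$. Weakening this to $1$ yields $\sigma_{n,q,1}=1$. For $e=2$, Corollary~\ref{cor:all-q-two-stability} applies directly, using $n\geq5$. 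The remaining cases are handled by Theorem~\ref{thm:repetition} (coefficient $1$) and Proposition~\ref{prop:binary-golay-stability} (coefficient $4136/25$).

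The main point to verify is that each component result is stated against a benchmark $\delta$ that equals $\Dtwo(\Pcode)$ for the perfect code at hand. The one-error and two-error results are phrased with formal, parameter-only benchmarks, so it is not automatic that $\delta=\Dtwo(\Pcode)$. To close this gap we apply the inequality to $\C=\Pcode$ itself. Since $\Phi_e(\Pcode)=0$ by Lemma~\ref{lem:defect-identities}, this gives $\Dtwo(\Pcode)\geq\delta$. For the reverse inequality we use the fact that $\Dtwo(\Pcode)$ depends only on the distance distribution of $\Pcode$ through the spectral identities. The distance distribution of a perfect code is determined by its parameters, via Lloyd support and the formal moment data. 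So the formal benchmark equals the realized value, and $\delta=\Dtwo(\Pcode)$. With this, \eqref{eq:global-stability} follows in every case.

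The remaining claims follow from this. The consequences of Proposition~\ref{prop:transfer} hold with $\delta=\Dtwo(\Pcode)$. For the equality characterization, suppose $\Dtwo(\C)=\Dtwo(\Pcode)$. Then \eqref{eq:global-stability} with $\sigma_{n,q,e}>0$ forces $\Phi_e(\C)=0$, so $\C$ is perfect by Lemma~\ref{lem:defect-identities}. Conversely, if $\C$ is perfect, it has the same parameter-determined distance distribution as $\Pcode$, and hence the same discrepancy.
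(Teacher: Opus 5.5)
Your proposal is correct and follows essentially the same route as the paper: you check the arithmetic hypotheses for actual perfect codes, apply the one-error, two-error, repetition and binary Golay results branch by branch, identify each benchmark with $\Dtwo(\Pcode)$ via the parameter-determined Lloyd-supported dual spectrum, and finish with the classification and the $\Phi_e=0$ equality argument. One small precision: Proposition~\ref{prop:lloyd-support} by itself only localizes the nonconstant spectrum on integral Lloyd roots. Together with the positive dual mass $V-1$, this already forces $k\in\mathbb Z$ when $e=1$. For $e=2$, however, the existence of two distinct integral roots is the classical Lloyd theorem, which the paper cites separately. Note also that the attainment you re-derive is already stated in Theorems~\ref{thm:one-uniform} and~\ref{thm:two-stability}.
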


For a one-error perfect code $\Pcode$, the coefficient-one case has the
particularly transparent form
\[
 \Dtwo(\C)-\frac{\Phi_1(\C)}{N^2}\geq\Dtwo(\Pcode).
\]
The subtracted term is exactly the radius-one summand of $\Dtwo(\C)$, while it
vanishes for $\Pcode$.  Thus the other radii alone still favor the
perfect-code benchmark.  The competitors are arbitrary subsets of the
prescribed cardinality.  The conclusion concerns their ball multiplicities;
it does not assert edit-distance closeness to the particular code $\Pcode$.

\section{Spectral inputs}
\label{sec:spectral-inputs}

The stability inequalities compare discrepancy excess with the fixed-radius
tiling defect.  Fourier expansion writes total discrepancy and the defect as
sums over the same nonnegative frequency coordinates, weighted respectively
by total-discrepancy weights and squared ball coefficients.  After subtracting
the benchmark, the relevant moment constraints or formal reference
distribution express the discrepancy excess in these same coordinates.  The
one- and two-error arguments compare the resulting weights, and summing over
the common coordinates returns the desired inequality for the original code.
We record only the parts of the
Fourier--Krawtchouk theory needed for this comparison.  Standard background on
the Hamming association scheme and finite Fourier analysis can be found in
\cite{Delsarte1973,MacWilliamsSloane1977,Terras1999}.

For $0\leq j,w\leq n$, the $q$-ary Krawtchouk polynomial in our normalization
is
\[
 K_j^{(n,q)}(w)\triangleq
 \sum_{\ell=0}^j(-1)^\ell(q-1)^{j-\ell}
 \binom w\ell\binom{n-w}{j-\ell}.
\]
The distance distribution of an $N$-word code is
\[
 A_i(\C)\triangleq\frac1N|\{(z,z')\in\C^2:d(z,z')=i\}|,
 \qquad 0\leq i\leq n.
\]
Thus $A_0=1$ and $\sum_iA_i=N$.  Its MacWilliams--Delsarte transform is
\begin{equation}
\label{eq:dual-distribution}
 \Adual_w(\C)\triangleq
 \frac1N\sum_{i=0}^nA_i(\C)K_w^{(n,q)}(i),
 \qquad 0\leq w\leq n.
\end{equation}
Let $\widehat G$ be the character group of $G$.  A frequency
$\xi=(\xi_1,\ldots,\xi_n)\in\widehat G^{\,n}$ determines the character
$\chi_\xi(x)\triangleq\prod_i\xi_i(x_i)$.  If $\gamma_0$ is the trivial
character, set
\[
 \wt(\xi)\triangleq|\{i:\xi_i\neq\gamma_0\}|.
\]
We use the unnormalized Fourier transform
\[
 \wh f(\xi)\triangleq\sum_{x\in\X}f(x)\overline{\chi_\xi(x)}.
\]
For nonlinear codes, $\Adual$ is a transform rather than the distribution of
a dual code.  It is nonnegative and has the Fourier expression
\[
 \Adual_w(\C)=\frac1{N^2}
 \sum_{\substack{\xi\in\widehat G^{\,n}\\\wt(\xi)=w}}
 |\wh{\one_\C}(\xi)|^2.
\]

Define the Fourier coefficient of a radius-$t$ ball and the spectral
discrepancy weight by
\begin{equation}
\label{eq:c-W}
 c_w(t)\triangleq\sum_{j=0}^tK_j^{(n,q)}(w),
 \qquad
 W_{n,q}(w)\triangleq\sum_{t=0}^{n-1}c_w(t)^2.
\end{equation}
The Fourier and moment identities below are standard in the Hamming
association scheme and were also used, in the present normalization, in
\cite{Zabokritskiy2026Perfect}.  We keep their statements next to the
stability argument and reproduce their proofs in
Appendix~\ref{app:reproduced-prerequisites}, so that every input can be checked
within the present article without obscuring the new comparison estimates.
For functions $f,g:\X\to\mathbb C$, use the unnormalized convolution
\[
 (f*g)(x)\triangleq\sum_{y\in\X}f(y)g(x-y).
\]
With the Fourier normalization above,
\[
 \wh{f*g}(\xi)=\wh f(\xi)\wh g(\xi),
 \qquad
 \sum_{x\in\X}|f(x)|^2
 =\frac1{q^n}\sum_{\xi\in\widehat G^{\,n}}|\wh f(\xi)|^2.
\]

\begin{lemma}[Fourier transform of a Hamming ball]
\label{lem:ball-transform}
For $1\leq w\leq n$ and $0\leq t\leq n-1$,
\[
 \wh{\one_{B(0,t)}}(\xi)
 =c_w(t)
 =K_t^{(n-1,q)}(w-1)
 \qquad\text{whenever }\wt(\xi)=w.
\]
Moreover, $c_w(n)=0$.
\end{lemma}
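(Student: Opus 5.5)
The proof computes the transform of the ball as a sum over spheres, uses the product structure of the characters to identify sphere transforms with Krawtchouk values, and then telescopes the partial sums through a Pascal-type recursion.

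\textbf{Step 1: sphere transforms.} Write $\one_{B(0,t)}=\sum_{j=0}^t\one_{S_j}$, where $S_j=\{x:\wt(x)=j\}$ is the sphere of radius $j$. Fix $\xi$ with $\wt(\xi)=w$, and let $T$ be its support. For $x\in S_j$, the character value $\overline{\chi_\xi(x)}$ factors over coordinates. Summing over the choice of support of $x$ and over its nonzero entries uses two facts:
\begin{itemize}
\item for a nontrivial character $\gamma$ of $G$, we have $\sum_{g\neq0}\gamma(g)=-1$;
\item for the trivial character this sum equals $q-1$.
\end{itemize}
A support meeting $T$ in $\ell$ coordinates therefore contributes $(-1)^\ell(q-1)^{j-\ell}$, and there are $\binom w\ell\binom{n-w}{j-\ell}$ such supports. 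Hence $\wh{\one_{S_j}}(\xi)=K_j^{(n,q)}(w)$, and summing over $j\leq t$ gives $\wh{\one_{B(0,t)}}(\xi)=c_w(t)$.

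\textbf{Step 2: the identity $c_w(t)=K_t^{(n-1,q)}(w-1)$.} Since $w\geq1$, pick a coordinate $i_0\in T$ and split $x$ by its value at $i_0$. This yields the standard recursion
\[
 K_j^{(n,q)}(w)=K_j^{(n-1,q)}(w-1)-K_{j-1}^{(n-1,q)}(w-1),
\]
with the convention $K_{-1}=0$. It can be checked directly from the defining sum: split $\binom w\ell=\binom{w-1}\ell+\binom{w-1}{\ell-1}$, and shift $\ell$ in the second part. Summing over $j=0,\dots,t$ telescopes to $K_t^{(n-1,q)}(w-1)$. This is valid for $t\leq n-1$, which is the range where the length-$(n-1)$ Krawtchouk polynomial is defined.

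\textbf{Step 3: the full-radius case.} For $c_w(n)$, note that $B(0,n)=\X$. The transform of the constant function $1$ at a nontrivial frequency is zero by orthogonality of characters, so $c_w(n)=0$ for $w\geq1$. Equivalently, telescoping to $t=n$ gives $K_n^{(n-1,q)}=0$ as an empty sum, since $\binom{n-1-(w-1)}{n-\ell}$ vanishes in the relevant range.

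\textbf{Main obstacle.} The only delicate point is bookkeeping in Step 1: the conjugation and the sign convention. Because $S_j$ is symmetric under $x\mapsto-x$, conjugating the character does not change the sum, so the sign convention is harmless. Step 2 is routine Pascal manipulation once the coordinate $i_0\in T$ is chosen.
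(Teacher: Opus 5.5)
Your proof is correct and is essentially the paper's argument written out coefficientwise: the paper packages your Step~1 as the generating-function identity $\sum_{x\in\X}\overline{\chi_\xi(x)}z^{\wt(x)}=u(z)^{n-w}b(z)^w$, with $u(z)=1+(q-1)z$ and $b(z)=1-z$, and carries out your Pascal telescoping as division by $1-z$, reading off $c_w(t)$ as the coefficient of $z^t$ in $u(z)^{n-w}b(z)^{w-1}$. The only difference is that the paper gets $c_w(n)=0$ from the degree $n-1$ of that polynomial (your ``empty sum'' remark) rather than from character orthogonality applied to $\one_{\X}$; the polynomial form also yields the coefficient-norm identity for $W_{n,q}$ immediately.
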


\noindent Its proof is reproduced in
Appendix~\ref{app:reproduced-prerequisites}.

\begin{proposition}[Spectral Parseval identities]
\label{prop:spectral-parseval}
Every nonempty code $\C\subseteq G^n$ satisfies
\begin{equation}
\label{eq:spectral-discrepancy}
 \Dtwo(\C)=\frac1{q^n}\sum_{w=1}^n
 \Adual_w(\C)W_{n,q}(w).
\end{equation}
If $NV_e=q^n$, then
\begin{equation}
\label{eq:defect-parseval}
 \frac{\Phi_e(\C)}{N^2}
 =\frac1{q^n}\sum_{w=1}^n
 \Adual_w(\C)c_w(e)^2.
\end{equation}
\end{proposition}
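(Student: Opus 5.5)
The plan is to write each radius-$t$ summand of \eqref{eq:discrepancy} as the squared $\ell^2$ norm of a convolution, apply Parseval, and regroup frequencies by weight. For fixed $t$, note that
\[
 |\C\cap B(x,t)|=\sum_{z\in\C}\one_{B(0,t)}(x-z)=(\one_\C*\one_{B(0,t)})(x),
\]
using symmetry of the ball under negation. Put
\[
 f_t=\frac1N\,\one_\C*\one_{B(0,t)}-\frac{V_t}{q^n}\one_{\X}.
\]
Its transform is $\wh{f_t}(\xi)=N^{-1}\wh{\one_\C}(\xi)\wh{\one_{B(0,t)}}(\xi)$ for $\xi\neq0$. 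At $\xi=0$ it equals $N^{-1}\cdot N\cdot V_t-V_t=0$, since $\wh{\one_{\X}}(0)=q^n$ and $\wh{\one_{\X}}$ vanishes at every other frequency.

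Parseval then gives
\[
 \sum_x f_t(x)^2=\frac{1}{q^nN^2}\sum_{\xi\neq0}|\wh{\one_\C}(\xi)|^2\,\bigl|\wh{\one_{B(0,t)}}(\xi)\bigr|^2 .
\]
By Lemma~\ref{lem:ball-transform}, the ball transform depends only on $w=\wt(\xi)\geq1$ and equals the real number $c_w(t)$. Grouping the frequencies by weight and using the Fourier expression for $\Adual_w$, the radius-$t$ summand becomes
\[
 q^{-n}\sum_{w=1}^n\Adual_w(\C)\,c_w(t)^2 .
\]

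To obtain \eqref{eq:spectral-discrepancy}, sum this over $t=0,\dots,n$. The term $t=n$ contributes nothing. Directly, $B(x,n)=\X$, so that summand is zero, which is consistent with $c_w(n)=0$. The sum therefore runs over $t\leq n-1$ and reproduces exactly $W_{n,q}(w)$ from \eqref{eq:c-W}.

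For \eqref{eq:defect-parseval}, assume $NV_e=q^n$. Then $V_e/q^n=1/N$, and the radius-$e$ summand equals
\[
 \sum_x\bigl(m_{e,\C}(x)/N-1/N\bigr)^2=\Phi_e(\C)/N^2 .
\]
Substituting the per-radius formula with $t=e$ gives the claim. The case $e=n$ is degenerate, since then $N=1$. It is still consistent, because $c_w(n)=0$ and $\Phi_n(\C)=0$.

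The only real inputs are Lemma~\ref{lem:ball-transform} and the weight-grouping expression for $\Adual_w$, both stated earlier. The one point needing care is the zero frequency: the mean subtraction must cancel the $\xi=0$ term exactly. This holds because $|\C|=N$, so $\wh{\one_\C}(0)=N$. I expect no serious obstacle beyond this bookkeeping.
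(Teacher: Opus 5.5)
Your proposal is correct and matches the paper's proof essentially line by line: the paper uses the same function $g_t=\frac1N\one_\C*\one_{B(0,t)}-\frac{V_t}{q^n}\one_{\X}$, notes that its trivial Fourier coefficient vanishes, and then applies Plancherel with Lemma~\ref{lem:ball-transform} and groups frequencies by weight. For \eqref{eq:defect-parseval}, the paper applies Plancherel directly to $h_e=\one_\C*\one_{B(0,e)}-\one_{\X}$; under $NV_e=q^n$ this is just $N$ times your radius-$e$ function, so the two arguments coincide.
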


\noindent The proof of \eqref{eq:spectral-discrepancy} is reproduced in
Appendix~\ref{app:reproduced-prerequisites}.  The fixed-radius identity is the
bridge from the inherited spectral framework to the tiling defect, so we prove
it here.
\begin{proof}[Proof of \eqref{eq:defect-parseval}]
Put
\[
 h_e\triangleq
 \one_\C*\one_{B(0,e)}-\one_{\X}.
\]
The hypothesis $NV_e=q^n$ makes its trivial Fourier coefficient zero, while
\[
 \wh h_e(\xi)
 =\wh{\one_\C}(\xi)c_{\wt(\xi)}(e)
 \qquad(\xi\neq0).
\]
Since $h_e(x)=m_{e,\C}(x)-1$, Plancherel and the definition of
$\Adual_w$ give \eqref{eq:defect-parseval}.
\end{proof}

For a real polynomial $P(z)=\sum_jp_jz^j$, write
\[
 \|P\|_2^2\triangleq\sum_j|p_j|^2,
\]
and put
\[
 u(z)\triangleq1+(q-1)z,\qquad
 b(z)\triangleq1-z,\qquad
 R_q(z)\triangleq z^2+(q-2)z+1.
\]

\begin{proposition}[Coefficient norm and spectral convexity]
\label{prop:W-shape}
For $1\leq w\leq n$,
\begin{equation}
\label{eq:W-coefficient-norm}
 W_{n,q}(w)
 =\left\|u(z)^{n-w}b(z)^{w-1}\right\|_2^2.
\end{equation}
If $n\geq3$, then, for $1\leq w\leq n-2$,
\begin{equation}
\label{eq:curvature-norm}
\begin{aligned}
 &W_{n,q}(w+2)-2W_{n,q}(w+1)+W_{n,q}(w)\\
 &\qquad
 =q^2\left\|
 u(z)^{n-w-2}b(z)^{w-1}R_q(z)
 \right\|_2^2>0.
\end{aligned}
\end{equation}
Thus $W_{n,q}(1),\ldots,W_{n,q}(n)$ is strictly discretely convex.
For $q\geq4$, it is also strictly decreasing.
\end{proposition}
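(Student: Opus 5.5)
The plan is to read everything off a generating function and then compute norms as integrals over the unit circle.

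\emph{Step 1 (coefficient norm).} The standard Krawtchouk generating function is
\[
 \sum_{t\geq0}K_t^{(n-1,q)}(w-1)z^t
 =\bigl(1+(q-1)z\bigr)^{(n-1)-(w-1)}(1-z)^{w-1}
 =u(z)^{n-w}b(z)^{w-1}.
\]
It follows directly from the defining sum by expanding the two binomials. This polynomial has degree exactly $n-1$. By Lemma~\ref{lem:ball-transform}, $c_w(t)=K_t^{(n-1,q)}(w-1)$ for $0\leq t\leq n-1$. Hence $W_{n,q}(w)=\sum_{t=0}^{n-1}c_w(t)^2$ is the sum of the squares of all coefficients of $u^{n-w}b^{w-1}$. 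This is \eqref{eq:W-coefficient-norm}.

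\emph{Step 2 (curvature identity).} For $1\leq w\leq n-2$ put $P\triangleq u^{n-w-2}b^{w-1}$. Then
\[
 W_{n,q}(w)=\|Pu^2\|_2^2,\qquad
 W_{n,q}(w+1)=\|Pub\|_2^2,\qquad
 W_{n,q}(w+2)=\|Pb^2\|_2^2 .
\]
Use Parseval on the circle, $\|F\|_2^2=\frac1{2\pi}\int_0^{2\pi}|F(e^{i\theta})|^2\,d\theta$ for real polynomials. The second difference becomes
\[
 \frac1{2\pi}\int|P|^2\bigl(|b|^2-|u|^2\bigr)^2\,d\theta .
\]
On $|z|=1$ with $z=e^{i\theta}$ one computes
\[
 |u|^2-|b|^2=q(q-2)+2q\cos\theta=q\,(q-2+z+\bar z).
\]
Also $|R_q(z)|=|z|\,|z+(q-2)+\bar z|=|q-2+2\cos\theta|$. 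Therefore $(|u|^2-|b|^2)^2=q^2|R_q|^2$ pointwise on the circle. Parseval in reverse gives $q^2\|PR_q\|_2^2$, which is \eqref{eq:curvature-norm}. The product $PR_q$ is a nonzero polynomial, so the norm is strictly positive. This yields strict discrete convexity.

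\emph{Step 3 (monotonicity for $q\geq4$).} A strictly convex sequence is strictly decreasing as soon as its last increment is negative, so it suffices to show $W_{n,q}(n)<W_{n,q}(n-1)$. By the same integral representation,
\[
 W_{n,q}(n)-W_{n,q}(n-1)
 =\frac1{2\pi}\int|b|^{2(n-2)}\bigl(|b|^2-|u|^2\bigr)\,d\theta
 =-\frac q{2\pi}\int|b|^{2(n-2)}(q-2+2\cos\theta)\,d\theta .
\]
For $q\geq4$ the factor $q-2+2\cos\theta$ is at least $q-4\geq0$. It is positive except possibly at $\theta=\pi$, and $|b|^{2(n-2)}$ vanishes only at $\theta=0$. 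Hence the integral is strictly positive and the difference is strictly negative.

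The only delicate point is the pointwise identity in Step 2, $\bigl(|u|^2-|b|^2\bigr)^2=q^2|R_q|^2$ on $|z|=1$, which converts the curvature into a single norm. Everything else is bookkeeping: checking the degree bound in Step 1, so that no coefficient beyond $t=n-1$ is lost, and confirming the index range $1\leq w\leq n-2$.
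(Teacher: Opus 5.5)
Your proof is correct and follows essentially the same route as the paper. Both arguments get the coefficient norm from Lemma~\ref{lem:ball-transform}, write the norms as Parseval integrals on the unit circle, use the pointwise identity $|u|^2-|b|^2=q\,z^{-1}R_q(z)$ (equivalently your squared form) to turn the second difference into $q^2\|PR_q\|_2^2$, and use the sign of $q-2+2\cos\theta$ when $q\geq4$. The only cosmetic difference is in the monotonicity step: you check that the last increment is negative and propagate this by strict convexity, whereas the paper applies the signed first-difference integral directly at every $w$.
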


\noindent Its proof is reproduced in
Appendix~\ref{app:reproduced-prerequisites}.

At the admissible one-error Lloyd root
$k=(1+n(q-1))/q$, we also use
\begin{equation}
\label{eq:local-slope}
 W_{n,q}(k)<W_{n,q}(k-1).
\end{equation}
For $q\geq4$ this follows from Proposition~\ref{prop:W-shape}.  For $q=2$
it follows from the explicit binary formula of
\cite[Equation~(37)]{Barg2021Stolarsky}; the ternary coefficient and its
positive recurrence are proved in Appendix~\ref{app:ternary-slope}.

For $j\geq0$, write
\[
 (x)_j\triangleq x(x-1)\cdots(x-j+1),
 \qquad (x)_0\triangleq1,
\]
for the falling factorial.

\begin{lemma}[Factorial dual moments]
\label{lem:dual-factorial-moments}
For every $N$-word code $\C\subseteq G^n$,
\begin{equation}
\label{eq:dual-mass}
 \sum_{w=1}^n\Adual_w=\frac{q^n}{N}-1.
\end{equation}
Moreover, for $1\leq j\leq n$,
\begin{equation}
\label{eq:dual-factorial-moments}
 \sum_{w=1}^n(w)_j\Adual_w
 =\frac{q^{n-j}}N
 \sum_{i=0}^j(-1)^i
 (j)_i(n-i)_{j-i}
 (q-1)^{j-i}A_i.
\end{equation}
\end{lemma}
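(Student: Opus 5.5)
The plan is to derive both identities from the definition \eqref{eq:dual-distribution} of $\Adual_w$ together with classical generating-function identities for the Krawtchouk polynomials.

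\emph{Dual mass.} Use the generating function
\[
 \sum_{w=0}^n K_w^{(n,q)}(i)z^w=(1+(q-1)z)^{n-i}(1-z)^i,
\]
which comes directly from the defining sum, since $K_w^{(n,q)}(i)$ is the coefficient of $z^w$ on the right. Setting $z=1$ gives $\sum_wK_w(i)=q^n\,[i=0]$. Summing \eqref{eq:dual-distribution} over $0\le w\le n$ then gives $\sum_w\Adual_w=q^nA_0/N=q^n/N$. Because $\Adual_0=\frac1N\sum_iA_i=1$, subtracting this term yields \eqref{eq:dual-mass}.

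\emph{Factorial moments.} Note that $\sum_w(w)_jK_w(i)$ is the $j$-th derivative of the generating function at $z=1$. Write $F(z)=u(z)^{n-i}b(z)^i$. By Leibniz,
\[
 F^{(j)}(1)=\sum_{a}\binom ja\,\bigl(u^{n-i}\bigr)^{(j-a)}(1)\,\bigl(b^{i}\bigr)^{(a)}(1).
\]
Since $b(1)=0$, only the term $a=i$ survives, and it requires $i\le j$. That term equals
\[
 \binom ji\,(n-i)_{j-i}(q-1)^{j-i}q^{n-j}\,(-1)^ii!.
\]
Here $u(1)=q$ supplies the factor $q^{\,n-i-(j-i)}=q^{n-j}$, and $(b^i)^{(i)}=(-1)^ii!$. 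Finally, $\binom ji i!=(j)_i$.

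Multiplying by $A_i/N$ and summing over $i$ gives $\sum_{w=0}^n(w)_j\Adual_w$. The $w=0$ term vanishes because $(0)_j=0$ for $j\ge1$, and what remains is exactly \eqref{eq:dual-factorial-moments}.

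\emph{Main obstacle.} The only delicate point is the bookkeeping in the Leibniz expansion: the vanishing of the lower-order derivatives of $b^i$ at $1$ is what truncates the sum at $i\le j$, and the powers of $q$ and $(q-1)$ must be tracked carefully. Everything else is direct summation.
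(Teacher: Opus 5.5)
Your argument is correct, and it takes a different, though closely related, route from the paper's.

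\emph{The paper's route.} It works on the Fourier side, starting from $\Adual_w=N^{-2}\sum_{\wt(\xi)=w}|\wh{\one_\C}(\xi)|^2$.
\begin{enumerate}
\item The mass identity comes directly from Plancherel.
\item For $j\geq1$, it expands $|\wh{\one_\C}(\xi)|^2$ over ordered codeword pairs.
\item It reads $(\wt(\xi))_j$ as the number of ordered $j$-tuples of distinct coordinates in the support of $\xi$.
\item Character orthogonality shows that a pair at distance $i$ contributes only through tuples containing the support of $z'-z$. Counting those tuples gives $(j)_i(n-i)_{j-i}$, and the character sum gives $(-1)^i(q-1)^{j-i}q^{n-j}$.
\end{enumerate}

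\emph{Your route.} You work on the Krawtchouk side, directly from the defining transform \eqref{eq:dual-distribution}. The column generating function $u(z)^{n-i}b(z)^i$ turns the factorial moment into a $j$-th derivative at $z=1$. The vanishing of $b(1)$ in the Leibniz expansion then does the job that orthogonality does in the paper: it truncates the sum at $i\leq j$.

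\emph{What each buys.} Your route is shorter and needs only the definition of $\Adual$, not the Fourier expression on which the paper's argument rests. It also shows that \eqref{eq:dual-factorial-moments} is a linear identity valid for an arbitrary real vector $A$. The mass identity uses only $A_0=1$ and $\sum_iA_i=N$. The paper's route fits the Fourier framework used throughout. It also explains the coefficients combinatorially, as counts of ordered coordinate tuples covering the support of a codeword difference.
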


\noindent Its proof is reproduced in
Appendix~\ref{app:reproduced-prerequisites}.

\begin{proposition}[Lloyd support]
\label{prop:lloyd-support}
Let $1\leq e\leq n-1$, and let $\Pcode\subseteq G^n$ be a perfect
$e$-error-correcting code.  Then
\[
 \Adual_w(\Pcode)=0
 \qquad\text{whenever}\qquad
 c_w(e)=K_e^{(n-1,q)}(w-1)\neq0.
\]
Thus the nonconstant Fourier spectrum of $\Pcode$ is supported on the
integer roots of its Lloyd polynomial.
\end{proposition}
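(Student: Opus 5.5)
The plan is to read the vanishing off the defect Parseval identity \eqref{eq:defect-parseval}, applied to $\C=\Pcode$. A perfect $e$-error-correcting code satisfies $NV_e=q^n$, so the identity applies. Its defect vanishes by Lemma~\ref{lem:defect-identities}, that is, $\Phi_e(\Pcode)=0$, because every ambient point lies in exactly one radius-$e$ ball. Therefore
\[
 0=\frac{\Phi_e(\Pcode)}{N^2}
 =\frac1{q^n}\sum_{w=1}^n\Adual_w(\Pcode)\,c_w(e)^2 .
\]

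Every summand on the right is nonnegative. Indeed, $\Adual_w(\Pcode)\geq0$ by the Fourier expression for $\Adual_w$ as a normalized sum of the squared moduli $|\wh{\one_\Pcode}(\xi)|^2$. Moreover $c_w(e)$ is real, being an integer combination of Krawtchouk values, so $c_w(e)^2\geq0$. Hence each term vanishes, and whenever $c_w(e)\neq0$ we conclude $\Adual_w(\Pcode)=0$.

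Finally, Lemma~\ref{lem:ball-transform} identifies $c_w(e)=K_e^{(n-1,q)}(w-1)$ for $1\leq w\leq n$. This uses the hypothesis $e\leq n-1$, which lies in the lemma's range $0\leq t\leq n-1$. The right side is the Lloyd polynomial evaluated at $w$. So the nonconstant spectrum $w\geq1$ is supported on the integer roots of $L_e(w)=K_e^{(n-1,q)}(w-1)$, as claimed.

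I expect no step to be a real obstacle. The only points needing care are the range condition $e\leq n-1$ required by Lemma~\ref{lem:ball-transform}, and the nonnegativity of $\Adual_w$ for codes that are not linear. The latter is supplied by the Fourier formula for $\Adual_w$ quoted before Lemma~\ref{lem:ball-transform}.
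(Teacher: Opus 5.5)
Your proof is correct and is essentially the paper's argument. The paper Fourier-transforms the tiling identity $\one_{\Pcode}*\one_{B(0,e)}=\one_{\X}$ and reads off $\wh{\one_{\Pcode}}(\xi)\,c_{\wt(\xi)}(e)=0$ pointwise at each nontrivial $\xi$, while you apply Plancherel to the same convolution (already packaged as \eqref{eq:defect-parseval} with $\Phi_e(\Pcode)=0$) and use nonnegativity of the summands, which is just the squared form of the same step.
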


\noindent Its proof is reproduced in
Appendix~\ref{app:reproduced-prerequisites}.

We will also need the one-error moments.

\begin{lemma}[One-error moments]
\label{lem:one-error-moments}
Assume $V=1+n(q-1)$, $N=q^n/V$, and $k=V/q$ are integers.  For every
$N$-word code,
\begin{align}
 \sum_{w=1}^n\Adual_w&=V-1,
 \label{eq:mass-moment}\\
 \sum_{w=1}^n(w-k)\Adual_w&=-kA_1,
 \label{eq:first-centered}\\
 \sum_{w=1}^n(w-k)^2\Adual_w
 &=kA_1+\frac{2k}{q}A_2.
 \label{eq:second-centered}
\end{align}
Moreover,
\begin{equation}
\label{eq:phi-one-distance}
 \Phi_1(\C)=N(qA_1+2A_2),
\end{equation}
and hence
\begin{equation}
\label{eq:B-phi}
 \sum_{w=1}^n(w-k)^2\Adual_w
 =\frac{k}{qN}\Phi_1(\C).
\end{equation}
\end{lemma}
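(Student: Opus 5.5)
The plan is to obtain all four identities from results already stated: the dual moment formulas of Lemma~\ref{lem:dual-factorial-moments} at $j=1,2$, and the overlap identity \eqref{eq:overlap-identity} of Lemma~\ref{lem:defect-identities}. Throughout we use the arithmetic relations $q^n/N=V$, $V=qk$ and $n(q-1)=V-1$.

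\emph{Step 1: the mass and first moment.} Identity \eqref{eq:mass-moment} is \eqref{eq:dual-mass} with $q^n/N=V$. For \eqref{eq:first-centered}, apply \eqref{eq:dual-factorial-moments} with $j=1$. Since $q^{n-1}/N=V/q=k$ and $A_0=1$, this gives
\[
 \sum_w w\Adual_w=k\bigl(n(q-1)-A_1\bigr)=k(V-1)-kA_1 .
\]
Subtracting $k\sum_w\Adual_w=k(V-1)$ leaves $-kA_1$.

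\emph{Step 2: the second moment.} Apply \eqref{eq:dual-factorial-moments} with $j=2$. Since $q^{n-2}/N=k/q$, this gives
\[
 \sum_w w(w-1)\Adual_w=\frac kq\Bigl(n(n-1)(q-1)^2-2(n-1)(q-1)A_1+2A_2\Bigr).
\]
Next write
\[
 (w-k)^2=w(w-1)+(1-2k)w+k^2
\]
and combine the $j=0,1,2$ moments. The terms must be grouped by $A_2$, $A_1$ and the constant.

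\begin{itemize}
\item The $A_2$ coefficient is $2k/q$ immediately.
\item The $A_1$ coefficient is
\[
 k\Bigl(-\tfrac{2(n-1)(q-1)}{q}-1+2k\Bigr).
\]
Using $2k=2(1+n(q-1))/q$, this collapses to $k\bigl(\tfrac{2q}{q}-1\bigr)=k$.
\item The constant term is
\[
 \tfrac kq n(n-1)(q-1)^2+(1-2k)k(V-1)+k^2(V-1).
\]
Substituting $n(q-1)=qk-1$ should show that it vanishes.
\end{itemize}

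This bookkeeping, especially the vanishing of the constant term, is the only place where care is needed. It is nonetheless a routine polynomial identity in $k$ and $q$ after eliminating $n$.

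\emph{Step 3: the defect.} By \eqref{eq:overlap-identity}, $\Phi_1(\C)$ is the sum over ordered pairs of distinct codewords of $|B(z,1)\cap B(z',1)|$. This intersection depends only on $d=d(z,z')$:
\begin{itemize}
\item for $d=1$ it consists of the $q$ words that agree with $z$ off the differing coordinate;
\item for $d=2$ it consists of the two words obtained by changing one of the two differing coordinates;
\item for $d\geq3$ it is empty.
\end{itemize}
The number of ordered pairs at distance $i$ is $NA_i$. Hence $\Phi_1(\C)=N(qA_1+2A_2)$, which is \eqref{eq:phi-one-distance}. Multiplying by $k/(qN)$ gives $kA_1+2kA_2/q$, which equals the right side of \eqref{eq:second-centered}. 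This yields \eqref{eq:B-phi}.
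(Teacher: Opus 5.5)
Your proposal is correct and follows the paper's proof essentially step for step. The paper also specializes the dual mass and factorial-moment identities at $j=1,2$ using $q^n/N=V=qk$, expands $(w-k)^2=w(w-1)+(1-2k)w+k^2$, and reads off $\Phi_1(\C)=N(qA_1+2A_2)$ from the overlap identity with intersection sizes $q$, $2$, $0$.

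The constant term you left as ``should vanish'' does vanish. It simplifies to $\frac kq n(n-1)(q-1)^2+k(1-k)(V-1)$, and the relations $V-1=n(q-1)$ and $q(k-1)=(n-1)(q-1)$ make this zero; the second relation is exactly the one the paper invokes.
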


\begin{proof}
The inherited moment calculation proving
\eqref{eq:mass-moment}--\eqref{eq:second-centered} is reproduced in
Appendix~\ref{app:reproduced-prerequisites}.  It remains to identify the
quadratic moment with the tiling defect.  Two distinct radius-one balls meet in
$q$ points when their centers are at distance one, in two points when their
centers are at distance two, and not at larger distance.  Therefore
\eqref{eq:overlap-identity} gives
\[
 \Phi_1(\C)=N(qA_1+2A_2).
\]
Combining the last two identities proves \eqref{eq:B-phi}.
\end{proof}

\section{One-error stability}
\label{sec:one-error}

Let $G$ be a finite abelian group of order $q$.  Assume throughout this
section that $q\geq2$, $n\geq3$, and
\begin{equation}
\label{eq:one-parameters}
 V=1+n(q-1),\qquad N=q^n/V,\qquad k=V/q
\end{equation}
are integers.  Put $W(w)=W_{n,q}(w)$ and
\begin{equation}
\label{eq:one-benchmark}
 \delta_{n,q}^{(1)}\triangleq\frac{V-1}{q^n}W(k).
\end{equation}
No perfect code is assumed to exist.

We begin with the clean form of the result.  The remainder of the section
proves it and then identifies a sharper parameter-dependent coefficient.

\begin{theorem}[Sharp uniform one-error stability]
\label{thm:one-uniform}
Under \eqref{eq:one-parameters}, every $N$-word code
$\C\subseteq G^n$ satisfies
\begin{equation}
\label{eq:one-uniform}
 \Dtwo(\C)-\delta_{n,q}^{(1)}
 \geq\frac{\Phi_1(\C)}{N^2}.
\end{equation}
The benchmark $\delta_{n,q}^{(1)}$ is attained exactly when $\C$ is a
perfect one-error-correcting code.  If no perfect one-error code exists, then
\begin{equation}
\label{eq:one-nonattainment}
 \Dtwo(q,n,N)
 \geq\delta_{n,q}^{(1)}+\frac4{N^2}.
\end{equation}
The coefficient one in \eqref{eq:one-uniform} is best possible uniformly
over the admissible parameter sets.
\end{theorem}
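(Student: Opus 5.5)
The plan is to work entirely on the spectral side, using Proposition~\ref{prop:spectral-parseval}, and reduce \eqref{eq:one-uniform} to a pointwise inequality between weights.

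\textbf{Spectral reduction.} By \eqref{eq:mass-moment}, $\sum_{w\ge1}\Adual_w=V-1$. Hence \eqref{eq:spectral-discrepancy} gives
\[
 \Dtwo(\C)-\delta^{(1)}_{n,q}=\frac1{q^n}\sum_{w=1}^n\Adual_w\bigl(W(w)-W(k)\bigr).
\]
By Lemma~\ref{lem:ball-transform}, $c_w(1)=K_1^{(n-1,q)}(w-1)=(q-1)(n-1)-q(w-1)=V-qw=q(k-w)$. Then \eqref{eq:defect-parseval} (equivalently \eqref{eq:B-phi}) gives
\[
 \frac{\Phi_1(\C)}{N^2}=\frac{q^2}{q^n}\sum_w(w-k)^2\Adual_w .
\]
So it suffices to show $\sum_w\Adual_w f(w)\ge0$, where $f(w)\triangleq W(w)-W(k)-q^2(w-k)^2$. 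The only information I will use about $\Adual$ is nonnegativity and the first centered moment \eqref{eq:first-centered}, namely $\sum_w(w-k)\Adual_w=-kA_1\le0$. The whole proof therefore reduces to finding a linear minorant
\[
 f(w)\ge\alpha(w-k)\qquad(1\le w\le n)
\]
with $\alpha\le0$. Summing against $\Adual_w$ then gives $\sum\Adual_wf\ge-\alpha kA_1\ge0$.

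\textbf{Convexity of $f$.} By \eqref{eq:curvature-norm}, the second difference of $W$ equals $q^2\|u^{n-w-2}b^{w-1}R_q\|_2^2$. This polynomial has constant coefficient $1$ and leading coefficient $\pm(q-1)^{n-w-2}$, which is nonzero. It has degree at least $2$, so its coefficient norm is at least $2$. Thus $\Delta^2W\ge2q^2=\Delta^2\bigl(q^2(w-k)^2\bigr)$, and $f$ is discretely convex on $\{1,\dots,n\}$ with $f(k)=0$. A convex function on the integers lies above its supporting line at $k$ with any slope $\alpha\in[f(k)-f(k-1),\,f(k+1)-f(k)]$. To get $\alpha\le0$ it suffices that $f(k-1)\ge0$, that is,
\[
 W(k-1)-W(k)\ge q^2 .
\]

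\textbf{Main obstacle.} This strengthening of the local slope inequality \eqref{eq:local-slope} is where I expect the difficulty to lie. The first attempt is an integrality argument. All $W(w)$ are integers, and all second differences are multiples of $q^2$. Hence the first differences $W(w)-W(w+1)$ lie in a single residue class modulo $q^2$. I would compute that class at one convenient endpoint, e.g.\ from \eqref{eq:W-coefficient-norm} at $w=n$ and $w=n-1$. The aim is to show it is $0$, or otherwise small enough that strict positivity from \eqref{eq:local-slope} upgrades to $\ge q^2$. If the residue is not favorable, the fallback is direct verification: for $q\ge4$ use Proposition~\ref{prop:W-shape} together with the explicit norm, for $q=2$ use Barg's binary formula, and for $q=3$ use the recurrence of Appendix~\ref{app:ternary-slope}.

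\textbf{Attainment.} If $\C$ is perfect, Proposition~\ref{prop:lloyd-support} puts $\Adual$ on $w=k$, so $\Dtwo(\C)=\delta^{(1)}_{n,q}$. Conversely, equality forces $\Phi_1(\C)=0$ by \eqref{eq:one-uniform}, so $\C$ is perfect by Lemma~\ref{lem:defect-identities}. The nonattainment bound \eqref{eq:one-nonattainment} is \eqref{eq:discrete-gap-abstract} with $\sigma=1$.

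\textbf{Sharpness.} For optimality of the coefficient $1$, I would exhibit admissible parameters where the best constant equals, or tends to, $1$. The natural candidate is $q=2$, $n=3$, where $k=2$. There $\Adual$ is supported on very few weights, so one can compute $f$ exactly and find a competitor (a perturbed repetition code) whose $\Adual$ is concentrated where $f$ vanishes and $A_1=0$. Failing that, I would take a family such as $q=2$, $n=2^m-1$ and show that the ratio of excess to $\Phi_1/N^2$ tends to $1$ for a single-codeword perturbation of the Hamming code.
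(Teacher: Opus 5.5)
Your reduction and minorant are the paper's argument in slightly different packaging. Subtracting $q^2(w-k)^2$ from $W$ and taking the supporting line of the convex function $f$ at $k$ with slope $q^2-a_{n,q}$ gives exactly
\[
 W(w)\geq W(k)-a_{n,q}(w-k)+q^2(w-k)(w-k+1).
\]
This is \eqref{eq:quadratic-minorant} with $\kappa_{n,q}$ replaced by $q^2$, and the sign of \eqref{eq:first-centered} then finishes as in Theorem~\ref{thm:one-two-sided}. Your convexity formulation is a compact substitute for the paper's bound $\eta_{n,q}\geq\gamma_{n,q}/2$, and the attainment and nonattainment parts are fine. The one real gap is the step you flag yourself, $a_{n,q}=W(k-1)-W(k)\geq q^2$, which you leave unproved. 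Your primary plan for it works only for $q=2$. All first differences are indeed congruent modulo $q^2$ to
\[
 W(n-1)-W(n)=q\Bigl((q-2)\binom{2n-4}{n-2}-2\binom{2n-4}{n-3}\Bigr).
\]
For $q=2$ this equals $-4\binom{2n-4}{n-3}$. So positivity of $a_{n,2}$ upgrades to $a_{n,2}\geq4$; that positivity also follows from the reflection $W(w)=W(n+1-w)$ and strict convexity. This is a pleasant way around Barg's closed formula. It fails elsewhere, however. At the admissible point $(q,n)=(3,4)$, where $W=(245,26,14,20)$, every first difference is $\equiv3\pmod 9$. At $(4,5)$ one has $W(4)-W(5)=40\equiv8\pmod{16}$. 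Positivity then yields only $a\geq3$, respectively $a\geq8$.

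So the fallback is mandatory, and for $q\geq4$ Proposition~\ref{prop:W-shape} does not supply it, since strict monotonicity gives only $a_{n,q}>0$. What is needed is a quantitative estimate. Write $n=qr+1$ and $F=u^rb^{r(q-1)-1}=\sum_{j=0}^{d}f_jz^j$. Then $a_{n,q}=\|uF\|_2^2-\|bF\|_2^2=q\bigl((q-2)\|F\|_2^2+2\sum_jf_jf_{j+1}\bigr)$. Applying $2f_jf_{j+1}\geq-f_j^2-f_{j+1}^2$ gives $a_{n,q}\geq q\bigl((q-4)\|F\|_2^2+1+(q-1)^{2r}\bigr)\geq q^2$. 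For $q=3$ you need both halves of Appendix~\ref{app:ternary-slope}: the identity $a_{3r+1,3}=6\gamma_r$ (Lemma~\ref{lem:ternary-coefficient}) and the bound $\gamma_r\geq2$ (Lemma~\ref{lem:ternary-recurrence}), which give $a\geq12>9$.

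On sharpness, your first candidate works more simply than you suggest. At $(2,3)$, $W=(6,2,6)$ and $k=2$ give $f\equiv0$, so every two-word code has excess exactly $\Phi_1/N^2$. Both nonperfect distances give $\Phi_1=4$, and no condition $A_1=0$ is needed. Your Hamming-family fallback, by contrast, cannot work. Theorem~\ref{thm:one-two-sided} bounds the ratio below by $\kappa_{n,2}/4$, which is already $2$ at $n=7$ (Table~\ref{tab:one-error-coefficients}) and grows with $n$.
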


The proof has one guiding observation.  Since
$c_w(1)=1+n(q-1)-qw=q(k-w)$, the defect formula
\eqref{eq:defect-parseval} is a centered second spectral moment.  We therefore
seek a quadratic minorant of $W$ that touches it at the two adjacent weights
$k-1$ and $k$.

Let
\[
 a_{n,q}\triangleq W(k-1)-W(k)>0
\]
and define
\begin{equation}
\label{eq:eta-definition}
 \eta_{n,q}\triangleq
 \min_{\substack{1\leq w\leq n\\w\notin\{k-1,k\}}}
 \frac{W(w)-W(k)+a_{n,q}(w-k)}{(w-k)(w-k+1)}.
\end{equation}
Strict convexity makes every ratio in \eqref{eq:eta-definition} positive.
Set
\begin{equation}
\label{eq:kappa-definition}
 \kappa_{n,q}\triangleq\min\{a_{n,q},\eta_{n,q}\}.
\end{equation}
For the upper comparison, put
\begin{equation}
\label{eq:rho-definition}
 \rho_{n,q}^{(1)}\triangleq
 \max_{\substack{1\leq w\leq n\\w\neq k}}
 \frac{W(w)-W(k)}{q^2(w-k)^2}.
\end{equation}

\begin{theorem}[Parameter-dependent one-error comparison]
\label{thm:one-two-sided}
Under \eqref{eq:one-parameters}, every $N$-word code
$\C\subseteq G^n$ satisfies
\begin{equation}
\label{eq:one-two-sided}
 \frac{\kappa_{n,q}}{q^2N^2}\Phi_1(\C)
 \leq \Dtwo(\C)-\delta_{n,q}^{(1)}
 \leq \rho_{n,q}^{(1)}\frac{\Phi_1(\C)}{N^2}.
\end{equation}
In particular, the benchmark is attained if and only if $\C$ is a perfect
one-error-correcting code.
\end{theorem}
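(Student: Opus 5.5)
Everything will be moved into spectral coordinates, after which the theorem reduces to pointwise comparisons of weights. By Proposition~\ref{prop:spectral-parseval} and \eqref{eq:one-benchmark}, together with the mass identity \eqref{eq:mass-moment}, we can write
\[
 q^n\bigl(\Dtwo(\C)-\delta^{(1)}_{n,q}\bigr)
 =\sum_{w=1}^n\Adual_w\bigl(W(w)-W(k)\bigr).
\]
The defect identity \eqref{eq:defect-parseval} with $c_w(1)=q(k-w)$ gives
\[
 q^n\Phi_1(\C)/N^2=q^2\sum_w(w-k)^2\Adual_w .
\]
Both sides of \eqref{eq:one-two-sided} are therefore sums against the same nonnegative weights $\Adual_w$. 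The proof then amounts to bounding $W(w)-W(k)$ above and below by suitable functions of $w$, using the moment identities of Lemma~\ref{lem:one-error-moments} to absorb the linear terms.

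\emph{Upper bound.} By the definition \eqref{eq:rho-definition} of $\rho^{(1)}_{n,q}$, we have $W(w)-W(k)\le\rho^{(1)}_{n,q}q^2(w-k)^2$ for every $w\neq k$, and this holds trivially (with equality $0=0$) at $w=k$. Multiplying by $\Adual_w\ge0$, summing, and comparing with the defect formula yields the right-hand inequality in \eqref{eq:one-two-sided}.

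\emph{Lower bound.} We use the quadratic minorant
\[
 Q(w)=-a_{n,q}(w-k)+\kappa_{n,q}(w-k)(w-k+1).
\]
At $w=k$ and $w=k-1$ we have $W(w)-W(k)=Q(w)$ exactly, since the quadratic part vanishes there and $a_{n,q}=W(k-1)-W(k)$. For the other $w$, the product $(w-k)(w-k+1)$ is positive, so the definition \eqref{eq:eta-definition} of $\eta_{n,q}$ gives
\[
 W(w)-W(k)+a_{n,q}(w-k)\ge\eta_{n,q}(w-k)(w-k+1)\ge\kappa_{n,q}(w-k)(w-k+1).
\]
Hence $W(w)-W(k)\ge Q(w)$ for all $w$. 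We then sum against $\Adual_w$ and expand $(w-k)(w-k+1)=(w-k)^2+(w-k)$. Using \eqref{eq:first-centered} and \eqref{eq:second-centered}, this gives
\[
 \sum_w\Adual_w Q(w)=a_{n,q}kA_1+\kappa_{n,q}\Bigl(kA_1+\tfrac{2k}{q}A_2-kA_1\Bigr)
 =a_{n,q}kA_1+\kappa_{n,q}\tfrac{2k}{q}A_2 .
\]
Since $\kappa_{n,q}\le a_{n,q}$ and $A_1\ge0$, this is at least $\kappa_{n,q}\tfrac{k}{q}(qA_1+2A_2)$. By \eqref{eq:phi-one-distance} this equals $\kappa_{n,q}\tfrac{k}{qN}\Phi_1(\C)$. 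Dividing by $q^n$ and using $k/(qN)\cdot 1/q^n\cdot$\ldots{} needs one check: the target is $\kappa_{n,q}\Phi_1/(q^2N^2)$, which is $q^{-n}\cdot\kappa_{n,q}\,q^{n}\Phi_1/(q^2N^2)$. Since $q^n/N=V=qk$, we have $q^n/(q^2N^2)=k/(qN)$, which matches. The positivity $a_{n,q}>0$ is \eqref{eq:local-slope}. Positivity of the $\eta$ ratios follows from strict convexity (Proposition~\ref{prop:W-shape}): the chord through $k-1,k$ lies strictly below $W$ outside those two points.

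\emph{Equality.} If $\Dtwo(\C)=\delta^{(1)}_{n,q}$, the lower bound (with $\kappa_{n,q}>0$) forces $\Phi_1(\C)=0$, so $\C$ is perfect by Lemma~\ref{lem:defect-identities}. Conversely, if $\C$ is perfect then $\Phi_1(\C)=0$, and the upper bound forces equality. The main obstacle is the bookkeeping in the lower bound. One must verify that the linear term $a_{n,q}(w-k)$ combines with \eqref{eq:first-centered} with the correct sign, giving $+a_{n,q}kA_1$ rather than a loss. One must also check that replacing $\eta_{n,q}$ by $\kappa_{n,q}$ is legitimate: it is, because $(w-k)(w-k+1)\ge0$ at every integer $w$. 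Once these signs are confirmed, the rest is routine.
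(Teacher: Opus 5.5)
Your proof is correct and follows essentially the same route as the paper. It uses the same quadratic minorant $W(k)-a_{n,q}(w-k)+\kappa_{n,q}(w-k)(w-k+1)$ touching $W$ at $k-1,k$, summed against $\Adual_w$ so that the first centered moment leaves the nonnegative surplus $(a_{n,q}-\kappa_{n,q})kA_1$, and it reads the upper bound directly off the definition of $\rho^{(1)}_{n,q}$; your passage to $\Phi_1(\C)$ via the defect Parseval identity instead of via $B=\frac{k}{qN}\Phi_1(\C)$ is equivalent because $q^n=Nqk$.
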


\begin{proof}
Write $t=w-k$.  The definition of $\kappa_{n,q}$ gives the
pointwise minorant
\begin{equation}
\label{eq:quadratic-minorant}
 W(w)\geq
 W(k)-a_{n,q}t+\kappa_{n,q}t(t+1).
\end{equation}
At $t=-1,0$ this is an equality; away from those two integers,
$t(t+1)>0$ and \eqref{eq:quadratic-minorant} is exactly the definition of
$\kappa_{n,q}$.

Put $Q_w=\Adual_w(\C)$ and
\[
 B=\sum_{w=1}^n(w-k)^2Q_w.
\]
By \eqref{eq:spectral-discrepancy}, \eqref{eq:mass-moment}, and
\eqref{eq:one-benchmark},
\begin{equation}
\label{eq:one-gap-spectral}
 q^n\bigl(\Dtwo(\C)-\delta_{n,q}^{(1)}\bigr)
 =\sum_{w=1}^nQ_w\bigl(W(w)-W(k)\bigr).
\end{equation}
Summing \eqref{eq:quadratic-minorant} and using
\eqref{eq:first-centered} gives
\[
 q^n\bigl(\Dtwo(\C)-\delta_{n,q}^{(1)}\bigr)
 \geq \kappa_{n,q}B
 +(a_{n,q}-\kappa_{n,q})kA_1
 \geq\kappa_{n,q}B.
\]
For the other direction, \eqref{eq:rho-definition} gives directly
\[
 q^n\bigl(\Dtwo(\C)-\delta_{n,q}^{(1)}\bigr)
 \leq \rho_{n,q}^{(1)}q^2B.
\]
Now use \eqref{eq:B-phi} and $q^n=Nqk$ to obtain
\eqref{eq:one-two-sided}.  Its outer terms vanish simultaneously exactly when
$\Phi_1(\C)=0$, which is perfect tiling by
Lemma~\ref{lem:defect-identities}.
\end{proof}

The lower coefficient in Theorem~\ref{thm:one-two-sided} has a clean
parameter-free estimate.  This is the step that makes the final stability
constant independent of the alphabet.  To estimate it, we express $W$ as a
coefficient norm: the first difference at $k$ controls $a_{n,q}$, while the
second differences control $\eta_{n,q}$.  The two estimates then combine
through the definition of $\kappa_{n,q}$.  Proposition~\ref{prop:W-shape}
supplies both the coefficient norm and the required curvature identity.
\begin{proposition}[One-error coefficient]
\label{prop:clean-spectral-gap}
Let $q\geq2$, $r\geq1$, $n=qr+1$, and
$k=1+r(q-1)$.  Define
\begin{equation}
\label{eq:gamma-definition}
 \gamma_{n,q}\triangleq\min_{1\leq j\leq n-2}
 \bigl(W(j+2)-2W(j+1)+W(j)\bigr).
\end{equation}
Then
\begin{equation}
\label{eq:clean-coefficient-bounds}
 a_{n,q}\geq q^2,
 \qquad
 \gamma_{n,q}\geq2q^2,
 \qquad
 \kappa_{n,q}\geq q^2.
\end{equation}
If $q\geq4$, the minimum in \eqref{eq:eta-definition} is attained uniquely
at $w=n$, and
\begin{equation}
\label{eq:kappa-closed-qge4}
 \begin{aligned}
 \kappa_{n,q}=\eta_{n,q}
 &=\frac{W(n)+rW(k-1)-(r+1)W(k)}{r(r+1)}\\
 &=\frac{q^2}{r(r+1)}
   \sum_{j=0}^{r-1}(j+1)
   \left\|u(z)^j b(z)^{qr-j-2}R_q(z)\right\|_2^2.
 \end{aligned}
\end{equation}
These inequalities do not require $N=q^n/V$ to be an integer.
\end{proposition}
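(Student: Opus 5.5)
The plan is to reduce all three bounds to integer-coefficient facts about the polynomials in Proposition~\ref{prop:W-shape}, and then to control the first difference $a_{n,q}$ separately. With $n=qr+1$ we have $n-k=r$. The Lloyd root therefore sits $r$ steps below the right endpoint, and every quantity can be rewritten in terms of second differences on $\{k-1,\dots,n\}$.

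\emph{Curvature floor.} By \eqref{eq:curvature-norm}, each second difference equals $q^2\|P_j\|_2^2$ with $P_j=u^{n-j-2}b^{j-1}R_q$. This polynomial has integer coefficients. Its constant term is $1$, since $u(0)=b(0)=R_q(0)=1$. Its degree is $n-1\geq2$, with leading coefficient $\pm(q-1)^{n-j-2}\neq0$. So it has at least two nonzero integer coefficients, which gives $\|P_j\|_2^2\geq2$ and hence $\gamma_{n,q}\geq2q^2$.

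\emph{From $\gamma$ to $\eta$.} Write $D(j)$ for the second difference at $j$. For $w=k+s$ with $s\geq1$, discrete Taylor expansion about the chord through $k-1,k$ gives
\[
 W(k+s)-W(k)+a_{n,q}s=\sum_{i=1}^{s}(s+1-i)\,D(k-2+i).
\]
This is a combination of curvatures with total weight $s(s+1)/2$. The symmetric expansion handles $w\leq k-2$. Since every $D\geq2q^2$, each numerator in \eqref{eq:eta-definition} is at least $q^2(w-k)(w-k+1)$, so $\eta_{n,q}\geq q^2$. Together with $a_{n,q}\geq q^2$, this yields $\kappa_{n,q}\geq q^2$.

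\emph{The bound $a_{n,q}\geq q^2$.} I expect this to be the main obstacle, because convexity alone controls second differences, not the sign or size of a first difference. My first attempt combines positivity \eqref{eq:local-slope} with divisibility. Put $P=u^{n-k}b^{k-2}$. Then $W(k-1)-W(k)=\|Pb\|^2-\|Pu\|^2$, and using $b-u=-qz$ and $b+u=2+(q-2)z$,
\[
 a_{n,q}=-q\bigl(2\langle zP,P\rangle+(q-2)\|P\|^2\bigr).
\]
So $a_{n,q}$ is a positive multiple of $q$. The bound $a\geq q^2$ would follow if $q$ divides the bracket. Reducing modulo $q$ and using $u\equiv b$, this means $q\mid 2\langle bP,P\rangle$, i.e.\ $q\mid 2\binom{2n-3}{n-2}$ with $n=qr+1$, which I would try to verify by a Kummer/Lucas-type argument. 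If divisibility fails for composite $q$, the fallback is telescoping: $a_{n,q}=W(n-1)-W(n)+\sum_{j=k-1}^{n-2}D(j)$, which is at least $2rq^2+(W(n-1)-W(n))$. The boundary term is then computed explicitly. Since $ub^{n-2}=b^{n-1}+qzb^{n-2}$, it equals $q^2\binom{2n-4}{n-2}+2q\langle b^{n-1},zb^{n-2}\rangle$, and the cross term is an explicit binomial via autocorrelation of $(1-z)^{n-2}$. The small case $q=2$ can additionally be checked against the binary formula of \cite{Barg2021Stolarsky}. Neither step uses integrality of $N$.

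\emph{Closed form for $q\geq4$.} For $q\geq4$, I would show that $D(j)$ is nonincreasing in $j$, by comparing consecutive norms $\|u^{m}b^{\ell}R_q\|$ with $u$ traded for $b$. Each ratio is then a weighted average of curvatures with weights $(s+1-i)$ normalized by $s(s+1)/2$. For $w>k$, enlarging $s$ shifts weight to later, smaller curvatures, so the ratio is minimized at the largest $s=r$, that is $w=n$. For $w<k-1$, the ratios average earlier, larger curvatures and so exceed the $w=n$ value. Strict decrease would give uniqueness. The same monotonicity should show $\eta\leq a$, since $a$ dominates the right-hand curvature sum. Evaluating at $s=r$ gives the first expression in \eqref{eq:kappa-closed-qge4}. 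Inserting \eqref{eq:curvature-norm} with $j=n-2-i'$, so that $n-j-2=i'$ and $j-1=qr-i'-2$, produces the weighted sum of norms.
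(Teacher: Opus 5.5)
The genuine gap is the bound $a_{n,q}\geq q^2$ for $q=3$, which neither of your two routes proves. Everything else matches the paper's proof: the two-coefficient floor $\gamma_{n,q}\geq2q^2$, the weighted-curvature expansion giving $\eta_{n,q}\geq\gamma_{n,q}/2$, and the $q\geq4$ monotonicity argument for the endpoint minimum and closed form. Your formula for $a_{n,q}$ with $P=u^rb^{r(q-1)-1}$ is the paper's \eqref{eq:slope-norm-difference}, up to a sign slip: $W(k-1)=\|uP\|_2^2$ and $W(k)=\|bP\|_2^2$, so there is no leading minus.

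\emph{Divisibility route.} It needs $q\mid2\binom{2n-3}{n-2}=\binom{2qr}{qr}$. For prime $q$, Kummer's theorem says this holds only when $r+r$ produces a carry in base $q$, so it fails for primes as well as composites. For example, at the ternary Hamming parameters $(q,n)=(3,13)$, $r=4$, we have $\binom{24}{12}=2\,704\,156\equiv1\pmod 3$.

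\emph{Telescoping fallback.} The boundary term you propose to compute is
\[
 W(n-1)-W(n)=q^2\binom{2n-4}{n-2}-2q\binom{2n-3}{n-1}
 =q\binom{2n-4}{n-2}\Bigl(q-4+\frac{2}{n-1}\Bigr).
\]
For $q=2,3$ this is negative and of order $\binom{2n-4}{n-2}$, while your lower bound for the curvature sum is only $2rq^2$. At $(3,13)$ it equals $-1\,763\,580$ against $72$.

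\emph{Congruences do not rescue it.} For $q=3$ the bracket is also even, since $\|P\|_2^2\equiv P(1)=0\pmod 2$, so $a_{n,3}\in6\mathbb Z_{>0}$. But when $\binom{2r}{r}\equiv1\pmod3$, as for $r=4$, nothing excludes $a_{n,3}=6<9$. A genuine size estimate is required. Note also that the positivity $a_{n,3}>0$ you borrow from \eqref{eq:local-slope} itself comes from the ternary certificate.

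The paper supplies this estimate in Appendix~\ref{app:ternary-slope}. It proves the exact identity $a_{3r+1,3}=6\gamma_r$ with $\gamma_r=[z^{3r}](z-1)^{4r-3}(2z^2+5z+2)^r$, and rewrites $\gamma_r$ as a finite hypergeometric sum. A creative-telescoping recurrence together with initial values then gives $\gamma_{r+1}>16\gamma_r$ for $r\geq4$. Hence $\gamma_r\geq2$ and $a_{3r+1,3}\geq12>9$.

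Your scheme does work for the other alphabets. For $q=2$ the bracket is $2\langle zP,P\rangle$, automatically divisible by $q$, so positivity gives $a_{n,2}\geq4$. For $q\geq4$ the boundary term above is positive, so your telescoping gives $a_{n,q}\geq2rq^2$. That is a valid alternative to the paper's estimate via $2f_jf_{j+1}\geq-f_j^2-f_{j+1}^2$.
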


\begin{proof}
For $1\leq w\leq n-2$, the polynomial inside the norm in
\eqref{eq:curvature-norm} has nonzero
coefficients at two distinct degrees: its constant and leading terms have
absolute values $1$ and
$(q-1)^{n-w-2}$.  Hence its squared norm is at least two, proving
$\gamma_{n,q}\geq2q^2$.

It remains to bound the local slope.  The relations between $n,k,r$ give
$n-k=r$ and $k-2=r(q-1)-1$.  Set $d\triangleq qr-1$ and
\[
 F(z)\triangleq u(z)^rb(z)^{r(q-1)-1}
 =\sum_{j=0}^df_jz^j.
\]
Then
\begin{equation}
\label{eq:slope-norm-difference}
\begin{aligned}
 a_{n,q}
 &=\|uF\|_2^2-\|bF\|_2^2\\
 &=q\left((q-2)\sum_{j=0}^df_j^2
       +2\sum_{j=0}^{d-1}f_jf_{j+1}\right).
\end{aligned}
\end{equation}
Applying $2xy\geq-x^2-y^2$ to adjacent coefficients gives
\[
 \frac{a_{n,q}}q
 \geq(q-4)\|F\|_2^2+f_0^2+f_d^2
 =(q-4)\|F\|_2^2+1+(q-1)^{2r}.
\]
For $q\geq4$, this is at least $q$, and therefore $a_{n,q}\geq q^2$.

For $q=2$, write $n=2k-1$.  The closed binary formula
\cite[Equation~(37)]{Barg2021Stolarsky} yields
\[
 a_{n,2}=\frac{2}{2k-3}\binom{2k-2}{k-1}\geq4,
\]
because the expression equals $4$ at $k=2$ and its ratio at $k+1$ to its
value at $k$ is $2(2k-3)/k\geq1$.  For $q=3$, the exact ternary coefficient
formula and positive recurrence proved in
Appendix~\ref{app:ternary-slope} give
\[
 a_{3r+1,3}\geq12>9
 \qquad(r\geq1).
\]

Finally, write $d_j=W(j+1)-W(j)$.  The curvature bound says
$d_{j+1}-d_j\geq\gamma_{n,q}$.  If $w=k+t$ with $t\geq1$, then
\[
\begin{aligned}
 W(w)-W(k)+a_{n,q}t
 &=\sum_{i=0}^{t-1}(d_{k+i}-d_{k-1})\\
 &\geq\gamma_{n,q}\sum_{i=1}^{t}i
 =\frac{\gamma_{n,q}}2t(t+1).
\end{aligned}
\]
If $w=k-1-s$ with $s\geq1$, then
\[
\begin{aligned}
 W(w)-W(k)-a_{n,q}(s+1)
 &=\sum_{i=0}^{s}(d_{k-1}-d_{k-1-i})\\
 &\geq\gamma_{n,q}\sum_{i=1}^{s}i
 =\frac{\gamma_{n,q}}2s(s+1).
\end{aligned}
\]
These are exactly the two sides of the minimum in
\eqref{eq:eta-definition}, so $\eta_{n,q}\geq\gamma_{n,q}/2$.  Together
with \eqref{eq:kappa-definition} and the two bounds already proved, this
establishes \eqref{eq:clean-coefficient-bounds}.

It remains to prove the exact assertion when $q\geq4$.  If
$P(z)=\sum_{i=0}^dp_iz^i\in\mathbb R[z]$ has nonzero constant and leading coefficients,
then
\begin{equation}
\label{eq:norm-monotonicity}
 \begin{aligned}
 \|uP\|_2^2-\|bP\|_2^2
 &=q\left((q-2)\|P\|_2^2
       +2\sum_{i=0}^{d-1}p_ip_{i+1}\right)\\
 &\geq q\left((q-4)\|P\|_2^2+p_0^2+p_d^2\right)>0.
 \end{aligned}
\end{equation}
Applying this first to
$P=u^{n-w-1}b^{w-1}$ shows that $W(w)$ is strictly decreasing.  Applying it
to $P=u^{n-j-3}b^{j-1}R_q$ in \eqref{eq:curvature-norm} shows that
\[
 c_j\triangleq W(j+2)-2W(j+1)+W(j)
\]
is strictly decreasing in $j$.

For $1\leq t\leq r=n-k$, the ratio in
\eqref{eq:eta-definition} at $w=k+t$ is
\begin{equation}
\label{eq:right-eta-average}
 R_t^+\triangleq
 \frac{\sum_{h=0}^{t-1}(t-h)c_{k-1+h}}{t(t+1)}.
\end{equation}
These weighted averages decrease strictly with $t$.  Indeed, if
$x_h=c_{k-1+h}$, $N_t=\sum_{h=0}^{t-1}(t-h)x_h$, and
$S_t=\sum_{h=0}^tx_h$, then
\[
 2N_t-tS_t
 =\sum_{\ell=1}^t\ell(t-\ell+1)(x_{\ell-1}-x_\ell)>0,
\]
which is equivalent to $R_{t+1}^+<R_t^+$.  On the other side, the ratio at
$w=k-1-s$ is
\[
 R_s^-=
 \frac{\sum_{h=0}^{s-1}(s-h)c_{k-2-h}}{s(s+1)}.
\]
Each of $R_s^-$ and $R_t^+$ is one half of a convex combination of the
corresponding curvature values.  Since every curvature index in $R_s^-$ lies
strictly to the left of every curvature index in $R_t^+$ and $c_j$ decreases
strictly, $R_s^->R_t^+$ for every admissible $s,t$.  Hence the unique minimum
is $R_r^+$, attained at
$w=n$.  Since $W(n)<W(k)$, this minimum is smaller than $a_{n,q}$, proving
the first line of \eqref{eq:kappa-closed-qge4}.  Finally, substitute
\eqref{eq:curvature-norm} into \eqref{eq:right-eta-average} with $t=r$ and
change variables $j=r-1-h$ to obtain the second line.
\end{proof}

\begin{table}[htbp]
\caption{Certified lower and upper coefficients at admissible one-error
parameter sets.  Numerical entries are rounded to three significant digits.}
\label{tab:one-error-coefficients}
\centering
\begin{tabular}{c|c|c|c}
\hline
$(q,n)$ & $\kappa_{n,q}/q^2$ & $\rho_{n,q}^{(1)}$
& $\rho_{n,q}^{(1)}/(\kappa_{n,q}/q^2)$\tabularnewline
\hline
$(2,3)$  & $1$ & $1$ & $1$\tabularnewline
$(2,7)$  & $2$ & $2.51\mathbin{\times}10^1$ & $1.26\mathbin{\times}10^1$\tabularnewline
$(3,4)$  & $1$ & $6.42$ & $6.42$\tabularnewline
$(3,13)$ & $1.98\mathbin{\times}10^3$
         & $8.41\mathbin{\times}10^7$ & $4.24\mathbin{\times}10^4$\tabularnewline
$(4,5)$  & $3$ & $1.47\mathbin{\times}10^2$ & $4.90\mathbin{\times}10^1$\tabularnewline
$(4,21)$ & $7.40\mathbin{\times}10^8$
         & $4.88\mathbin{\times}10^{19}$ & $6.60\mathbin{\times}10^{10}$\tabularnewline
\hline
\end{tabular}
\end{table}

Table~\ref{tab:one-error-coefficients} shows two distinct features of the
comparison.  The uniform floor one can be much smaller than the certified
parameter-dependent lower coefficient, and the two certified coefficients can
differ by many orders of magnitude.  The table contains only parameter sets satisfying both
integrality conditions in \eqref{eq:one-parameters}; values of the analytic
coefficient outside that class are not instances of the stability theorem.

\begin{proof}[Proof of Theorem~\ref{thm:one-uniform}]
Theorem~\ref{thm:one-two-sided} and
Proposition~\ref{prop:clean-spectral-gap} prove
\eqref{eq:one-uniform}.  If no perfect code exists, every code has
$\Phi_1\geq4$, proving \eqref{eq:one-nonattainment}.  Uniform sharpness
follows from either of the two examples below.
\end{proof}

The smallest example illustrates the exact normalization.  At
$(q,n,N,k)=(2,3,2,2)$,
\[
 (W(1),W(2),W(3))=(6,2,6),
 \qquad
 \kappa_{3,2}=4,
 \qquad
 \rho_{3,2}^{(1)}=1.
\]
Every nonperfect two-word code has discrepancy gap one and $\Phi_1=4$, so
both inequalities in \eqref{eq:one-two-sided} are equalities.

Sharpness also occurs at a nonbinary perfect-code parameter set.  Over
$\mathbb F_3$, let
\[
 \C=\{(a,a,b,b):a,b\in\mathbb F_3\}.
\]
This nonperfect $[4,2,2]_3$ code has
\[
 A(\C)=\Adual(\C)=(1,0,4,0,4),
 \qquad
 (W(1),W(2),W(3),W(4))=(245,26,14,20).
\]
Consequently,
\[
 \Phi_1(\C)=72,\qquad
 \Dtwo(\C)-\delta_{4,3}^{(1)}
 =\frac89=\frac{\Phi_1(\C)}{9^2}.
\]
Thus the uniform coefficient one is attained nontrivially also at the ternary
Hamming parameters $(q,n,N)=(3,4,9)$.

\section{Two-error stability}
\label{sec:higher-error}

The two-error argument has a different shape from the one-error proof.  We
first state the conditional large-alphabet estimate, then construct the formal
comparison distribution and prove the coefficient inequalities needed for it.
The arithmetic hypotheses reduce $q=2,3$ to two explicit cases, giving an
all-alphabet theorem at the end of the section.

\subsection{The conditional large-alphabet estimate}
\label{sec:two-error}

Put
\begin{equation}
\label{eq:two-parameters}
 V_2=1+n(q-1)+\binom n2(q-1)^2,
 \qquad N=q^n/V_2.
\end{equation}
Assume $q\geq2$, $n\geq5$, $N$ is an integer, and the Lloyd polynomial
\[
 L_2(w)=c_w(2)=K_2^{(n-1,q)}(w-1)
\]
has two distinct integral roots $r<s$ in $\{1,\ldots,n\}$.  Let
\[
 \bar w\triangleq\frac{(q-1)n}{q},
\]
\begin{equation}
\label{eq:formal-masses}
 b_r\triangleq\frac{(V_2-1)s-V_2\bar w}{s-r},
 \qquad
 b_s\triangleq\frac{V_2\bar w-(V_2-1)r}{s-r},
\end{equation}
and define the formal discrepancy benchmark
\begin{equation}
\label{eq:two-benchmark}
 \delta_{n,q}^{(2)}\triangleq
 \frac{b_rW_{n,q}(r)+b_sW_{n,q}(s)}{q^n}.
\end{equation}
These formal data are defined without assuming that a perfect code exists.

Put
\[
 m\triangleq n-1,
 \qquad a\triangleq q-4,
\]
and define
\begin{align}
\label{eq:P-four}
 \mathcal P_4(m,a)\triangleq{}&
 m(m-1)(m-2)(m-3)a^4
 +8m(m-1)(m-2)a^3\notag\\
 &+72m(m-1)a^2+480ma+1680.
\end{align}
For $q\geq4$, set
\begin{equation}
\label{eq:vartheta-closed}
 \vartheta_{n,q}\triangleq
 \frac{\binom{2m-8}{m-4}\mathcal P_4(m,a)}
 {6m(m-1)(m-2)(m-3)}.
\end{equation}

\begin{theorem}[Large-alphabet two-error stability]
\label{thm:two-stability}
Let $G$ be a finite abelian group of order $q\geq4$ and let $n\geq5$.
Assume that $V_2$ divides $q^n$ and that
$L_2(w)=K_2^{(n-1,q)}(w-1)$ has two distinct integral roots in
$\{1,\ldots,n\}$.  Put $N=q^n/V_2$.  Then every code
$\C\subseteq G^n$ with $|\C|=N$ satisfies
\begin{equation}
\label{eq:two-stability}
 \Dtwo(\C)-\delta_{n,q}^{(2)}
 \geq\vartheta_{n,q}\frac{\Phi_2(\C)}{N^2}.
\end{equation}
If a perfect two-error code exists, every perfect code with these parameters
attains the benchmark.
If no perfect two-error code exists, then
\begin{equation}
\label{eq:two-gap}
 \Dtwo(q,n,N)\geq\delta_{n,q}^{(2)}+\frac{4\vartheta_{n,q}}{N^2}.
\end{equation}
\end{theorem}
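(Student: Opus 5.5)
The plan is to run the one-error argument again, but to replace the quadratic minorant by a quartic one that is adapted to the two Lloyd roots. By \eqref{eq:spectral-discrepancy} and \eqref{eq:defect-parseval}, with $Q_w=\Adual_w(\C)$ and $L_2(w)=c_w(2)$, it suffices to show
\[
 \sum_{w=1}^nQ_wW(w)-b_rW(r)-b_sW(s)\ \geq\ \vartheta_{n,q}\sum_{w=1}^nQ_wL_2(w)^2 .
\]
Three linear facts about $Q$ are available, all from Lemma~\ref{lem:dual-factorial-moments} and the proof of \eqref{eq:defect-parseval}.
\begin{enumerate}
\item The mass is $\sum_wQ_w=V_2-1$.
\item The first moment is $\sum_wwQ_w=V_2\bar w-(q^{n-1}/N)A_1$.
\item Evaluating $\frac1{N^2}\sum_{\xi\neq0}|\wh{\one_\C}(\xi)|^2\,\wh{\one_{B(0,2)}}(\xi)$ by Plancherel gives $\sum_wQ_wL_2(w)=V_2(A_1+A_2)\geq0$.
\end{enumerate}
The formal masses \eqref{eq:formal-masses} are exactly the two-point measure on $\{r,s\}$ matching the first two facts when $A_1=0$. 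The third fact shows that this measure also annihilates $L_2$.

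I would look for a pointwise minorant
\[
 W(w)\ \geq\ \ell(w)+\lambda L_2(w)+\vartheta L_2(w)^2\qquad(1\leq w\leq n),
\]
where $\ell$ is the chord of $W$ through $(r,W(r))$ and $(s,W(s))$, and $\lambda\geq0$. Summing this against $Q_w$ and subtracting the benchmark gives four contributions.
\begin{enumerate}
\item The constant part of $\ell$ cancels by the mass identity.
\item The slope part gives $-\ell'\,(q^{n-1}/N)A_1$. This is nonnegative because $\ell'<0$ when $q\geq4$, by the monotonicity in Proposition~\ref{prop:W-shape}.
\item The $\lambda$-term gives $\lambda V_2(A_1+A_2)\geq0$.
\item The remaining term is $\vartheta\sum_wQ_wL_2(w)^2$, which is exactly the required right-hand side.
\end{enumerate}
The term $\lambda L_2$ is essential. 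For integers strictly between $r$ and $s$, convexity forces $W-\ell<0$, while $L_2<0$ there, so a positive $\lambda$ absorbs this deficit. The attainment statement then follows from Proposition~\ref{prop:lloyd-support}, since a perfect code has $A_1=A_2=0$ and is supported on $\{r,s\}$. The gap \eqref{eq:two-gap} follows from \eqref{eq:defect-discrete-range} as in Proposition~\ref{prop:transfer}.

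The main obstacle is to construct $\lambda$ and $\vartheta$ and to prove the minorant at every integer $w$. I would choose $\lambda$ so that $W-\ell-\lambda L_2$ is a discrete "double touching" at both roots. The error $W-(\ell+\lambda L_2+\vartheta L_2^2)$ is then a fourth-order divided difference of $W$ times a product of four linear factors in $w$, and I would arrange the nodes so that this product is nonnegative at every integer. Because $L_2$ has leading coefficient $q^2/2$, the coefficient $\vartheta$ can be taken as $\tfrac{4}{q^4}$ times a lower bound for the fourth divided difference, that is, as $\frac{1}{6q^4}\min_j\Delta^4W(j)$.

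To evaluate this minimum I would extend the curvature identity \eqref{eq:curvature-norm} one level further, expecting
\[
 \Delta^4W(w)=q^4\bigl\|u^{n-w-4}b^{w-1}R_q^2\bigr\|_2^2 .
\]
For $q\geq4$ I would then apply the norm-monotonicity inequality \eqref{eq:norm-monotonicity} repeatedly to show that $\Delta^4W(w)$ decreases in $w$. The minimum is then $q^4\|b^{m-4}R_q^2\|_2^2$, which should expand to $\binom{2m-8}{m-4}\mathcal P_4(m,a)$ divided by the falling factorial $(m)_4$; dividing by $6$ gives \eqref{eq:vartheta-closed}. The delicate bookkeeping is twofold. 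First, the discrete node placement must make the product nonnegative even at integers adjacent to $r$ and $s$. Second, the resulting $\lambda$ must indeed be nonnegative. I would verify both using the strict convexity in Proposition~\ref{prop:W-shape}.
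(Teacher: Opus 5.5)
\textbf{There is a genuine gap.} The reduction to a pointwise spectral minorant is sound as far as it goes. Your three moment facts are correct, and the chord and $\lambda$ terms contribute with the signs you claim. Your constant is also right: for $q\ge4$ the sequence $\Delta^4W$ is strictly decreasing, and $\Delta^4W(n-4)=q^4\|b^{m-4}R_q^2\|_2^2=q^4F_4$, so $\frac{1}{6q^4}\min_j\Delta^4W(j)=\vartheta_{n,q}$.

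The gap is the minorant itself. For $q\ge4$ there is no scalar $\lambda$, of either sign, and no $\vartheta\ge0$, for which $W\ge\ell+\lambda L_2+\vartheta L_2^2$ holds at all weights. Write $W[\cdot]$ for divided differences, so that $W(w)-\ell(w)=(w-r)(w-s)W[r,s,w]$ and $L_2(w)=\frac{q^2}{2}(w-r)(w-s)$. Since $s-r\ge2$ and $s<n$, both weights $s-1$ and $s+1$ occur. The minorant at these two points says
\[
 W[r,s-1,s]+\tfrac{q^4}{4}\vartheta\,(s-1-r)\;\le\;\tfrac{q^2}{2}\lambda\;\le\;W[r,s,s+1]-\tfrac{q^4}{4}\vartheta\,(s+1-r),
\]
which forces $2W[r,s-1,s,s+1]\ge\frac{q^4}{2}\vartheta(s-r)\ge0$. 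However,
\[
 (-1)^j\Delta^jW(w)=\frac1{2\pi}\int_0^{2\pi}|u|^{2(n-w-j)}|b|^{2(w-1)}\bigl(|u|^2-|b|^2\bigr)^j\,d\theta,
\]
and $|u|^2-|b|^2=q(q-2+2\cos\theta)\ge0$. So $W$ is completely monotone, and every third divided difference of $W$ is strictly negative. Convexity, which is all you planned to invoke, cannot detect this obstruction. It is also why the ``double touching at both roots'' cannot be arranged: that imposes two tangency conditions on one free scalar.

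\textbf{The missing idea is a cubic correction.} Let $P$ be the quartic with leading coefficient $\frac{q^4}{4}\vartheta$ that interpolates $W$ at the pair covering $r-1,r,s,s+1$. It has the form $P=\ell+\alpha L_2+\beta(s+1-w)L_2+\vartheta L_2^2$, where $\beta=-\frac{2}{q^2}W[r-1,r,s,s+1]>0$ and $\alpha=\frac{2}{q^2}\bigl(W[r,s,s+1]-\frac{q^4}{4}\vartheta(s+1-r)\bigr)$. For this $P$, your error formula and divided-difference bound do give $W\ge P$ with $\vartheta=\vartheta_{n,q}$.

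To sum it against $\Adual_w$, however, you need two further steps that the proposal does not contain.
\begin{enumerate}
\item You must show that $\sum_{w\ge1}\Adual_w(s+1-w)L_2(w)$ is a nonnegative combination of $A_1,A_2,A_3$. This is the positive definiteness of $(s+1-x)(s-x)(r-x)$ used in Appendix~C; it follows from the Krawtchouk three-term recurrence because $s>\bar w$.
\item You must show that the resulting net coefficients of $A_1$ and $A_2$ are nonnegative. These involve $\alpha$, which may be negative.
\end{enumerate}

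\textbf{How the paper proceeds instead.} The paper never builds a spectral minorant for $W-\vartheta L_2^2$. It works in the distance domain with the completely monotone potential $f$. The formal comparison lemma, whose proof contains exactly this cubic interpolation, gives $B_t\ge0$ for every binomial basis function. The paper then expands $f-c\mu_2$ in right-Newton differences, where $G_4=0$ forces $c=F_4/6$. Finally, it disposes of the low coefficients by the explicit regrouping into $H_1,H_2,H_3$, using $T_1,T_2<0$ and $A_i\ge0$.
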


The theorem is conditional only on the two stated arithmetic assumptions.
It asserts neither that an admissible parameter pair exists nor that the
formal benchmark is realized by a code.

\begin{remark}[Status of the large-alphabet branch]
No nontrivial perfect two-error code is currently known for $q\geq4$.
Prime-power alphabets are excluded by the classical classification, while
non-prime-power alphabets remain open under strong arithmetic restrictions
\cite{CazorlaGarcia2024,Bennett2026}.

The arithmetic conditions are sparse even before the known nonexistence
theorems are applied.  An exact integer search over
$2\leq q\leq300$ and $5\leq n\leq4000$ found that both conditions hold only
for $(q,n)=(2,5)$ and $(3,11)$.  Among the $123$ pairs with $q\geq4$ in this
range for which the Lloyd polynomial has two integral roots, none satisfies
$V_2\mid q^n$.  The search used the exact quadratic identity
\[
 2L_2(w)=q^2w^2-q\bigl(2(q-1)(n-2)+3q\bigr)w+2V_2
\]
to test the discriminant, root integrality, and root range, followed by exact
modular exponentiation for the divisibility condition.  This bounded
calculation does not exclude admissible parameters outside the searched
rectangle.

Thus Theorem~\ref{thm:two-stability} is an existence-independent conditional
certificate: it applies to every arithmetically admissible instance and, in
particular, to any future candidate parameters.
\end{remark}

\begin{proposition}[Explicit large-alphabet coefficient]
\label{prop:vartheta-closed}
For every pair of integers $q\geq4$ and $n\geq5$, the number
$\vartheta_{n,q}$ in \eqref{eq:vartheta-closed} is positive.  For each fixed
$q$, this analytic expression satisfies, as $n\to\infty$,
\begin{equation}
\label{eq:vartheta-asymptotics}
 \vartheta_{n,q}\sim
 \begin{cases}
 \displaystyle
 \frac{(q-4)^4}{6144\sqrt\pi}\frac{4^n}{n^{1/2}},&q\geq5,\\[2mm]
 \displaystyle
 \frac{35}{128\sqrt\pi}\frac{4^n}{n^{9/2}},&q=4.
 \end{cases}
\end{equation}
This is an asymptotic of the displayed expression, not an assertion that
infinitely many admissible parameter pairs exist.
\end{proposition}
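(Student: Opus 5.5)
The statement is a claim about the explicit expression \eqref{eq:vartheta-closed} alone, so I would argue directly from the formula and not use any spectral input.

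\emph{Positivity.} For $q\geq4$ we have $a=q-4\geq0$. For $n\geq5$ we have $m=n-1\geq4$, so each factor $m,m-1,m-2,m-3$ is positive. Every monomial in $\mathcal P_4(m,a)$ is therefore nonnegative, and the constant term $1680$ is strictly positive, so $\mathcal P_4(m,a)>0$. The binomial coefficient $\binom{2m-8}{m-4}$ has $m-4\geq0$ and is at least one. The denominator $6m(m-1)(m-2)(m-3)$ is positive. Hence $\vartheta_{n,q}>0$.

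\emph{Asymptotics.} Start from the central binomial asymptotic $\binom{2j}{j}\sim 4^j/\sqrt{\pi j}$ with $j=m-4=n-5$:
\[
 \binom{2m-8}{m-4}\sim\frac{4^{n-5}}{\sqrt{\pi n}}=\frac{4^n}{1024\sqrt\pi\, n^{1/2}}.
\]
The denominator satisfies $6m(m-1)(m-2)(m-3)\sim6n^4$. Now split by $q$.

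For $q\geq5$ we have $a\geq1$ fixed, and the leading term of $\mathcal P_4$ is $m^4a^4\sim n^4(q-4)^4$. Its $n^4$ cancels the denominator, giving
\[
 \vartheta_{n,q}\sim\frac{(q-4)^4}{6}\cdot\frac{4^n}{1024\sqrt\pi\,n^{1/2}}=\frac{(q-4)^4}{6144\sqrt\pi}\frac{4^n}{n^{1/2}}.
\]

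For $q=4$ we have $a=0$, so $\mathcal P_4=1680$. Then
\[
 \vartheta_{n,4}\sim\frac{1680}{6n^4}\cdot\frac{4^n}{1024\sqrt\pi\,n^{1/2}}=\frac{280}{1024\sqrt\pi}\frac{4^n}{n^{9/2}}=\frac{35}{128\sqrt\pi}\frac{4^n}{n^{9/2}}.
\]

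There is no real obstacle. The only points needing care are the index shift $4^{m-4}=4^{n-5}$, which produces the factor $1024$, and stating clearly that this is an asymptotic of the analytic expression in $n$. It makes no use of admissibility of the parameter pairs.
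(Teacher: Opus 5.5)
Your argument is correct and is essentially the paper's: positivity holds because $a=q-4\geq0$ and $m-3\geq1$ make every term of $\mathcal P_4(m,a)$ nonnegative with constant term $1680$, and the asymptotics come from $\binom{2m-8}{m-4}\sim 4^n/(1024\sqrt{\pi n})$ together with the leading term $m^4a^4$ of $\mathcal P_4$ (or the constant $1680$ when $q=4$). The paper's appendix also identifies this closed form with $F_4/6$, but that identification is used in Theorem~\ref{thm:two-stability} and is not part of the proposition as stated.
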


\begin{proof}
Appendix~\ref{app:vartheta-evaluation} derives
\eqref{eq:vartheta-closed}.  Positivity is immediate from \eqref{eq:P-four},
and the same calculation with Stirling's formula gives
\eqref{eq:vartheta-asymptotics}.
\end{proof}

For centers at distance $w$, let $\mu_2(w)$ be the intersection size of their
radius-two balls.  We compare the actual distance distribution with a formal
reference obtained from dual mass at the two Lloyd roots.  Once transformed
back to distance coordinates, this formal reference has zero entries at
distances $1,\ldots,4$, as for a perfect two-error code, although the
reference need not be realizable.  The formal comparison
controls completely monotone moment gaps.  The new step shows that the
discrepancy-potential gap dominates $\vartheta_{n,q}$ times the $\mu_2$-moment
gap; the two gaps equal $N(\Dtwo(\C)-\delta_{n,q}^{(2)})$ and $\Phi_2(\C)/N$,
returning Theorem~\ref{thm:two-stability}.  A right-Newton expansion handles
the higher moments, while the first four distances require regrouping.

Define
\[
 \lambda(x,y)\triangleq\frac12\sum_{u\in\X}|d(x,u)-d(y,u)|.
\]
This depends only on $d(x,y)$; write $\lambda(w)$ for its value at distance
$w$.  For a function $u:\{0,\ldots,n\}\to\mathbb R$, put
$\Delta u(w)=u(w+1)-u(w)$ for $0\leq w<n$.  We call $u$
\emph{completely monotone} if
\[
 (-1)^j\Delta^ju(w)\geq0
 \qquad(0\leq w\leq n,\quad 0\leq j\leq n-w).
\]
Put
\begin{equation}
\label{eq:f-potential}
 f_{n,q}(w)\triangleq\lambda(n)-\lambda(w)
\end{equation}
and
\[
 \Lambda_{n,q}\triangleq
 \frac1{q^n}\sum_{i=0}^n\binom ni(q-1)^i\lambda(i).
\]

The next proposition appeared in
\cite{Zabokritskiy2026Perfect}; we retain its statement here because both
parts feed directly into the two-error stability estimate.
\begin{proposition}[Distance representation and complete monotonicity]
\label{prop:distance-complete-monotone}
Every $N$-word code satisfies
\begin{equation}
\label{eq:distance-form}
 \Dtwo(\C)=\Lambda_{n,q}
 -\frac1N\sum_{i=1}^n A_i(\C)\lambda(i).
\end{equation}
If $q\geq4$, the function $f_{n,q}$ is completely monotone, and every
positive-order alternating difference is strictly positive.
\end{proposition}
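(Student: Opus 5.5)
The plan is to prove the two assertions of Proposition~\ref{prop:distance-complete-monotone} separately: the distance form by a Stolarsky-type pair expansion, and complete monotonicity from an integral representation of $\lambda$ in which $q\geq4$ enters as a positivity condition.

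\emph{Distance form.} First expand the square in \eqref{eq:discrepancy}. Since $\sum_x|\C\cap B(x,t)|=NV_t$, the cross term combines with the last term and gives
\[
 \Dtwo(\C)=\frac1{N^2}\sum_{z,z'\in\C}\sum_{t=0}^n|B(z,t)\cap B(z',t)|
 -\frac1{q^n}\sum_{t=0}^nV_t^2 .
\]
Next, for fixed $z,z'$, count the radii $t$ for which a point $u$ lies in both balls. There are $n+1-\max\{d(z,u),d(z',u)\}$ of them. Writing $\max\{a,b\}=\tfrac12(a+b)+\tfrac12|a-b|$ and summing over $u$, the inner sum becomes $\Gamma_{n,q}-\lambda(d(z,z'))$. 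Here $\Gamma_{n,q}=\sum_u\bigl(n+1-d(0,u)\bigr)$ does not depend on the pair, by translation invariance.

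Summing over ordered pairs, and using $\lambda(0)=0$ and the definition of $A_i$, gives
\[
 \Dtwo(\C)=c_{n,q}-\frac1N\sum_{i\geq1}A_i\lambda(i),
\]
where the constant $c_{n,q}$ is independent of $\C$ and of $N$. To identify it, apply the identity to $\C=\X$. Every ball count is then exact, so $\Dtwo(\X)=0$, and $A_i(\X)=\binom ni(q-1)^i$. This forces $c_{n,q}=\Lambda_{n,q}$, which proves \eqref{eq:distance-form}.

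\emph{Explicit form of $\lambda$.} Take $x,y$ at distance $w$. In the $n-w$ coordinates where they agree, $u$ contributes nothing to $d(x,u)-d(y,u)$. In each of the $w$ differing coordinates, the contribution is $-1$, $+1$ or $0$ with respective counts $1$, $1$ and $q-2$ among the $q$ symbols. Hence
\[
 \lambda(w)=\tfrac{q^n}{2}\,\E|S_w|,
\]
where $S_w$ is a sum of $w$ i.i.d.\ steps taking the values $\pm1$ with probability $1/q$ each and $0$ with probability $(q-2)/q$. This step has characteristic function $\varphi(\theta)=(q-2+2\cos\theta)/q$.

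\emph{Integral representation.} I will use the elementary kernel identity
\[
 |s|=\frac1{2\pi}\int_{-\pi}^{\pi}\frac{1-\cos(s\theta)}{1-\cos\theta}\,d\theta
 \qquad(s\in\mathbb Z),
\]
which follows from the Fej\'er-type expansion of $(1-\cos s\theta)/(1-\cos\theta)$ as a trigonometric polynomial with constant term $|s|$. Taking expectations yields
\[
 \lambda(w)=\frac{q^n}{4\pi}\int_{-\pi}^{\pi}\frac{1-\varphi(\theta)^w}{1-\cos\theta}\,d\theta .
\]

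\emph{Complete monotonicity.} For $j\geq1$ we have $(-1)^j\Delta^jf=(-1)^{j+1}\Delta^j\lambda$. Since $\Delta^j\varphi^w=\varphi^w(\varphi-1)^j$, this equals
\[
 \frac{q^n}{4\pi}\int_{-\pi}^{\pi}\frac{\varphi(\theta)^w\bigl(1-\varphi(\theta)\bigr)^j}{1-\cos\theta}\,d\theta .
\]
Here $1-\varphi=2(1-\cos\theta)/q\geq0$. The hypothesis $q\geq4$ gives $\varphi\geq(q-4)/q\geq0$, so the integrand is nonnegative. It is also positive except on a null set, namely where $\cos\theta=1$ or, when $q=4$, where $\cos\theta=-1$. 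Hence every positive-order alternating difference is strictly positive. The order-zero case $f(w)=\lambda(n)-\lambda(w)\geq0$ then follows because $\Delta\lambda>0$.

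I expect the main obstacle to be justifying the kernel identity and the interchange of finite differences with the integral. The interchange is harmless, since all sums are finite. The rest is bookkeeping.
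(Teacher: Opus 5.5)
Your proof is correct and follows essentially the same route as the paper: the pair expansion of the squares combined with the threshold identity (your $\max$ decomposition) for the distance form, and the random-walk representation $\lambda(w)=\frac{q^n}{2}\E|S_w|$ with the Fourier base $\varphi(\theta)=1-\frac4q\sin^2\frac\theta2\in[0,1]$ for $q\geq4$ for complete monotonicity. The differences are cosmetic. You fix the constant by evaluating at the full-space code rather than computing it directly, and you obtain the integral from the Fej\'er identity for $|s|$ rather than from the paper's increment formula $\Delta\lambda(w)=q^{n-1}\Pr(S_w=0)$; both routes give the same integrand $\frac{q^{n-1}}{2\pi}\varphi^w(1-\varphi)^{j-1}$.
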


\noindent Its proof is reproduced in
Appendix~\ref{app:reproduced-prerequisites}.

For the rest of this subsection, assume $q\geq4$ and write $f=f_{n,q}$.
Define the formal dual vector $Q^*$ by
\[
 Q_0^*=1,\qquad Q_r^*=b_r,\qquad Q_s^*=b_s,
 \qquad Q_w^*=0\quad(w\notin\{0,r,s\}).
\]
 Let $\mathsf K=(K_i^{(n,q)}(j))_{i,j=0}^n$ be the Krawtchouk matrix, so that
$\mathsf K^2=q^nI$, and put
\begin{equation}
\label{eq:formal-A-star}
 A^*\triangleq\frac1{V_2}\mathsf KQ^*.
\end{equation}
Thus $A^*$ is a formal reference vector; it is not assumed to be nonnegative
or to arise from a code.

The following comparison combines Lemma~5.3 and the two-node quadrature in
Lemma~A.1 of \cite{Zabokritskiy2026Perfect}.  We state it in the present
normalization and include its proof in
Appendix~\ref{app:formal-comparison}, because it is a load-bearing input to
the quantitative estimate.

\begin{lemma}[Formal two-error quadrature and comparison]
\label{lem:formal-two-comparison}
Under the assumptions of Theorem~\ref{thm:two-stability}, the formal data above
satisfy
\[
 1<r<\bar w<s<n,\qquad b_r,b_s>0,
\]
and
\begin{equation}
\label{eq:A-star-low}
 A_0^*=1,\qquad A_1^*=A_2^*=A_3^*=A_4^*=0,
 \qquad \sum_{i=0}^nA_i^*=N.
\end{equation}
If $A$ is the distance distribution of any $N$-word code, then every
completely monotone function $u:\{0,\ldots,n\}\to\mathbb R$ satisfies
\begin{equation}
\label{eq:formal-completely-monotone-comparison}
 u^{\mathsf T}(A-A^*)\geq0.
\end{equation}
 No nonnegativity is assumed for $A_i^*$ with $i\geq5$.
\end{lemma}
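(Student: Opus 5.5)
We will establish the three assertions in turn: the root and mass inequalities, the vanishing of $A^*_1,\dots,A^*_4$, and the comparison \eqref{eq:formal-completely-monotone-comparison}.

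\emph{Root and mass inequalities.} We use the quadratic identity $2L_2(w)=q^2w^2-q\bigl(2(q-1)(n-2)+3q\bigr)w+2V_2$. Its constant term is positive, so $L_2(0)>0$. Direct evaluation gives $L_2(1)=(q-1)^2\binom{n-1}2>0$, while $L_2(\bar w)<0$ and $L_2(n)=\binom{n-1}2>0$. Hence the two roots satisfy $1<r<\bar w<s<n$, and positivity of $b_r,b_s$ then follows from \eqref{eq:formal-masses}. These masses are defined by two linear conditions. The first is $b_r+b_s=V_2-1$, matching \eqref{eq:dual-mass} with $q^n/N=V_2$. The second is $rb_r+sb_s=V_2\bar w$, which is the value of $\sum_w w\Adual_w$ forced by $A_1=A_2=0$ in \eqref{eq:dual-factorial-moments} with $j=1$. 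Since $r<\bar w<s$, the point $V_2\bar w/(V_2-1)$ should lie strictly between $r$ and $s$; we check this from the same quadratic identity.

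\emph{Low entries of $A^*$.} We use the Krawtchouk expansion. Since $\mathsf K^2=q^nI$, we have $A^*_i=V_2^{-1}\sum_wK_i(w)Q^*_w$, and in particular $\sum_iA_i^*=V_2^{-1}\sum_w(\sum_iK_i(w))Q^*_w=V_2^{-1}q^nQ^*_0=N$. For $1\le i\le4$ the polynomial $K_i(w)$ has degree $i$, so $A_i^*$ is determined by the factorial moments $\sum_w(w)_jQ^*_w$ for $j\le i\le4$. The key step is to express these moments through $c_w(2)=L_2(w)$. Every $Q^*$ is supported on $\{0,r,s\}$ and $L_2(r)=L_2(s)=0$, so $\sum_wL_2(w)\,p(w)Q^*_w=L_2(0)p(0)$ for every polynomial $p$.

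We plan to run the inverse of the moment identity \eqref{eq:dual-factorial-moments} on a genuine perfect code. Alternatively, we will verify directly that the Krawtchouk $K_i$ for $i\le4$ satisfies the following: $K_i(w)-K_i(0)\,\cdot(\text{normalization})$ lies in the span of $L_2(w)p(w)$ together with the two moment constraints already imposed. Concretely, we will use the identity $\sum_{j\le 2}K_j(w)=L_2(w)$ together with the Christoffel--Darboux/recurrence for Krawtchouk polynomials. Write $\sum_{i=0}^{4}$-type combinations as $L_2\cdot(\text{quadratic})$ plus terms of degree $\le1$; the latter are fixed by mass and first moment. This yields $A_0^*=1$ and $A_1^*=\dots=A_4^*=0$. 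This is the algebraic core of the proof, and we expect it to be a finite identity check.

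\emph{Comparison.} This is the main obstacle. We will expand a completely monotone $u$ in the right-Newton basis $g_j(w)=\binom{n-w}{j}$, for which $u=\sum_j\beta_jg_j$ with $\beta_j=(-1)^j\Delta^ju(n-j)\ge0$. By linearity it then suffices to prove $g_j^{\mathsf T}(A-A^*)\ge0$ for each $j$. Now $\sum_iA_i\binom{n-i}{j}$ is, up to positive factors, the MacWilliams-dual moment $\sum_w\binom{n-w}{n-j}$-type expression, that is, a factorial moment of $\Adual$ in the variable $n-w$. It is a nonnegative combination of the $A_i$ with $i\le n-j$, and it also equals $\frac{N\,q^{j}}{q^n}\sum_w\binom{n-w}{j}$-weighted dual data. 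We therefore plan to show that the dual-side expression $\sum_w \binom{w}{n-j}\Adual_w$ minus the same expression for $Q^*$ is nonnegative. This follows from a two-node quadrature: $Q^*$ is the Gauss-type two-point distribution on the roots of $L_2$ that matches mass and first moment. Every function convex in $w$ that is nonnegative off the interval $[r,s]$ compared with its chord therefore satisfies $\sum h\Adual\ge\sum hQ^*$, provided the correction term $\sum_w L_2(w)(\cdot)\Adual_w$ has a definite sign. This sign is supplied by $L_2$ being positive outside $[r,s]$ and negative inside, combined with nonnegativity of $\Adual$ and the defect identity \eqref{eq:defect-parseval}.

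We expect the delicate point to be showing that $\binom{w}{n-j}$ minus its chord through $r,s$ is a nonnegative multiple of $L_2(w)$ at the integer support points. For $j$ with $n-j\ge2$, convexity of the binomial on integers together with the sign pattern of $L_2$ should give this. The small $n-j$ cases reduce to the mass and first-moment equalities, where equality holds.
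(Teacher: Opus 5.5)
Your root and quadrature parts are sketches that can be completed, but the comparison step has a genuine gap. The reduction to the basis $\binom{n-w}{j}$ is correct. The MacWilliams binomial-moment identity $\sum_i\binom{n-i}{j}A_i=\frac{N}{q^j}\sum_{w\le j}\binom{n-w}{n-j}\Adual_w$ holds equally for the pair $(A^*,Q^*)$. It turns the claim into $\sum_{w\ge1}h(w)(\Adual_w-Q^*_w)\ge0$ for $h(w)=\binom{n-w}{n-j}$. This $h$ is decreasing and completely monotone in $w$; it is not $\binom{w}{n-j}$. An increasing test function would even make the deficit $V_2A_1/q$ in the first moment work against you.

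Your mechanism cannot give this inequality. Write $h-\ell=L_2\phi$, with $\ell$ the chord through $r$ and $s$. The sum $\sum_wL_2(w)\phi(w)\Adual_w$ then has no sign: $L_2<0$ at the integers strictly between $r$ and $s$, and nothing in $\Adual\ge0$ forbids mass there. Also, \eqref{eq:defect-parseval} only concerns $\sum_wL_2(w)^2\Adual_w$. In fact no argument using only $\Adual\ge0$, the mass, and the first moment can succeed. Take a nonnegative vector with the same mass and mean as $Q^*$, concentrated on the integers adjacent to that mean. By Jensen, it gives a strictly smaller value for every $h$ lying strictly below its chord there. So ``every convex function is dominated'' is false at the generality at which you argue.

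The missing idea is to use the primal Delsarte constraints $A_i\ge0$ ($i\ge1$) through a minorant with nonnegative Krawtchouk coefficients. The paper puts $g=\mathsf K^{\mathsf T}u$, which is again completely monotone (Cohn--Zhao, Lemma~10). It then takes the cubic $h$ interpolating $g$ at the pair covering $r-1,r,s,s+1$. This $h$ satisfies $h\le g$ on $\{1,\ldots,n\}$ and $h=g$ at $r,s$. It also has nonnegative Newton coefficients in $1$, $s+1-x$, $(s+1-x)(s-x)$, and $(s+1-x)(s-x)(r-x)=\frac2{q^2}(s+1-x)L_2(x)$, each of which is positive definite. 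Hence $h=\sum_{j\le3}\eta_jK_j^{(n,q)}$ with $\eta_j\ge0$ for $j\ge1$. This gives $\sum_{w\ge1}g(w)Q_w\ge\sum_{w\ge1}h(w)Q_w=V_2\sum_j\eta_jA_j-h(0)\ge V_2\eta_0-h(0)$, with equality throughout for $Q^*$. The decisive cubic term does have your shape $L_2\cdot(\text{linear})$. Its contribution, however, is controlled by Krawtchouk positivity paired with $A_j\ge0$, not by the pointwise sign of $L_2$.

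Two smaller repairs are needed. In the quadrature step you cannot run the moment identities on a perfect code, since none is assumed. What you need is that $xL_2(x)S(x)$ has binomial average zero for linear $S$. This holds because $L_2(x)=K_2^{(n-1,q)}(x-1)$ is orthogonal to linear polynomials for the weight $x\binom nx(q-1)^x$, and it makes the three-node formula exact through degree four. Second, $b_r>0$ needs $s-\bar w>\bar w/(V_2-1)$. This follows from $s-\bar w=\frac{s-r}2-\frac{q-4}{2q}$ together with $s-r\ge2$, which is forced by $(s-r)^2=1+4(q-1)(n-2)/q^2>1$.
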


The quadrature identity and comparison argument are proved in
Appendix~\ref{app:formal-comparison}.  The new quantitative step begins with
the following strengthening.  The kernel
identity equivalent to \eqref{eq:distance-form} is
\[
 \lambda(i)=\Lambda_{n,q}
 -\frac1{q^n}\sum_{j=1}^nW_{n,q}(j)K_j^{(n,q)}(i).
\]
Moreover, \eqref{eq:formal-A-star} and $\mathsf K^2=q^nI$ imply
$Q^*=N^{-1}\mathsf KA^*$ because $NV_2=q^n$.  Substituting this formal
transform into the preceding kernel identity gives
\begin{equation}
\label{eq:formal-benchmark-distance}
 \delta_{n,q}^{(2)}=
 \Lambda_{n,q}-\frac1N\sum_{i=1}^nA_i^*\lambda(i).
\end{equation}
No code realizing $A^*$ is assumed.

Fix an arbitrary $N$-word code $\C$ and write $A=A(\C)$ for its distance
distribution.  The quantitative step that remains is to prove
\begin{equation}
\label{eq:formal-comparison-strong}
 f^{\mathsf T}(A-A^*)
 \geq\vartheta_{n,q}\mu_2^{\mathsf T}(A-A^*).
\end{equation}
The two identities that return this formal comparison to the quantities in
the theorem are
\begin{align}
\label{eq:f-gap}
 f^{\mathsf T}(A-A^*)
 &=N\bigl(\Dtwo(\C)-\delta_{n,q}^{(2)}\bigr),\\
\label{eq:mu-gap}
 \mu_2^{\mathsf T}(A-A^*)&=\Phi_2(\C)/N.
\end{align}
They will be verified at the end of the proof.  Thus
\eqref{eq:formal-comparison-strong} immediately yields
\eqref{eq:two-stability}.

A direct coordinate count gives, for $n\geq5$,
\begin{equation}
\label{eq:mu-two}
\begin{aligned}
 \mu_2(0)&=V_2,\\
 \mu_2(1)&=q\bigl(1+(n-1)(q-1)\bigr),\\
 \mu_2(2)&=q^2+2(n-2)(q-1),\\
 \mu_2(3)&=6(q-1),\\
 \mu_2(4)&=6,
\end{aligned}
\qquad
 \mu_2(w)=0\quad(w\geq5).
\end{equation}
For $w=4,3,2,1$, direct counting gives, respectively,
$6$, $6+6(q-2)$, $q^2+2(n-2)(q-1)$, and
$q+q(n-1)(q-1)$: separate the choices supported on the differing
coordinates from those using one common coordinate.  The cases $w=0$ and
$w\geq5$ are immediate.

For a function $u$ on $\{0,\ldots,n\}$, write
\[
 \mathcal D_{j,w}u\triangleq(-1)^j\Delta^ju(w),
 \qquad 0\leq j\leq n,\quad 0\leq w\leq n-j.
\]
For $1\leq t\leq n-1$, define
\[
 F_t\triangleq\mathcal D_{n-t,t}f,
 \qquad
 M_t\triangleq\mathcal D_{n-t,t}\mu_2.
\]
The right-Newton expansion
\begin{equation}
\label{eq:right-newton}
 u(w)=\sum_{t=0}^n
 \mathcal D_{n-t,t}u\binom{n-w}{n-t},
\end{equation}
with an out-of-range binomial coefficient interpreted as zero, follows by
iterating the difference recurrence and applying Pascal's identity; compare
\cite[Lemma~4]{CohnZhao2014Energy}.  A short normalization check is included
in Appendix~\ref{app:vartheta-evaluation}.  The same
appendix evaluates the fourth diagonal difference and proves
\begin{equation}
\label{eq:F4-vartheta}
 F_4=6\vartheta_{n,q}.
\end{equation}
The integral-root hypothesis forces $n\geq q+3$ when $q\geq4$;
the short arithmetic argument is included in
Appendix~\ref{app:vartheta-evaluation}.  The proof below groups the first four
Newton moments by the
nonnegative low-distance coordinates $A_1,\ldots,A_4$.

\begin{proof}[Proof of Theorem~\ref{thm:two-stability}]
Continue with the arbitrary code $\C$ and its distance distribution $A$ fixed
above.  Set
\[
 c\triangleq\vartheta_{n,q}=F_4/6,
 \qquad G_t\triangleq F_t-cM_t
 \quad(1\leq t\leq n-1),
\]
and define
\[
 B_t\triangleq
 \sum_{w=0}^n\binom{n-w}{n-t}(A_w-A_w^*).
\]
For these basis functions,
\[
 (-1)^j\Delta^j\binom{n-w}{n-t}
 =\binom{n-w-j}{n-t-j}\geq0,
\]
with an out-of-range binomial coefficient interpreted as zero.  Hence the
binomial basis functions are completely monotone, and
Lemma~\ref{lem:formal-two-comparison} gives $B_t\geq0$.
The right-Newton expansion, with the fixed diagonal and total mass removed,
is
\begin{equation}
\label{eq:formal-perturbation}
 (f-c\mu_2)^{\mathsf T}(A-A^*)
 =\sum_{t=1}^{n-1}G_tB_t.
\end{equation}
For $1\leq t\leq4$, \eqref{eq:A-star-low} gives
\begin{equation}
\label{eq:low-binomial-moments}
 B_t=\sum_{i=1}^t\binom{n-i}{t-i}A_i.
\end{equation}

Direct substitution of \eqref{eq:mu-two} gives
\begin{equation}
\label{eq:M-diagonal}
\begin{aligned}
 M_1&=(m-1)^2(q-m),\\
 M_2&=q^2-4(m-1)(q-1)+3(m-1)(m-2),\\
 M_3&=6(q+1-m),\qquad M_4=6,
 \qquad M_t=0\quad(t\geq5).
\end{aligned}
\end{equation}
Since $M_4=6$ and $F_4=6c$, we have $G_4=0$.  This is the largest value of
$c$ allowed by the coefficientwise regrouping below: the $A_4$-coefficient is
$F_4-6c$.  Strict complete monotonicity of $f$ gives $F_t>0$.  Moreover,
$n\geq q+3$ implies $M_3=6(q+1-m)\leq-6<0$, so $G_3>0$; and
$G_t=F_t>0$ for $5\leq t\leq n-1$.

It remains to group the first four terms of
\eqref{eq:formal-perturbation}.  For $1\leq i\leq4$, define
\[
 H_i\triangleq
 \sum_{t=i}^4\binom{n-i}{t-i}G_t.
\]
Equation \eqref{eq:low-binomial-moments} then gives
\[
 \sum_{t=1}^4G_tB_t
 =H_1A_1+H_2A_2+H_3A_3+H_4A_4.
\]
Since $G_4=0$,
\[
 \begin{aligned}
 H_4&=0, & H_3&=G_3,\\
 H_2&=G_2+(m-1)G_3,
 &H_1&=G_1+mG_2+\binom m2G_3.
 \end{aligned}
\]
Put $\nu=m-q\geq2$.  Substitution from \eqref{eq:M-diagonal} gives
\begin{align*}
 T_2&\triangleq M_2+(m-1)M_3
 =-(3\nu-4)(\nu-1)-q(4\nu-5)<0,\\
 2T_1&\triangleq2M_1+2mM_2+m(m-1)M_3\\
 &=-2\nu(\nu-1)(\nu-2)-2q(3\nu^2-5\nu+1)
   -4q^2(\nu-1)<0.
\end{align*}
Therefore
\[
 H_2=F_2+(m-1)F_3-cT_2>0,
 \qquad
 H_1=F_1+mF_2+\binom m2F_3-cT_1>0.
\]
Since every $A_i$ is nonnegative and every remaining $B_t$ is nonnegative,
\eqref{eq:formal-perturbation} proves
\eqref{eq:formal-comparison-strong}.

The distance form \eqref{eq:distance-form}, equality of the total masses, and
\eqref{eq:formal-benchmark-distance} give \eqref{eq:f-gap}.  On the other
hand, double counting ordered pairs of radius-two balls gives
\[
 \Phi_2(\C)=N\bigl(\mu_2^{\mathsf T}A-V_2\bigr).
\]
By \eqref{eq:A-star-low} and the support of $\mu_2$ in
\eqref{eq:mu-two}, $\mu_2^{\mathsf T}A^*=V_2$, which proves
\eqref{eq:mu-gap}.
Combining \eqref{eq:formal-comparison-strong}--\eqref{eq:mu-gap} proves
\eqref{eq:two-stability}.  If $\Pcode$ is perfect, then
Proposition~\ref{prop:lloyd-support} supports its nonconstant dual
distribution on the two zeros $r,s$ of $L_2$.  Equations
\eqref{eq:dual-mass} and \eqref{eq:dual-factorial-moments} with $j=1$,
together with $A_1(\Pcode)=0$, give
\[
 \Adual_r(\Pcode)+\Adual_s(\Pcode)=V_2-1,
 \qquad
 r\Adual_r(\Pcode)+s\Adual_s(\Pcode)=V_2\bar w.
\]
Thus
\[
 \Adual_r(\Pcode)=b_r,
 \qquad \Adual_s(\Pcode)=b_s.
\]
The spectral discrepancy formula now gives
$\Dtwo(\Pcode)=\delta_{n,q}^{(2)}$.  If no perfect code exists,
\eqref{eq:defect-discrete-range} gives $\Phi_2\geq4$, proving
\eqref{eq:two-gap}.
\end{proof}

This gives a quantitative refinement of universal-optimality comparisons
before realizability of the formal benchmark is known
\cite{AshikhminBarg1999Binomial,CohnZhao2014Energy}.

The arithmetic reduction in the following lemma also appeared in
\cite{Zabokritskiy2026Perfect}; its proof is reproduced in
Appendix~\ref{app:reproduced-prerequisites}.
\begin{lemma}[The two smallest alphabets]
\label{lem:small-q-two-parameters}
Let $q\in\{2,3\}$ and $n\geq5$.  Assume that $N=q^n/V_2$ is an integer and
that $L_2$ has two distinct integral roots $r<s$ in
$\{1,\ldots,n\}$.  Then
\[
 (q,n,N,r,s)=(2,5,2,2,4)
 \quad\text{or}\quad
 (3,11,3^6,6,9).
\]
\end{lemma}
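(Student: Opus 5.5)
The plan is to show that the divisibility hypothesis forces $V_2$ to be a power of the prime $q$. This turns the problem into two classical exponential Diophantine equations. The Lloyd-root hypothesis is then used only to discard the spurious solutions.

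\emph{Reduction.} Since $q\in\{2,3\}$ is prime and $V_2\mid q^n$, we have $V_2=q^a$ for some $a\geq0$. For $q=2$ the definition gives
\[
 V_2=1+n+\binom n2=\frac{n^2+n+2}2 .
\]
Hence $2^{a+1}=n^2+n+2$, which after multiplying by $4$ reads
\[
 (2n+1)^2+7=2^{a+3}.
\]
This is the Ramanujan--Nagell equation. By Nagell's theorem its solutions are $2n+1\in\{1,3,5,11,181\}$, that is, $n\in\{0,1,2,5,90\}$. For $q=3$ we get
\[
 V_2=1+2n+2n(n-1)=2n^2+1=3^a .
\]
The classical solution of $2x^2+1=3^a$ (Nagell, Ljunggren) gives $n\in\{0,1,2,11\}$. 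I will cite these two results rather than reprove them. This is the only deep input and the step I expect to be the main obstacle. If a more self-contained route is wanted, I would first try to combine the Lloyd integrality condition (a square discriminant) with the power condition, to obtain an elementary bound on $n$ followed by a finite check.

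\emph{Filtering by the Lloyd roots.} The restriction $n\geq5$ leaves $n\in\{5,90\}$ for $q=2$ and $n=11$ for $q=3$. I will test each case with the quadratic identity
\[
 2L_2(w)=q^2w^2-q\bigl(2(q-1)(n-2)+3q\bigr)w+2V_2
\]
recorded in the remark after Theorem~\ref{thm:two-stability}.
\begin{itemize}
\item For $(q,n)=(2,90)$ we have $V_2=4096$, and $2L_2(w)=4w^2-364w+8192$. Its discriminant $364^2-16\cdot8192=1424$ is not a perfect square, so there are no integral roots and this case is excluded.
\item For $(q,n)=(2,5)$ we have $V_2=16$ and $N=2$, and $2L_2(w)=4(w^2-6w+8)$, with roots $r=2$ and $s=4$.
\item For $(q,n)=(3,11)$ we have $V_2=243=3^5$ and $N=3^6$, and $2L_2(w)=9(w^2-15w+54)$, with roots $r=6$ and $s=9$.
\end{itemize}
In both surviving cases the roots are distinct and lie in $\{1,\ldots,n\}$. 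This yields exactly the two tuples in the statement.
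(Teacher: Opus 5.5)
Your proposal is correct and follows the paper's proof. In both, $V_2\mid q^n$ forces $V_2=q^a$; Ramanujan--Nagell for $q=2$ leaves $n\in\{0,1,2,5,90\}$, and Ljunggren's theorem for $2n^2+1=3^a$ leaves $n=11$; Lloyd-root integrality then discards $n=90$. The only cosmetic difference is that the paper rejects $n=90$ through the closed-form roots $\frac{n+1\pm\sqrt{n-1}}{2}$ with $\sqrt{89}$ irrational, whereas you compute the discriminant $1424=16\cdot 89$ directly, which is the same fact.
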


For $q=2,3$, complete monotonicity fails, but
Lemma~\ref{lem:small-q-two-parameters} shows that the arithmetic hypotheses
leave only the binary length-five and ternary Golay parameters.  We treat
them next.

\subsection{The binary case and repetition codes}

Let $e\geq1$, put $n=2e+1$, and let
$\mathcal R_n=\{0^n,1^n\}\subseteq\{0,1\}^n$.  For a two-word binary code,
an isometry leaves only the distance $d$ between its words.  Write
$\mu_t(d)$ for the intersection size of two radius-$t$ balls at distance
$d$.

Keevash and Long prove an exact nonnegative decrement formula for these
intersection numbers \cite[Lemma~4.11]{KeevashLong2020Vertex}.  We include a
direct injection proof of the monotonicity needed here.

\begin{lemma}
\label{lem:binary-intersection-monotone}
For fixed $n$ and every $0\leq t\leq n$, the sequence
$(\mu_t(d))_{d=0}^n$ is nonincreasing in $d$.
\end{lemma}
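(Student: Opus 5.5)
To show that $\mu_t(d)\geq\mu_t(d+1)$ for $0\leq d<n$ and fixed $t$, I will build an explicit injection. Let $x$ be the word whose first $d+1$ coordinates equal $1$ and whose other coordinates equal $0$. Because the binary Hamming space is distance-transitive, $\mu_t(d)=|B(0,t)\cap B(x',t)|$, where $x'$ is $x$ with coordinate $d+1$ reset to $0$. Similarly, $\mu_t(d+1)=|B(0,t)\cap B(x,t)|$. It therefore suffices to inject the set $S=B(0,t)\cap B(x,t)$ into $S'=B(0,t)\cap B(x',t)$.

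For $y\in S$ I split into two cases according to the coordinate $y_{d+1}$, which is the only coordinate in which $x$ and $x'$ differ.
\begin{itemize}
\item[$\bullet$] If $y_{d+1}=0$, then $d(y,x')=d(y,x)-1\leq t$. Such a $y$ already lies in $S'$, and I map it to itself.
\item[$\bullet$] If $y_{d+1}=1$, then $d(y,x')=d(y,x)+1$, which may reach $t+1$. Here I let $y^\ast$ be $y$ with coordinate $d+1$ flipped to $0$. This gives $\wt(y^\ast)=\wt(y)-1\leq t$ and $d(y^\ast,x')=d(y,x)\leq t$, so $y^\ast\in S'$.
\end{itemize}
The obvious candidate map $\phi$ takes $y\mapsto y$ in the first case and $y\mapsto y^\ast$ in the second. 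Its image in the second case consists of words with $(d+1)$-st coordinate $0$, so it can collide with the identity part. An unconditional flip is therefore the wrong choice.

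I refine the map by flipping only when it is necessary. Set $\phi(y)=y$ whenever $y\in S'$. This covers every $y$ with $y_{d+1}=0$, and also every $y$ with $y_{d+1}=1$ and $d(y,x)\leq t-1$. The only remaining words are those with $y_{d+1}=1$ and $d(y,x)=t$, for which $d(y,x')=t+1$. For these I set $\phi(y)=y^\ast$.

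Injectivity requires checking that such a $y^\ast$ is not itself an element of $S$ fixed by the first rule. I plan to show that $y^\ast\notin S$. We have $d(y^\ast,x)=d(y,x)+1=t+1>t$, so $y^\ast\notin B(x,t)$. Hence $y^\ast$ lies in $S'\setminus S$, and the images of the two rules are disjoint. Within the second rule, the map $y\mapsto y^\ast$ is injective because it is a fixed coordinate flip. This gives $|S|\leq|S'|$.

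The only delicate step is the collision issue just handled, namely the choice to flip only on the boundary shell $d(y,x)=t$. The extreme cases $t=0$ and $t=n$ are covered by the same argument, or can be checked trivially.
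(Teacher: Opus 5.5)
Your proof is correct and follows essentially the same injection as the paper. You fix every word already in the distance-$d$ intersection, flip the differing coordinate only for words on the boundary shell $d(y,x)=t$, and rule out collisions because each flipped image lies at distance $t+1$ from $x$.
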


\begin{proof}
Normalize the two pairs of centers to
$0^n,1^d0^{n-d}$ and $0^n,1^{d+1}0^{n-d-1}$.  A word in the latter
intersection already belongs to the former unless it has a one in the newly
changed coordinate and lies exactly on the boundary of the second ball.  In
that exceptional case, flip the new coordinate to zero.  The image remains
in both former balls.  A flipped image cannot collide with a fixed image:
the word with zero in the new coordinate would lie one step outside the
latter second ball.  This defines an injection from the distance-$(d+1)$
intersection into the distance-$d$ intersection.
\end{proof}

\begin{theorem}[Exact repetition coefficient]
\label{thm:repetition}
Every two-word binary code $\C\subseteq\{0,1\}^{2e+1}$ satisfies
\begin{equation}
\label{eq:repetition-stability}
 \Dtwo(\C)-\Dtwo(\mathcal R_n)
 \geq\frac{\Phi_e(\C)}{N^2},
 \qquad N=2.
\end{equation}
The coefficient one is best possible for every odd $n$.
\end{theorem}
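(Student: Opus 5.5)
The plan is to argue radius by radius, directly from \eqref{eq:discrepancy}, rather than through the spectral or distance representations. Here $N=2$ and $V_e=2^{n-1}=q^n/N$, so \eqref{eq:perfect-cardinality} holds. Up to isometry, a two-word code is $\C_d=\{0^n,1^d0^{n-d}\}$ with $1\leq d\leq n$, and $\mathcal R_n=\C_n$.

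First I will compute the radius-$t$ summand of $\Dtwo(\C_d)$. Write $m_t(x)=|\C_d\cap B(x,t)|$. Then $\sum_x m_t(x)=2V_t$, and counting pairs as in the proof of \eqref{eq:overlap-identity} gives $\sum_x m_t(x)^2=2V_t+2\mu_t(d)$. Hence the summand equals
\[
 \frac14\bigl(2V_t+2\mu_t(d)\bigr)-\frac{V_t^2}{2^n},
\]
which is an affine, strictly increasing function of $\mu_t(d)$.

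Second, I will invoke Lemma~\ref{lem:binary-intersection-monotone}, which gives $\mu_t(d)\geq\mu_t(n)$ for every $t$. So every radius-$t$ summand of $\C_d$ is at least the corresponding summand of $\mathcal R_n$. At $t=e$, the summand of $\C_d$ is exactly $\Phi_e(\C_d)/N^2$, as noted in the Introduction. The corresponding summand of $\mathcal R_n$ vanishes because $\mathcal R_n$ is perfect. Dropping the nonnegative differences for $t\neq e$ and keeping the full $t=e$ term then yields \eqref{eq:repetition-stability}.

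For sharpness, I need some $d$ with $\Phi_e(\C_d)>0$ whose radius-$t$ summands agree with those of $\mathcal R_n$ for every $t\neq e$. I will take $d=n-1=2e$. For $t<e$ we have $2t<2e$, so the radius-$t$ balls around the two codewords are disjoint at both distances $n-1$ and $n$. Thus $\mu_t(n-1)=\mu_t(n)=0$, and the summands agree. The summand at $t=n$ is zero for every code.

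For $e<t\leq n-1$ I will use complementation. Write $\bar x$ for the complement of $x$, so that $d(\bar x,z)=n-d(x,z)$. A codeword lies outside $B(x,t)$ exactly when it lies inside $B(\bar x,n-1-t)$, so
\[
 m_t(x)=2-m_{n-1-t}(\bar x).
\]
Since $V_t+V_{n-1-t}=2^n$, the centered deviations at radius $t$ are the negatives of those at radius $n-1-t$, evaluated at $\bar x$. Hence the radius-$t$ and radius-$(n-1-t)$ summands coincide for every two-word code. As $n-1-t<e$, this reduces the range $t>e$ to the range $t<e$ already handled.

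Finally, $\Phi_e(\C_{2e})\geq4>0$ by \eqref{eq:defect-discrete-range}, because two radius-$e$ balls at distance $2e$ have at least two common midpoints. So equality holds in \eqref{eq:repetition-stability} with a nonzero defect, and no larger coefficient works. The main obstacle is this sharpness step, specifically controlling the radii $t>e$. The complementation symmetry is what I rely on there; if it failed, I would instead compute $\mu_t(2e)$ and $\mu_t(2e+1)$ directly by coordinate counting.
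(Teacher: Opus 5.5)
Your proof is correct and is essentially the paper's argument: summing your per-radius formula $\frac{V_t+\mu_t(d)}{2}-\frac{V_t^2}{2^n}$ over $t$ reproduces the paper's threshold identity $\lambda(w)=\sum_t(V_t-\mu_t(w))$ combined with the distance formula, and both proofs then apply Lemma~\ref{lem:binary-intersection-monotone} termwise and isolate the $t=e$ term. Your sharpness step matches the paper's: disjointness for $t<e$, complementation reducing $t>e$ to radius $n-1-t<e$, and triviality at $t=n$; your pointwise identity $m_t(x)=2-m_{n-1-t}(\bar x)$ justifies the complementation step a little more cleanly, and you need only $d=n-1$, whereas the paper also records equality at $d=n-2$.
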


\begin{proof}
The threshold form of the distance kernel is
\begin{equation}
\label{eq:lambda-intersections}
 \lambda(w)=\sum_{t=0}^n\bigl(V_t-\mu_t(w)\bigr).
\end{equation}
Indeed, $|a-b|$ counts the thresholds that separate $a$ and $b$, and two
radius-$t$ balls have the same size.  By
Lemma~\ref{lem:binary-intersection-monotone},
\[
 \lambda(n)-\lambda(d)
 =\sum_{t=0}^n\bigl(\mu_t(d)-\mu_t(n)\bigr)
 \geq\mu_e(d),
\]
because the two radius-$e$ balls at antipodal centers are disjoint.
The distance formula and the overlap identity give
\[
 \Dtwo(\C)-\Dtwo(\mathcal R_n)
 =\frac{\lambda(n)-\lambda(d)}2,
 \qquad
 \Phi_e(\C)=2\mu_e(d).
\]
Since $N=2$, this proves \eqref{eq:repetition-stability}.

Equality holds at both $d=n-2$ and $d=n-1$.  Indeed, for either distance and
$t<e$, the two radius-$t$ balls are disjoint, so
$\mu_t(d)=\mu_t(n)=0$.  For $e<t<n$, inclusion--exclusion and complementation
reduce $\mu_t(d)-\mu_t(n)$ to the corresponding difference for balls of radius
$n-t-1<e$ at the same center distances; these smaller balls are again
disjoint.  At $t=n$, both balls are the whole space, so
$\mu_n(d)=\mu_n(n)=2^n$.  Thus only the term $t=e$ contributes to
\eqref{eq:lambda-intersections}.  Direct counting gives
\[
 \mu_e(n-2)=\mu_e(n-1)=\binom{2e}{e},
\]
and hence
\[
 \lambda(n)-\lambda(d)=\mu_e(d)
 \qquad(d\in\{n-2,n-1\}).
\]
Therefore \eqref{eq:repetition-stability} is an equality at both distances,
and the coefficient cannot be increased.
\end{proof}

\subsection{The ternary case and the all-alphabet theorem}

The following proposition is a quantitative refinement of the exact ternary
Golay result.  The unperturbed quadratic $p_0$ below appeared in
\cite[Appendix~D]{Zabokritskiy2026Perfect}; we reproduce its coefficients,
all moment evaluations, and the complete finite comparison so that the
argument here is self-contained.

\begin{proposition}[Ternary Golay stability]
\label{prop:ternary-golay-stability}
Every code $\C\subseteq\mathbb F_3^{11}$ of size $N=3^6$ satisfies
\begin{equation}
\label{eq:ternary-golay-stability}
\Dtwo(\C)-\frac{7\,446\,692}{3^{11}}
 \geq\frac{274}{9}\frac{\Phi_2(\C)}{N^2}.
\end{equation}
Every perfect two-error-correcting code attains the benchmark on the left.
Consequently,
\[
 \min_{\substack{\C\subseteq\mathbb F_3^{11}\\|\C|=3^6}}\Dtwo(\C)
 =\frac{7\,446\,692}{3^{11}},
\]
and equality in this minimum holds precisely for perfect
two-error-correcting codes.
\end{proposition}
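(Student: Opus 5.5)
The plan is to work entirely in spectral coordinates, as in the one-error proof, with a polynomial minorant that is adapted to the two Lloyd roots $r=6$, $s=9$. By Lemma~\ref{lem:small-q-two-parameters}, $(q,n,N)=(3,11,3^6)$, and $L_2(w)=c_w(2)$ vanishes exactly at $w=6,9$. Write $Q_w=\Adual_w(\C)$ and $W(w)=W_{11,3}(w)$.

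\textbf{Constraints on $Q$.} Lemma~\ref{lem:dual-factorial-moments} gives $\sum_{w\ge1}Q_w=V_2-1=242$ and $\sum_w wQ_w=V_2\bar w-c_1A_1$ for an explicit $c_1>0$. The second factorial moment gives $\sum_w w(w-1)Q_w$ as an affine expression in $A_1,A_2$. The defect identity \eqref{eq:defect-parseval} gives
\[
 \frac{\Phi_2(\C)}{N^2}=\frac1{3^{11}}\sum_w Q_wL_2(w)^2 .
\]
The benchmark value is $\delta^{(2)}_{11,3}=(b_6W(6)+b_9W(9))/3^{11}$ from \eqref{eq:two-benchmark}. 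I will confirm that this equals $7\,446\,692/3^{11}$ by evaluating $W(6)$ and $W(9)$ through the coefficient-norm formula \eqref{eq:W-coefficient-norm}. The fact that a perfect code attains it follows verbatim from the Lloyd-support argument at the end of the proof of Theorem~\ref{thm:two-stability}.

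\textbf{Choice of minorant.} Following the unperturbed certificate, I take the quadratic $p_0$ that interpolates $W$ at $6$ and $9$, chosen so that $W-p_0\ge0$ on $\{1,\dots,11\}$. I perturb it to
\[
 p_\sigma(w)=p_0(w)+\sigma L_2(w)^2\,\cdot(\text{correction}).
\]
Cleanly, I want $W(w)\ge \alpha+\beta w+\sigma L_2(w)^2$ for all $1\le w\le11$, with $\sigma=274/9$. Since $L_2(6)=L_2(9)=0$, we get $\alpha+\beta\cdot6=W(6)$ and $\alpha+\beta\cdot9=W(9)$, which fixes the affine part as the secant through $(6,W(6))$ and $(9,W(9))$. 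Summing against $Q_w$ then gives
\[
 3^{11}\bigl(\Dtwo(\C)-\delta\bigr)\ge \sigma\sum_wQ_wL_2^2+\beta\Bigl(\sum_w wQ_w-V_2\bar w\Bigr)=\sigma\sum_wQ_wL_2^2-\beta c_1A_1 .
\]
The last term has the wrong sign if $\beta<0$. This is why a pure secant is likely insufficient and $p_0$ needs curvature. I therefore allow a quadratic affine-in-moments part $\alpha+\beta w+\gamma w(w-1)$ whose moment is controlled by the second factorial moment, and require the leftover linear combination of $A_1,A_2$ to be nonnegative. This gives finitely many linear conditions on $(\beta,\gamma)$ given $\sigma$. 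Any leftover multiple of $A_1,A_2$ can alternatively be absorbed using $\Phi_2\ge0$ terms: $\Phi_2/N$ is itself $\mu_2^{\mathsf T}A-V_2$.

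\textbf{Main obstacle.} The hard step is the finite verification. It requires exact rational values of $W(1),\dots,W(11)$ and of $L_2(w)$, and checking the eleven inequalities $W(w)\ge p(w)+\sigma L_2(w)^2$ together with the sign conditions on the $A_1$ and $A_2$ coefficients. I expect $\sigma=274/9$ to be the largest value for which one of these pointwise inequalities is tight, probably at $w=11$ or at a weight adjacent to a root. I would first solve for the optimal $(\beta,\gamma)$ as a small linear program in exact arithmetic, then present the resulting ratio table as the certificate.

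\textbf{Conclusion.} The minimum statement follows by combining the inequality with attainment by a perfect code. Equality characterization follows from $\Phi_2=0$ iff perfect, by Lemma~\ref{lem:defect-identities}.
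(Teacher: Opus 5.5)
Your plan is essentially the paper's proof. The paper takes the quadratic $p_\tau(w)=390\,544-93\,498w+6\,696w(w-1)$, which interpolates $W_{11,3}$ at the Lloyd roots $6$ and $9$. It verifies $W_{11,3}(w)-p_\tau(w)\geq\frac{274}{9}c_w(2)^2$ by an exact ratio table, which is tight at $w=7,10$, the weights adjacent to the roots. It then uses the mass and first two factorial moments to get $\sum_w\Adual_w p_\tau(w)=7\,446\,692+341\,658A_1+361\,584A_2$. This is exactly the sign-conditioned one-parameter linear program you describe.

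Two small corrections are needed. First, for the pure secant the term $-\beta c_1A_1$ is harmful when $\beta>0$, not when $\beta<0$. The secant slope is $+246$, which is why curvature is needed. Second, the ``consequently'' clause about the minimum requires that a perfect code with these parameters actually exists, so you should invoke the ternary Golay code explicitly.
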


\begin{proof}
At these parameters,
\[
 V_2=1+2\binom{11}{1}+4\binom{11}{2}=243,
 \qquad
 c_w(2)=\frac92(w-6)(w-9).
\]
Define
\[
 p_0(w)\triangleq
 323\,962-76\,236w+5\,463w(w-1),
\qquad
 \tau\triangleq\frac{274}{9},
\]
and perturb the quadratic by
\begin{equation}
\label{eq:ternary-perturbed-polynomial}
\begin{aligned}
 p_\tau(w)
 &\triangleq p_0(w)+1\,233(w-6)(w-9)\\
 &=390\,544-93\,498w+6\,696w(w-1).
\end{aligned}
\end{equation}
The exact finite calculation in
Appendix~\ref{app:finite-certificates} gives
\begin{equation}
\label{eq:ternary-pointwise}
 W_{11,3}(w)-p_\tau(w)
 \geq\tau c_w(2)^2
 \qquad(1\leq w\leq11).
\end{equation}

Equations \eqref{eq:dual-mass} and
\eqref{eq:dual-factorial-moments}, specialized to
$n=11,q=3,N=3^6$, give
\begin{align*}
 M_0&\triangleq\sum_{w=1}^{11}\Adual_w=242,\\
 M_1&\triangleq\sum_{w=1}^{11}w\Adual_w
      =1782-81A_1,\\
 M_2&\triangleq\sum_{w=1}^{11}w(w-1)\Adual_w
      =11880-1080A_1+54A_2.
\end{align*}
Hence \eqref{eq:ternary-perturbed-polynomial} yields
\begin{equation}
\label{eq:ternary-moment-perturbed}
\begin{aligned}
 \sum_{w=1}^{11}\Adual_wp_\tau(w)
 &=390\,544M_0-93\,498M_1+6\,696M_2\\
 &=7\,446\,692+341\,658A_1+361\,584A_2\\
 &\geq7\,446\,692.
\end{aligned}
\end{equation}
Summing \eqref{eq:ternary-pointwise} against $\Adual_w$ and using
\eqref{eq:spectral-discrepancy},
\eqref{eq:defect-parseval}, and
\eqref{eq:ternary-moment-perturbed} proves
\eqref{eq:ternary-golay-stability}.

The exact perfect-code benchmark evaluation is reproduced in
Appendix~\ref{app:finite-certificates}.
The classical ternary Golay code is a perfect code with these parameters
\cite{MacWilliamsSloane1977}, so the lower bound is attained.
Conversely, if a code attains this benchmark, then
\eqref{eq:ternary-golay-stability} forces $\Phi_2(\C)=0$.
Because $NV_2=3^{11}$, Lemma~\ref{lem:defect-identities} shows that
$\C$ is perfect.
\end{proof}

On the admissible two-error parameter sets, define
\begin{equation}
\label{eq:all-q-two-constant}
 \sigma_{n,q}^{(2)}\triangleq
 \begin{cases}
  \vartheta_{n,q},&q\geq4,\\
  1,&(q,n)=(2,5),\\
  274/9,&(q,n)=(3,11).
 \end{cases}
\end{equation}

\begin{corollary}[All-alphabet two-error stability]
\label{cor:all-q-two-stability}
Let $G$ be a finite abelian group of order $q\geq2$, and let $n\geq5$.
Assume that $N=q^n/V_2$ is an integer and that
$L_2$ has two distinct integral roots in $\{1,\ldots,n\}$.  Then every
$N$-word code $\C\subseteq G^n$ satisfies
\begin{equation}
\label{eq:all-q-two-stability}
 \Dtwo(\C)-\delta_{n,q}^{(2)}
 \geq\sigma_{n,q}^{(2)}\frac{\Phi_2(\C)}{N^2}.
\end{equation}
If no perfect two-error code exists, then the minimum is at least
$\delta_{n,q}^{(2)}+4\sigma_{n,q}^{(2)}/N^2$.
\end{corollary}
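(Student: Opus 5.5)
The plan is to split by alphabet size and reduce each case to a result already established. For $q\geq4$ nothing new is needed. The hypotheses of the corollary are exactly those of Theorem~\ref{thm:two-stability}, and $\sigma^{(2)}_{n,q}=\vartheta_{n,q}$ there, so \eqref{eq:all-q-two-stability} is \eqref{eq:two-stability}. For $q\in\{2,3\}$, I would invoke Lemma~\ref{lem:small-q-two-parameters}. It shows that the arithmetic hypotheses leave only $(q,n,N,r,s)=(2,5,2,2,4)$ and $(3,11,3^6,6,9)$. The remaining work is to identify, in each of these two cases, the benchmark used by the direct argument with the formal benchmark $\delta^{(2)}_{n,q}$.

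The key step is an observation about the end of the proof of Theorem~\ref{thm:two-stability}. The part showing that a perfect two-error code $\Pcode$ attains $\delta^{(2)}_{n,q}$ never used $q\geq4$. It used only three facts:
\begin{enumerate}
\item Proposition~\ref{prop:lloyd-support}, which places the nonconstant dual spectrum on the roots $r,s$;
\item the mass and first factorial moment identities \eqref{eq:dual-mass} and \eqref{eq:dual-factorial-moments} with $j=1$, together with $A_1(\Pcode)=0$, which force $\Adual_r(\Pcode)=b_r$ and $\Adual_s(\Pcode)=b_s$;
\item the spectral formula \eqref{eq:spectral-discrepancy}.
\end{enumerate}
Hence, whenever a perfect two-error code exists at admissible parameters with $r\neq s$, we have $\Dtwo(\Pcode)=\delta^{(2)}_{n,q}$ for every $q\geq2$. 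I would restate this explicitly as a remark valid for all $q$.

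For $(2,5)$: the repetition code $\mathcal R_5=\{0^5,1^5\}$ is a perfect two-error code, since $2V_2=2\cdot16=32=2^5$. By the observation, $\Dtwo(\mathcal R_5)=\delta^{(2)}_{5,2}$. Theorem~\ref{thm:repetition} with $e=2$ then gives \eqref{eq:all-q-two-stability} with coefficient $1$.

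For $(3,11)$: the ternary Golay code is perfect, and Proposition~\ref{prop:ternary-golay-stability} shows it attains $7\,446\,692/3^{11}$. The observation shows it also attains $\delta^{(2)}_{11,3}$, so the two benchmarks coincide. The proposition then supplies the coefficient $274/9$.

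The final claim is a direct consequence. If no perfect two-error code exists, every $N$-word code has $\Phi_2(\C)\geq4$ by \eqref{eq:defect-discrete-range}, which gives the lower bound $\delta^{(2)}_{n,q}+4\sigma^{(2)}_{n,q}/N^2$, as in \eqref{eq:discrete-gap-abstract}. In the two small cases this statement is vacuous, because perfect codes exist there.

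The only step needing care is the benchmark identification. In particular, I must check that the Lloyd-support moment argument is genuinely alphabet-independent, and that $A_1(\Pcode)=0$ holds for a perfect two-error code, which it does because the minimum distance is $5$.
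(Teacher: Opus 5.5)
Your proposal is correct and follows the paper's proof: the same split into $q\geq4$ (Theorem~\ref{thm:two-stability}) and the two residual cases from Lemma~\ref{lem:small-q-two-parameters}, handled by Theorem~\ref{thm:repetition} and Proposition~\ref{prop:ternary-golay-stability}, with the gap statement coming from \eqref{eq:defect-discrete-range}. The one difference is how you identify the small-case benchmarks with $\delta^{(2)}_{n,q}$. The paper substitutes directly into \eqref{eq:formal-masses} and \eqref{eq:two-benchmark}, a parameter-only computation (e.g.\ $b_6=132$, $b_9=110$ at $(3,11)$), whereas you combine the existence of $\mathcal R_5$ and the ternary Golay code with the alphabet-independent Lloyd-support attainment argument, which is equally valid.
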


\begin{proof}
For $q\geq4$, this is Theorem~\ref{thm:two-stability}.  For $q=2,3$,
Lemma~\ref{lem:small-q-two-parameters} leaves only $(2,5,2)$ and
$(3,11,3^6)$.  Theorem~\ref{thm:repetition} and Proposition
\ref{prop:ternary-golay-stability} give the estimates.  Direct substitution in
\eqref{eq:formal-masses} and \eqref{eq:two-benchmark} identifies their
left-hand benchmarks with $\delta_{5,2}^{(2)}$ and
$\delta_{11,3}^{(2)}$, respectively.  The gap statement follows from
Proposition~\ref{prop:transfer}.
\end{proof}

\section{The remaining perfect codes}
\label{sec:remaining-perfect}

The one-error theorem and the all-alphabet two-error theorem above cover every
perfect code correcting one or two errors.  It remains to record a
certificate for the binary Golay code and then invoke the known parameter
classification to complete
Theorem~\ref{thm:global-stability}.

The centered quartic used below is the exact-minimization certificate
recorded in \cite[Appendix~E]{Zabokritskiy2026Perfect}.  We include its
falling-factorial expansion, moment evaluation, symmetry argument, and
complete finite comparison.

\begin{proposition}[Binary Golay stability]
\label{prop:binary-golay-stability}
Every code $\C\subseteq\mathbb F_2^{23}$ of size $N=2^{12}$ satisfies
\begin{equation}
\label{eq:binary-golay-stability}
 \Dtwo(\C)-\frac{3\,277\,860\,456}{2^{23}}
 \geq\frac{4136}{25}\frac{\Phi_3(\C)}{N^2}.
\end{equation}
Every perfect three-error-correcting code attains the benchmark on the left.
Consequently,
\[
 \min_{\substack{\C\subseteq\mathbb F_2^{23}\\|\C|=2^{12}}}\Dtwo(\C)
 =\frac{3\,277\,860\,456}{2^{23}},
\]
and equality in this minimum holds precisely for perfect
three-error-correcting codes.
\end{proposition}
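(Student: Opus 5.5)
Following the pattern of the ternary Golay proof (Proposition~\ref{prop:ternary-golay-stability}), I will build a polynomial certificate in the weight variable. At $(q,n,N,e)=(2,23,2^{12},3)$ we have $V_3=2048$. The Lloyd polynomial $c_w(3)=K_3^{(22,2)}(w-1)$ is a cubic with integral roots $w=8,12,16$. Indeed, the binary Golay code has nonzero dual weights $8,12,16$, and by Proposition~\ref{prop:lloyd-support} these are the only possible ones. I will take the centered quartic $p_0(w)$ from \cite[Appendix~E]{Zabokritskiy2026Perfect}. It minorizes $W_{23,2}$ and touches it at $8,12,16$. I then perturb it to
\[
 p_\tau(w)=p_0(w)+\beta\,(w-8)(w-12)(w-16)(w-12)
\]
with a suitable $\beta$ and $\tau=4136/25$. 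The extra factor $(w-12)$ keeps the perturbation even about the center $12=\bar w+1/2$, which should make the symmetry argument work. I will expand $p_\tau$ in the falling-factorial basis $(w)_j$ for $j\le4$.

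The key steps are these. First, I verify the pointwise inequality
\[
 W_{23,2}(w)-p_\tau(w)\geq\tau\,c_w(3)^2\qquad(1\le w\le 23)
\]
by an exact finite computation over the $23$ weights, with the table placed in Appendix~\ref{app:finite-certificates}. Second, I evaluate $\sum_w\Adual_wp_\tau(w)$ using \eqref{eq:dual-mass} and \eqref{eq:dual-factorial-moments} for $j\le4$. This gives the constant $3\,277\,860\,456$ plus a linear combination $\sum_{i=1}^4 \alpha_iA_i$. I must check that every $\alpha_i\ge0$; this is where $\beta$ is tuned, and the symmetry of the quartic helps control the odd moments. Third, I sum the pointwise inequality against $\Adual_w\ge0$ and apply \eqref{eq:spectral-discrepancy} and \eqref{eq:defect-parseval}. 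This gives $2^{23}\Dtwo(\C)\ge 3\,277\,860\,456+\tau\,2^{23}\Phi_3/N^2$, which is \eqref{eq:binary-golay-stability}.

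For attainment, a perfect code has $A_1=\dots=A_4=0$ and dual support in $\{8,12,16\}$. There $W=p_0$ and $c_w(3)=0$, so the benchmark is equal to $\Dtwo(\Pcode)$, and the binary Golay code realizes it. Conversely, equality forces $\Phi_3(\C)=0$, so $\C$ is perfect by Lemma~\ref{lem:defect-identities}. The main obstacle is choosing $\beta$ (and hence the largest $\tau$) so that the pointwise inequality and the nonnegativity of the $\alpha_i$ hold simultaneously. I would first set $\beta$ so that one constraint is tight, then compute the ratio $(W-p_\tau)/c_w(3)^2$ at every weight off the roots, and take $\tau$ to be its minimum. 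I expect that minimum to come out as $4136/25$.
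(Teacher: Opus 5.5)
Your outline is the paper's proof with one extra degree of freedom that the paper does not use. The paper takes $\beta=0$, i.e.\ it uses the Appendix~E quartic $p(w)=705\,432-8\,992(w-12)^2+10\,000(w-12)^4$ unchanged. Its falling-factorial moment evaluation already gives $\sum_w\Adual_wp(w)=3\,277\,860\,456+338\,952\,192(A_1+A_2)+30\,720\,000(A_3+A_4)$, with strictly positive coefficients. The minimum of $(W_{23,2}(w)-p(w))/c_w(3)^2$ over the non-root weights is exactly $4136/25$, attained at $w=11,13$; the reflection $w\mapsto 24-w$ is used only to halve the table. So nothing needs tuning, and $4136/25$ is the $\beta=0$ value rather than the optimum of your family.

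To see this, let $g(w)=(w-8)(w-12)^2(w-16)$. Then
\[
\sum_w\Adual_wg(w)=18\,432(A_1+A_2)+3\,072(A_3+A_4),
\qquad
\frac{g(w)}{c_w(3)^2}=\frac{9}{16(w-8)(w-16)}.
\]
Hence your ratio at $w$ is the paper's ratio minus $9\beta/\bigl(16(w-8)(w-16)\bigr)$.

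\begin{itemize}
\item For $\beta>0$ the moment coefficients stay positive. The ratio at $w=11$ rises, while the ratio at $w=7$ (equal to $4952/25$ at $\beta=0$) falls. Balancing them at $\beta=1632/5$ certifies $\tau=4442/25$, which implies the stated bound.
\item Your heuristic of choosing $\beta$ to make a moment coefficient vanish goes the wrong way. It forces $\beta=-10\,000$, which collapses $p_\tau$ to $705\,432+151\,008(w-12)^2$. This violates the pointwise inequality at $w=11$, since $W_{23,2}(11)=772\,616<856\,440=p_\tau(11)$.
\end{itemize}

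Your attainment and equality arguments agree with the paper's.
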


\begin{proof}
Here
\[
 V_3=1+\binom{23}{1}+\binom{23}{2}+\binom{23}{3}=2048
\]
and
\[
 c_w(3)=-\frac43(w-8)(w-12)(w-16).
\]
Define
\begin{equation}
\label{eq:binary-golay-polynomial}
 p(w)\triangleq705\,432-8\,992(w-12)^2+10\,000(w-12)^4.
\end{equation}
In the falling-factorial basis this is
\begin{equation}
\label{eq:binary-golay-falling}
\begin{aligned}
 p(w)={}&206\,770\,584-60\,743\,184w
       +7\,261\,008(w)_2\\
       &-420\,000(w)_3
       +10\,000(w)_4.
\end{aligned}
\end{equation}

Put
\[
 M_j\triangleq\sum_{w=1}^{23}(w)_j\Adual_w.
\]
The mass identity and Lemma~\ref{lem:dual-factorial-moments} give
\begin{align}
 M_0&=2047,\label{eq:binary-golay-m0}\\
 M_1&=23\,552-1\,024A_1,\label{eq:binary-golay-m1}\\
 M_2&=259\,072-22\,528A_1+1\,024A_2,
 \label{eq:binary-golay-m2}\\
 M_3&=2\,720\,256-354\,816A_1+32\,256A_2-1\,536A_3,
 \label{eq:binary-golay-m3}\\
 M_4&=27\,202\,560-4\,730\,880A_1+645\,120A_2
      -61\,440A_3+3\,072A_4.
 \label{eq:binary-golay-m4}
\end{align}
Substitution in \eqref{eq:binary-golay-falling} gives
\begin{equation}
\label{eq:binary-moment}
\begin{aligned}
 \sum_{w=1}^{23}\Adual_wp(w)
 ={}&3\,277\,860\,456
 +338\,952\,192(A_1+A_2)\\
 &+30\,720\,000(A_3+A_4)\\
 \geq{}&3\,277\,860\,456.
\end{aligned}
\end{equation}

We also record the symmetry used in the finite check.  When $q=2$,
equation \eqref{eq:W-coefficient-norm} reads
\[
 W_{23,2}(w)
 =\left\|(1+z)^{23-w}(1-z)^{w-1}\right\|_2^2.
\]
The substitution $z\mapsto-z$ interchanges the two factors and leaves the
coefficient norm unchanged.  Therefore
\[
 W_{23,2}(24-w)=W_{23,2}(w).
\]
The polynomial $p$ is invariant under the same reflection, and
$c_{24-w}(3)^2=c_w(3)^2$.  The exact values displayed in
Appendix~\ref{app:finite-certificates} consequently prove
\begin{equation}
\label{eq:binary-pointwise}
 W_{23,2}(w)-p(w)\geq\frac{4136}{25}c_w(3)^2
 \qquad(1\leq w\leq23).
\end{equation}
Summing \eqref{eq:binary-pointwise} and applying
\eqref{eq:spectral-discrepancy}, \eqref{eq:defect-parseval}, and
\eqref{eq:binary-moment} proves the result.

The exact perfect-code benchmark evaluation is reproduced in
Appendix~\ref{app:finite-certificates}.
The classical binary Golay code is a perfect code with these parameters
\cite{MacWilliamsSloane1977}, so the lower bound is attained.
Conversely, equality at the benchmark in
\eqref{eq:binary-golay-stability} forces $\Phi_3(\C)=0$.
Since $NV_3=2^{23}$, Lemma~\ref{lem:defect-identities} shows that
$\C$ is perfect.
\end{proof}

\subsection{Completion of the global theorem}

\begin{proof}[Proof of Theorem~\ref{thm:global-stability}]
For every perfect code correcting $e$ errors, sphere packing gives
$NV_e=q^n$.  When $e=1$, Proposition~\ref{prop:lloyd-support} forces the root
$k=V_1/q$ to be integral.  When $e=2$, Lloyd's theorem gives two distinct
integral roots of $L_2$ in the Hamming weight range
\cite{MacWilliamsSloane1977}; Proposition~\ref{prop:lloyd-support} gives their
role as the nonconstant spectral support.  Thus an actual
nontrivial perfect code satisfies the arithmetic hypotheses of the
corresponding stability theorem.

Theorem~\ref{thm:one-uniform} treats every one-error perfect code.
Corollary~\ref{cor:all-q-two-stability} treats every two-error parameter set.
Theorem~\ref{thm:repetition} treats the remaining odd binary repetition
codes, and Proposition~\ref{prop:binary-golay-stability} treats the binary
Golay parameters.

For prime-power alphabets, the classical parameter classification leaves
only the Hamming, ternary Golay, binary Golay, repetition, and trivial
families \cite{Tietavainen1973,ZinovievLeontiev1973}.  For non-prime-power
alphabets, the remaining cases correcting at least three errors are excluded by the
known nonexistence results \cite{Reuvers1977,Best1983,Hong1984}; the
radius-one and radius-two arguments above do not require a prime-power
alphabet.  Trivial full-space and one-word codes
have no nonperfect competitors of the same cardinality and were excluded
from the statement.  The listed cases therefore exhaust all nontrivial
perfect codes.

In every branch, the preceding local calculations identify the benchmark with
$\Dtwo(\Pcode)$ and yield \eqref{eq:global-stability}.  Equality forces
$\Phi_e(\C)=0$, so $\C$ is perfect; conversely, every perfect code with the
same parameters attains the parameter-only benchmark.  This proves both the
inequality and the equality characterization in the theorem.
\end{proof}

\section{Consequences}
\label{sec:consequences}

We now translate the stability estimates through the defect dictionary of
Section~\ref{sec:defect-main}.

\subsection{Near-perfect smoothing}

The global stability theorem and Proposition~\ref{prop:transfer} give the
following bounds.  For every nontrivial perfect code $\Pcode$ and every
competitor $\C$ of the same size,
\begin{align}
\label{eq:global-chi}
 \chi^2(P_{e,\C}\|U)
 &\leq\frac{N}{\sigma_{n,q,e}V_e}
 \bigl(\Dtwo(\C)-\Dtwo(\Pcode)\bigr),\\
\label{eq:global-tv}
 \TV(P_{e,\C},U)
 &\leq\frac{N}{2\sigma_{n,q,e}V_e}
 \bigl(\Dtwo(\C)-\Dtwo(\Pcode)\bigr).
\end{align}
The left side of \eqref{eq:global-tv} is exactly the fraction of uncovered
ambient points, not merely bounded by it.  Thus the discrepancy excess
controls both an analytic smoothing error and a direct combinatorial failure
of Hamming-ball coverage.

These are resolvability consequences, not a wiretap or cryptographic security
theorem; the latter also requires a message-indexed encoder, a channel model,
and separate reliability and secrecy analyses.

\subsection{Close pairs}

Let $P_i(\C)\triangleq NA_i(\C)$ be the number of ordered codeword pairs at
distance $i$.  Under
\eqref{eq:one-parameters}, \eqref{eq:phi-one-distance} reads
\[
 \Phi_1(\C)=qP_1(\C)+2P_2(\C).
\]
Consequently, Theorem~\ref{thm:one-uniform} gives the explicit bound
\[
 qP_1(\C)+2P_2(\C)
 \leq N^2\bigl(\Dtwo(\C)-\delta_{n,q}^{(1)}\bigr).
\]
In particular,
\[
 P_1(\C)\leq\frac{N^2}{q}
 \bigl(\Dtwo(\C)-\delta_{n,q}^{(1)}\bigr),
 \qquad
 P_2(\C)\leq\frac{N^2}{2}
 \bigl(\Dtwo(\C)-\delta_{n,q}^{(1)}\bigr).
\]
Unlike a minimum-distance condition, these estimates quantify all local
overlaps in an arbitrary competitor.

\section{Conclusion}

Exact discrepancy minimization has a quantitative refinement.  At every
parameter set of a nontrivial perfect code, excess total quadratic discrepancy
controls the corresponding tiling defect, and therefore holes, overlaps,
close pairs, and
the failure of uniform ball noise to smooth the code distribution.  The
sharp one-error inequality shows that the other radii cannot compensate for
the radius-one tiling defect.  Its certified parameter-dependent lower
coefficient may be much larger than the sharp uniform floor, while the
explicit upper estimate can be larger still by many orders of magnitude; the
result is valid over every finite alphabet before existence is known.  The two-error comparison likewise does not assume the
existence of a perfect code and is valid over every finite alphabet for
$n\geq5$ under the explicit sphere-packing and Lloyd conditions.  The remaining perfect families
complete the global theorem.
For alphabets of size at least four, the parameter-dependent one-error
minorant coefficient is evaluated at its unique endpoint.
For the large-alphabet two-error branch, a low-distance regrouping gives an
explicit coefficient and its fixed-alphabet asymptotics.

\medskip
\noindent\textbf{Does near-minimality force proximity to a perfect code?}\par
\noindent
The present estimates control the ball-multiplicity defect.  A removal theorem
would show that a sufficiently small defect also places the code close in
symmetric difference to some perfect code.  This would be a code-theoretic
analogue of vertex-isoperimetric stability in the Hamming cube, where
near-minimal boundary forces symmetric-difference proximity to a generalized
Hamming ball \cite{KeevashLong2020Vertex}.

\medskip
\noindent\textbf{What are the optimal stability constants?}\par
\noindent
Our coefficients come from explicit positivity certificates.  In particular,
$\kappa_{n,q}/q^2$ is certified by a pointwise spectral minorant and moment
constraints; it need not be the largest coefficient on the smaller class of
dual distributions realizable by codes.  Determining the optimal
parameter-dependent constants, and their behavior along growing $q$-ary
Hamming families, remains open.

\medskip
\noindent\begin{minipage}{\textwidth}
\textbf{What replaces perfect tiling when no perfect code exists?}\par
Can one identify a natural reference multiplicity profile and a benchmark
value such that excess discrepancy controls deviation from that profile?
Deleting one word from an LP-universally optimal code preserves universal
optimality \cite{CohnZhao2014Energy}, providing a concrete adjacent-cardinality
test case.  Nearly-perfect covering codes provide another natural reference
profile when exact tiling is unavailable
\cite{BoruchovskyEtzionRoth2025Nearly,SacHimelfarbSchwartz2026Nearly}.
For a binary covering code $\C$ of length $n$, cardinality $N$, and covering
radius $R$, the multiplicity function considered in
\cite{SacHimelfarbSchwartz2026Nearly} is
\[
 f(x)=|B(x,R)\cap\C|=m_{R,\C}(x).
\]
Put $\alpha\triangleq NV_R/2^n$, the mean covering multiplicity.
Lemma~3.18 of \cite{SacHimelfarbSchwartz2026Nearly} shows that every
over-covered word is covered by exactly two codewords.  Since the codes
studied there are covering codes, it follows that
$m_{R,\C}(x)\in\{1,2\}$ for every $x$.  Since $1\leq\alpha\leq2$, the
balancing principle for integer profiles of fixed mean---use only
$\lfloor\alpha\rfloor$ and $\lceil\alpha\rceil$---shows that this profile
minimizes the centered sum of squares.  Thus
\[
 \sum_{x\in\mathbb F_2^n}\bigl(m_{R,\C}(x)-\alpha\bigr)^2
 =NV_R-2^n-\frac{(NV_R-2^n)^2}{2^n}.
\]
Thus these codes minimize the radius-$R$ discrepancy summand among binary
codes of the same length and cardinality.  Whether the other radii preserve
this optimality, and hence whether total discrepancy is minimized, remains
open.
\end{minipage}

\medskip
\noindent\begin{minipage}{\textwidth}
\textbf{Does the method extend to more general error patterns and spaces?}\par
A direct existence-independent certificate for larger error-correction
parameters would require new positivity arguments.  The same is true for
other association schemes; neither extension follows formally from the
Hamming-space proof.
\end{minipage}

\appendix

\section{Reproduced prerequisites}
\label{app:reproduced-prerequisites}

The statements proved in this appendix appeared in
\cite{Zabokritskiy2026Perfect}.  They are collected here to keep the present
arXiv version self-contained while leaving the body focused on the new
stability estimates.  The statements themselves remain at their points of
use, where their roles in the quantitative argument are visible.

\begin{proof}[Proof of Lemma~\ref{lem:ball-transform}]
Put $u(z)=1+(q-1)z$ and $b(z)=1-z$.  If $\wt(\xi)=w$,
coordinatewise character orthogonality gives
\[
 \sum_{x\in\X}\overline{\chi_\xi(x)}z^{\wt(x)}
 =u(z)^{n-w}b(z)^w.
\]
Thus the Fourier transform of the weight-$j$ sphere is
$K_j^{(n,q)}(w)$.  Summing its coefficients cumulatively,
\[
 \sum_{t\geq0}c_w(t)z^t
 =\frac1{1-z}\sum_{j=0}^nK_j^{(n,q)}(w)z^j
 =u(z)^{n-w}b(z)^{w-1}.
\]
The coefficient of $z^t$ is $K_t^{(n-1,q)}(w-1)$.  The right-hand
side has degree $n-1$, so $c_w(n)=0$.
\end{proof}

\begin{proof}[Proof of \eqref{eq:spectral-discrepancy} in
Proposition~\ref{prop:spectral-parseval}]
For fixed $t$, define
\[
 g_t\triangleq
 \frac1N\one_\C*\one_{B(0,t)}
 -\frac{V_t}{q^n}\one_{\X}.
\]
Thus
\[
 g_t(x)=\frac{|\C\cap B(x,t)|}{N}-\frac{V_t}{q^n}.
\]
Its Fourier transform vanishes at the trivial character.  At a nontrivial
frequency $\xi$ it is
\[
 \wh g_t(\xi)
 =\frac1N\wh{\one_\C}(\xi)c_{\wt(\xi)}(t).
\]
Plancherel, followed by summation over $t$, gives
\[
 \Dtwo(\C)
 =\frac1{q^nN^2}
   \sum_{\xi\neq0}|\wh{\one_\C}(\xi)|^2
   \sum_{t=0}^{n-1}c_{\wt(\xi)}(t)^2,
\]
where the radius-$n$ term vanishes by
Lemma~\ref{lem:ball-transform}.  Grouping frequencies by weight proves
\eqref{eq:spectral-discrepancy}.
\end{proof}

\begin{proof}[Proof of Proposition~\ref{prop:W-shape}]
Lemma~\ref{lem:ball-transform} shows that the coefficients of
$u^{n-w}b^{w-1}$ are $c_w(0),\ldots,c_w(n-1)$, proving
\eqref{eq:W-coefficient-norm}.  Parseval for polynomial coefficients gives
\[
 W_{n,q}(w)
 =\frac1{2\pi}\int_0^{2\pi}
 |u(e^{i\theta})|^{2(n-w)}
 |b(e^{i\theta})|^{2(w-1)}\,d\theta.
\]
On the unit circle,
\[
 |u(z)|^2-|b(z)|^2=qz^{-1}R_q(z).
\]
Taking the second finite difference under the integral and applying Parseval
again gives \eqref{eq:curvature-norm}; its right-hand side is positive because
the polynomial inside the norm is nonzero.  Finally,
\[
 |u(e^{i\theta})|^2-|b(e^{i\theta})|^2
 =q\bigl(q-2+2\cos\theta\bigr).
\]
For $q\geq4$ this is nonnegative and positive on a set of positive measure,
so the first-difference integral gives
$W_{n,q}(w+1)<W_{n,q}(w)$.
\end{proof}

\begin{proof}[Proof of Lemma~\ref{lem:dual-factorial-moments}]
Plancherel gives
\[
 \sum_{w=0}^n\Adual_w
 =\frac1{N^2}\sum_\xi|\wh{\one_\C}(\xi)|^2
 =\frac{q^n}{N}.
\]
The trivial frequency contributes $\Adual_0=1$, proving
\eqref{eq:dual-mass}.  For $j\geq1$, expand the Fourier squares over
ordered pairs:
\[
 \sum_{w=1}^n(w)_j\Adual_w
 =\frac1{N^2}\sum_{z,z'\in\C}
   \sum_\xi(\wt(\xi))_j\chi_\xi(z'-z).
\]
The falling factorial counts ordered $j$-tuples of distinct coordinates on
which $\xi$ is nontrivial.  If $d(z,z')=i$, character orthogonality makes
the inner sum vanish unless those $j$ coordinates contain all $i$ nonzero
coordinates of $z'-z$.  For $i\leq j$, the number of ordered choices is
\[
 (j)_i(n-i)_{j-i},
\]
and the character sum contributes
\[
 (-1)^i(q-1)^{j-i}q^{n-j}.
\]
There are $NA_i$ ordered codeword pairs at distance $i$.  Summing over
$0\leq i\leq j$ and dividing by $N^2$ proves
\eqref{eq:dual-factorial-moments}.
\end{proof}

\begin{proof}[Proof of Proposition~\ref{prop:lloyd-support}]
Perfect tiling is the convolution identity
\[
 \one_{\Pcode}*\one_{B(0,e)}=\one_{\X}.
\]
At every nontrivial character $\xi$, Fourier transformation and
Lemma~\ref{lem:ball-transform} give
\[
 \wh{\one_{\Pcode}}(\xi)c_{\wt(\xi)}(e)=0.
\]
If $c_w(e)\neq0$, every Fourier coefficient of weight $w$ therefore
vanishes.  The definition of $\Adual_w$ proves the claim.
\end{proof}

\begin{proof}[Proof of the first three identities in
Lemma~\ref{lem:one-error-moments}]
Since $q^n/N=V=qk$ and $n(q-1)=V-1$, equations
\eqref{eq:dual-mass} and \eqref{eq:dual-factorial-moments} with $j=1$
give
\[
 \sum_{w=1}^n\Adual_w=V-1,
 \qquad
 \sum_{w=1}^nw\Adual_w=k(V-1-A_1).
\]
This proves \eqref{eq:mass-moment} and \eqref{eq:first-centered}.
The case $j=2$ gives
\[
 \sum_{w=1}^nw(w-1)\Adual_w
 =\frac{k}{q}\left(
 n(n-1)(q-1)^2
 -2(n-1)(q-1)A_1+2A_2\right).
\]
Using $q(k-1)=(n-1)(q-1)$ and expanding
$(w-k)^2=w(w-1)+(1-2k)w+k^2$ yields
\[
 \sum_{w=1}^n(w-k)^2\Adual_w
 =kA_1+\frac{2k}{q}A_2,
\]
which is \eqref{eq:second-centered}.
\end{proof}

\begin{proof}[Proof of Proposition~\ref{prop:distance-complete-monotone}]
Write
\[
 \mu_t(w)\triangleq|B(x,t)\cap B(y,t)|,\qquad
 \mu(w)\triangleq\sum_{t=0}^n\mu_t(w),
\]
where $d(x,y)=w$.  Expanding the squares in the definition of $\Dtwo$ gives
\[
 \Dtwo(\C)=
 \frac1N\sum_{i=0}^nA_i(\C)\mu(i)
 -\frac1{q^n}\sum_{i=0}^n\binom ni(q-1)^i\mu(i).
\]
For fixed $x,y,u$, the threshold identity
\[
\begin{aligned}
 \sum_{t=0}^n\bigl(
 &\one_{\{d(x,u)\leq t\}}+\one_{\{d(y,u)\leq t\}}\\
 &-2\one_{\{d(x,u)\leq t\}}\one_{\{d(y,u)\leq t\}}
 \bigr)
 =|d(x,u)-d(y,u)|
\end{aligned}
\]
shows that $\mu(0)-\mu(w)=\lambda(w)$.  The constant terms cancel in the
preceding expansion, proving \eqref{eq:distance-form}.  This is Barg's
invariance identity in the present normalization
\cite[Theorem~2.1]{Barg2021Stolarsky}.

For complete monotonicity, let $X_1,X_2,\ldots$ be independent with
\[
 \Pr(X_i=1)=\Pr(X_i=-1)=\frac1q,\qquad
 \Pr(X_i=0)=\frac{q-2}{q},
\]
and put $S_w=X_1+\cdots+X_w$.  Coordinatewise counting gives
\[
 \lambda(w)=\frac{q^n}{2}\E|S_w|,
 \qquad
 \lambda(w+1)-\lambda(w)=q^{n-1}p_w,
\]
where
\[
 p_w\triangleq\Pr(S_w=0)
 =\frac1{2\pi}\int_0^{2\pi}
 \left(1-\frac4q\sin^2\frac\theta2\right)^w\,d\theta.
\]
For $q\geq4$, the base of the power lies in $[0,1]$, and hence
\[
 (-1)^j\Delta^jp_w
 =\frac1{2\pi}\int_0^{2\pi}
 \left(1-\frac4q\sin^2\frac\theta2\right)^w
 \left(\frac4q\sin^2\frac\theta2\right)^j\,d\theta
 \geq0.
\]
The integral is positive whenever $j>0$.  Since
$\Delta f_{n,q}(w)=-q^{n-1}p_w$, all positive-order alternating differences
of $f_{n,q}$ are strictly positive; order zero follows from the monotonicity
of $\lambda$.
\end{proof}

\begin{proof}[Proof of Lemma~\ref{lem:small-q-two-parameters}]
Suppose first that $q=2$.  Since $V_2$ divides $2^n$, write
$V_2=2^m$.  The identity
\[
 8V_2=(2n+1)^2+7
\]
and the Ramanujan--Nagell theorem \cite{Nagell1961} leave
$n\in\{0,1,2,5,90\}$.  The two Lloyd roots are
\[
 \frac{n+1\pm\sqrt{n-1}}2.
\]
Thus $n=90$ is excluded by root integrality, whereas $n=5$ gives
$N=2$ and $(r,s)=(2,4)$.

Now suppose that $q=3$.  Here $V_2=2n^2+1$.  Writing $V_2=3^m$ gives
\[
 \frac{3^m-1}{2}=n^2.
\]
Ljunggren's theorem on $(X^m-1)/(X-1)=Y^2$
\cite[Satz~1]{Ljunggren1943}, together with $n\geq5$, gives
$m=5$ and $n=11$.  Hence $N=3^6$, and direct factorization gives
\[
 L_2(w)=\frac92(w-6)(w-9).
\]
\end{proof}

\section{The ternary local-slope certificate}
\label{app:ternary-slope}

This appendix reproduces the exact ternary coefficient certificate from
\cite[Appendix~B]{Zabokritskiy2026Perfect}.  We include the complete argument
because the case $q=3$ is a load-bearing input to
\eqref{eq:local-slope} and Proposition~\ref{prop:clean-spectral-gap}.
For $r\geq1$, define
\[
 \gamma_r\triangleq
 [z^{3r}](z-1)^{4r-3}(2z^2+5z+2)^r.
\]

\begin{lemma}[Coefficient formula for the ternary slope]
\label{lem:ternary-coefficient}
For every $r\geq1$,
\begin{equation}
\label{eq:ternary-coefficient}
 W_{3r+1,3}(2r)-W_{3r+1,3}(2r+1)=6\gamma_r.
\end{equation}
\end{lemma}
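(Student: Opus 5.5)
Set $n=3r+1$, $q=3$, $u(z)=1+2z$, $b(z)=1-z$, and $F(z)=u(z)^{r}b(z)^{2r-1}$. Then \eqref{eq:W-coefficient-norm} gives
\[
 W(2r)=\|u^{r+1}b^{2r-1}\|_2^2=\|uF\|_2^2,
 \qquad
 W(2r+1)=\|u^{r}b^{2r}\|_2^2=\|bF\|_2^2 .
\]
This is exactly the configuration of \eqref{eq:slope-norm-difference}, with $k=2r+1$ and $a_{n,3}=W(k-1)-W(k)$.

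The plan is to turn the difference of coefficient norms into one coefficient of a Laurent product. For a real polynomial $P$ of degree $d$, $\|P\|_2^2$ is the constant term of $P(z)P(z^{-1})$, equivalently $[z^d]\,P(z)z^dP(z^{-1})$. Apply this to the difference:
\[
 \|uF\|_2^2-\|bF\|_2^2
 =\operatorname{CT}\Bigl[F(z)F(z^{-1})\bigl(u(z)u(z^{-1})-b(z)b(z^{-1})\bigr)\Bigr].
\]
The bracket equals $3z^{-1}R_3(z)=3z^{-1}(z^2+z+1)$, as in the proof of Proposition~\ref{prop:W-shape}. Next simplify the reciprocal factors:
\[
 z\,u(z^{-1})=z+2,\qquad z\,b(z^{-1})=-(1-z).
\]
With these, $F(z)F(z^{-1})$ becomes $z^{-(3r-1)}(-1)^{2r-1}(1+2z)^r(z+2)^r(1-z)^{4r-2}$. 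Use $(1+2z)(z+2)=2z^2+5z+2$. The difference is then, up to sign, $3$ times the coefficient of $z^{3r}$ in $(1-z)^{4r-2}(z^2+z+1)(2z^2+5z+2)^r$.

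Now absorb $R_3$. Since $(1-z)(1+z+z^2)=1-z^3$, we get $(1-z)^{4r-2}R_3=(1-z)^{4r-3}(1-z^3)$. The target coefficient therefore splits as
\[
 [z^{3r}]H-[z^{3r-3}]H,\qquad H=(1-z)^{4r-3}(2z^2+5z+2)^r .
\]
$H$ has degree $6r-3$ and satisfies $z^{6r-3}H(1/z)=-H(z)$, because $2z^2+5z+2$ is palindromic and $(1-z)^{4r-3}$ is antipalindromic of odd degree. This gives $[z^{3r-3}]H=-[z^{3r}]H$, so the difference is $2[z^{3r}]H$. Finally, converting $(1-z)^{4r-3}$ to $(z-1)^{4r-3}$ contributes the factor $(-1)^{4r-3}=-1$, which cancels the earlier minus sign. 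Together these give $3\cdot 2\cdot\gamma_r=6\gamma_r$.

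The main obstacle is bookkeeping: keeping the power of $z$ that shifts the constant-term extraction consistent, and tracking the three sign sources (the odd power $2r-1$, the antipalindromic reflection, and $(1-z)$ versus $(z-1)$). I would fix these by checking $r=1$ directly. There $n=4$, and the table in Section~\ref{sec:one-error} gives $W(2)-W(3)=26-14=12$. On the other side, $\gamma_1=[z^3](z-1)(2z^2+5z+2)=2$, so $6\gamma_1=12$, which confirms the normalization.
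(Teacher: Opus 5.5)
Your proposal is correct and follows essentially the same route as the paper. Both write the difference as the constant term of $F(z)F(z^{-1})\bigl(u(z)u(z^{-1})-b(z)b(z^{-1})\bigr)$ with $u(z)u(z^{-1})-b(z)b(z^{-1})=3z^{-1}(1+z+z^2)$, absorb one factor via $(1-z)(1+z+z^2)=1-z^3$, and use anti-reciprocity of the degree-$(6r-3)$ polynomial $H$ to identify the two surviving coefficients as $\pm\gamma_r$. Your sign bookkeeping is right: the factor $(-1)^{2r-1}$ cancels the $-1$ from converting $(1-z)^{4r-3}$ to $(z-1)^{4r-3}$. The check $26-14=12=6\gamma_1$ at $r=1$ is also correct.
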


\begin{proof}
Set
\[
 C_r(z)\triangleq(1+2z)^r(1-z)^{2r-1}.
\]
The coefficient-norm identity \eqref{eq:W-coefficient-norm}, specialized to
$q=3$, gives
\[
 W_{3r+1,3}(2r)=\|(1+2z)C_r\|_2^2,\qquad
 W_{3r+1,3}(2r+1)=\|(1-z)C_r\|_2^2.
\]
Since
\[
 |1+2z|^2-|1-z|^2=3(1+z+z^{-1})
\]
on the unit circle, their difference divided by $3$ is
\[
 [z^0](1+z+z^{-1})C_r(z)C_r(z^{-1}),
\]
where $[z^0]$ denotes constant-term extraction.  Now
\[
 C_r(z)C_r(z^{-1})
 =-z^{-(3r-1)}(z-1)^{4r-2}(2z^2+5z+2)^r.
\]
Using
\[
 1+z+z^{-1}=z^{-1}\frac{z^3-1}{z-1},
\]
the constant term is the negative of the difference between the
coefficients at degrees $3r-3$ and $3r$ in
\[
 H_r(z)\triangleq(z-1)^{4r-3}(2z^2+5z+2)^r.
\]
The polynomial $H_r$ is anti-reciprocal of degree $6r-3$, so these two
coefficients are $-\gamma_r$ and $\gamma_r$, respectively.  This proves
\eqref{eq:ternary-coefficient}.
\end{proof}

To prove positivity, we first convert $\gamma_r$ into a finite sum.
The Parseval representation used in the proof of
Proposition~\ref{prop:W-shape}, specialized to $n=3r+1$, $q=3$, and
$w=2r$, gives, with $y=\sin^2(\theta/2)$,
\[
\begin{aligned}
 &W_{3r+1,3}(2r+1)-W_{3r+1,3}(2r)\\
 &\qquad
 =\frac{3}{2\pi}\int_0^{2\pi}
 (9-8y)^r(4y)^{2r-1}(4y-3)\,d\theta.
\end{aligned}
\]
Together with \eqref{eq:ternary-coefficient}, and using the two-to-one
change of variables $y=\sin^2(\theta/2)$, this becomes
\[
 \gamma_r=-\frac1{2\pi}\int_0^1
 \frac{(9-8y)^r(4y)^{2r-1}(4y-3)}
      {\sqrt{y(1-y)}}\,dy.
\]
After $y=1-x$, expand $(1+8x)^r$.  Write $B(a,b)$ for the beta
function and $(a)^{\uparrow j}=a(a+1)\cdots(a+j-1)$, with
$(a)^{\uparrow0}=1$.  The
termwise integral is
\[
 \int_0^1x^{j-1/2}(1-x)^{2r-3/2}(1-4x)\,dx
 =B\left(j+\frac12,2r-\frac12\right)
  \frac{2r-3j-2}{2r+j}.
\]
Using
\[
 B\left(j+\frac12,2r-\frac12\right)
 =\frac{\pi}{4^{2r-1}}\binom{4r-2}{2r-1}
   \frac{(1/2)^{\uparrow j}}{(2r)^{\uparrow j}},
\]
we obtain
\begin{equation}
\label{eq:ternary-finite-sum}
 \gamma_r=\frac12\binom{4r-2}{2r-1}
 \sum_{j=0}^r\binom rj8^j
 \frac{(1/2)^{\uparrow j}}{(2r)^{\uparrow j}}
 \frac{3j-2r+2}{2r+j}.
\end{equation}

Define
\begin{align*}
 \mathsf P_0(r)&\triangleq486(r+1)(4r-1)(4r+1)(33r^2-4r-60),\\
 \mathsf P_2(r)&\triangleq2(r+2)(3r+4)(3r+5)(33r^2-70r-23),\\
 \mathsf T(r)&\triangleq104440+409424r+95070r^2-55242r^3+52272r^4.
\end{align*}
For $r\geq3$, all three quantities are positive.  For $\mathsf T(r)$, this is
seen by writing $x=r-3$:
\[
 \mathsf T(r)=4930840+5133686x+2420580x^2
       +572022x^3+52272x^4.
\]

\begin{lemma}[Telescoping recurrence]
\label{lem:ternary-telescoper}
For every $r\geq1$,
\begin{equation}
\label{eq:ternary-positive-recurrence}
 16\mathsf P_2(r)(\gamma_{r+2}-16\gamma_{r+1})
 =\mathsf P_0(r)(\gamma_{r+1}-16\gamma_r)+\mathsf T(r)\gamma_{r+1}.
\end{equation}
\end{lemma}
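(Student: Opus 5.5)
The plan is to treat \eqref{eq:ternary-positive-recurrence} as a linear recurrence with polynomial coefficients for a definite hypergeometric sum, and to prove it by creative telescoping (Zeilberger's algorithm) together with an explicitly verifiable certificate. The most convenient representation is the contour integral
\[
 \gamma_r=\frac1{2\pi i}\oint\frac{(z-1)^{4r-3}(2z^2+5z+2)^r}{z^{3r+1}}\,dz,
\]
whose integrand $F_r(z)$ is hypergeometric in $r$. Each ratio $F_{r+1}/F_r=(z-1)^4(2z^2+5z+2)/z^3$ is a fixed rational function. As a fallback I would use the finite sum \eqref{eq:ternary-finite-sum}, whose summand
\[
 t(r,j)=\tbinom{4r-2}{2r-1}\tbinom rj8^j\frac{(1/2)^{\uparrow j}}{(2r)^{\uparrow j}}\frac{3j-2r+2}{2r+j}
\]
is proper hypergeometric in $(r,j)$.

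First, rewrite the target in standard form:
\[
 16\mathsf P_2(r)\gamma_{r+2}-\bigl(256\mathsf P_2(r)+\mathsf P_0(r)+\mathsf T(r)\bigr)\gamma_{r+1}+16\mathsf P_0(r)\gamma_r=0 .
\]
Next, seek a rational function $S(r,z)$ such that
\[
 16\mathsf P_2(r)F_{r+2}-\bigl(256\mathsf P_2+\mathsf P_0+\mathsf T\bigr)F_{r+1}+16\mathsf P_0F_r=\frac{d}{dz}\bigl(S(r,z)F_r(z)\bigr).
\]
After dividing by $F_r$, this is an identity between rational functions in $z$ with polynomial dependence on $r$. It can be checked by clearing denominators and comparing coefficients. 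The certificate $S$ will be produced by running the Almkvist--Zeilberger algorithm on $F_r$. That $S$ also confirms that order two suffices, with leading coefficient proportional to $\mathsf P_2$; the factor $33r^2-70r-23$ is the typical spurious factor such an algorithm produces. Since $SF_r$ is a single-valued rational function on a small circle around $0$, its derivative integrates to zero. This yields the recurrence for every $r\geq1$; the exponent $4r-3\geq1$ keeps everything polynomial apart from the pole at $0$.

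If the sum version is used instead, the same scheme applies with Zeilberger's algorithm. We find a rational $R(r,j)$ with
\[
 \textstyle\sum_i c_i(r)\,t(r+i,j)=G(r,j+1)-G(r,j),\qquad G=Rt,
\]
and sum over $j$. Then we check that the boundary terms $G(r,0)$ and $G(r,r+3)$ vanish, using $\binom rj=0$ for $j>r$ and the explicit factor structure of $R$.

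The main obstacle is purely computational: producing the certificate and checking that it introduces no poles at integer points in the summation range (for the sum), or no nonzero residue contribution (for the integral). I would generate $S$ or $R$ by computer algebra and then record the verification as a finite polynomial identity in $r$ and $z$ (or $r$ and $j$). As an independent sanity check, I would confirm \eqref{eq:ternary-positive-recurrence} numerically for $r=1,2,3$ directly from the definition of $\gamma_r$.
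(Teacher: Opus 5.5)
Your fallback is the paper's proof. The paper applies creative telescoping to the finite sum \eqref{eq:ternary-finite-sum}, writes down the certificate $\mathcal R(r,j)=H(r,j)/(24\delta_r)$, and checks the boundary conditions $\mathcal R(r,0)=0$ and $b_{r+2,r+3}=0$. One detail matters. The paper normalizes the certificate against $b_{r+2,j}$, which is nonzero for $0\le j\le r+2$. If you normalize against $t(r,j)$ as you propose, then $R$ must be singular at $j=r+1,r+2$, because there $t(r,j)=0$ while $t(r+2,j)\neq0$. That is where your ``no poles in the summation range'' check needs care.

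Your primary contour route has a genuine gap. No rational $S(r,z)$ satisfies your telescoping equation for $F_r=H_r(z)z^{-3r-1}$. Almkvist--Zeilberger applied to $F_r$ will return a telescoper of order three, not two.

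The obstruction is a symmetry you did not use.
\begin{itemize}
\item Put $\Phi(z)=(z-1)^4(2z^2+5z+2)z^{-3}$, so that $F_r=\Phi^r/(z(z-1)^3)$. Then $\Phi(1/z)=\Phi(z)$.
\item Pulling back by $z\mapsto1/z$ sends $F_r\,dz$ to $z^3F_r\,dz$ and exact forms to exact forms. Hence every telescoper of $F_r$ also telescopes $\tfrac{1+z^3}{2}F_r$.
\item This piece has zero residue at $0$; that is the anti-reciprocity $[z^{3r}]H_r=-[z^{3r-3}]H_r$. It is nevertheless not exact.
\item The critical points of $\Phi$ are $-1$, where $\Phi=16$, and $e^{\pm2\pi i/3}$, where $\Phi=27$. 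Integrate $\tfrac{1+z^3}{2}F_r$ over a path joining zeros of $\Phi$ that crosses $e^{2\pi i/3}$ in the descent direction and otherwise stays in $\{|\Phi|<27\}$. Certificate boundary terms vanish there for large $r$.
\item Since $\frac{1+z^3}{2z(z-1)^3}\neq0$ at $e^{2\pi i/3}$, this period grows like $C\,27^r r^{-1/2}$ with $C\neq0$.
\item Your target operator has characteristic polynomial $x^2-43x+432=(x-16)(x-27)$. The next-order ($r^4$) terms of its coefficients force every solution growing like $27^r$ to behave like $27^r r^{-3/2}$. For $\gamma_r$ this reflects the vanishing of $\frac{1-z^3}{2z(z-1)^3}$ at the cube roots of unity.
\end{itemize}
So your operator cannot annihilate that period, and it is not a telescoper of $F_r$.

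There are three repairs:
\begin{itemize}
\item Run Almkvist--Zeilberger on $\tfrac{1-z^3}{2}F_r$, which has the same residue $\gamma_r$ at $0$.
\item Factor the order-three telescoper as a first-order operator times yours, and check one initial value.
\item Use the sum, as the paper does.
\end{itemize}
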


\begin{proof}
Define
\[
 b_{s,j}\triangleq
 \frac12\binom{4s-2}{2s-1}\binom sj8^j
 \frac{(1/2)^{\uparrow j}}{(2s)^{\uparrow j}},\qquad
 a_{s,j}\triangleq b_{s,j}\frac{3j-2s+2}{2s+j},
\]
and set both quantities to zero outside $0\leq j\leq s$.  Thus
\eqref{eq:ternary-finite-sum} says
\[
 \gamma_s=\sum_{j=0}^sa_{s,j}.
\]
Let
\[
 \mathsf P_1(r)\triangleq
 -16\mathsf P_2(r)-\frac{\mathsf P_0(r)+\mathsf T(r)}{16},
\]
and put
\[
 \delta_r\triangleq16r^3+64r^2+79r+30.
\]
Define
\begin{align*}
m_0(r)\triangleq{}&-9240-107456r-85526r^2+188518r^3\\
&\quad+231594r^4+45738r^5-13068r^6,\\
m_1(r)\triangleq{}&-48900-67930r+382778r^2+823841r^3\\
&\quad+601887r^4+211266r^5+39204r^6,\\
m_2(r)\triangleq{}&-200880-384462r+17313r^2+314580r^3\\
&\quad+143550r^4+13068r^5,\\
m_3(r)\triangleq{}&-9180+144648r+145953r^2-71145r^3-72171r^4,\\
m_4(r)\triangleq{}&59940+91476r-27135r^2-48114r^3,\\
m_5(r)\triangleq{}&14580+972r-8019r^2.
\end{align*}
Finally, put
\[
 H(r,j)\triangleq j\sum_{\ell=0}^5m_\ell(r)j^\ell,
 \qquad
 \mathcal R(r,j)\triangleq\frac{H(r,j)}{24\delta_r}.
\]
The following rational identity is the required telescoping certificate.
Exact simplification gives, for $r\geq1$ and $0\leq j\leq r+2$,
\begin{equation}
\label{eq:ternary-telescoper}
\begin{aligned}
 &\mathsf P_0(r)a_{r,j}+\mathsf P_1(r)a_{r+1,j}
  +\mathsf P_2(r)a_{r+2,j}\\
 &\qquad=b_{r+2,j+1}\mathcal R(r,j+1)
             -b_{r+2,j}\mathcal R(r,j).
\end{aligned}
\end{equation}
For a direct check, divide by $b_{r+2,j}$, clear denominators, and use
\[
 \frac{b_{s,j+1}}{b_{s,j}}
 =\frac{4(s-j)(2j+1)}{(j+1)(2s+j)},\qquad
 \frac{b_{s-1,j}}{b_{s,j}}
 =\frac{(s-j)(2s+j-2)(2s+j-1)}
        {4s(4s-5)(4s-3)}.
\]
After substitution, the difference between the two sides of
\eqref{eq:ternary-telescoper} is the zero polynomial in $r,j$.
At the boundaries, $\mathcal R(r,0)=0$ and $b_{r+2,r+3}=0$.
Summing over $j$ proves
\[
 \mathsf P_0(r)\gamma_r+\mathsf P_1(r)\gamma_{r+1}
 +\mathsf P_2(r)\gamma_{r+2}=0.
\]
Multiplication by $16$ and the definition of $\mathsf P_1$ give
\eqref{eq:ternary-positive-recurrence}.
\end{proof}

\begin{lemma}[Positivity of the ternary slope coefficient]
\label{lem:ternary-recurrence}
For every $r\geq1$, the integer $\gamma_r$ satisfies $\gamma_r\geq2$.
\end{lemma}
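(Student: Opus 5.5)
The plan is to combine integrality, a few directly computed small cases, and an induction driven by the positive recurrence of Lemma~\ref{lem:ternary-telescoper}.

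First, $\gamma_r$ is an integer, because it is a coefficient of $(z-1)^{4r-3}(2z^2+5z+2)^r\in\mathbb Z[z]$. Hence it suffices to prove $\gamma_r\geq2$ directly for small $r$ and $\gamma_r>0$ together with growth for large $r$.

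\textbf{Small cases.} For $r=1$, the coefficient of $z^3$ in $(z-1)(2z^2+5z+2)$ is $2$, so $\gamma_1=2$. For a few further values, $r=2,3,4$ and $r=5$ if needed, I would compute $\gamma_r$ exactly. One route is to expand the coefficient extraction $[z^{3r}](z-1)^{4r-3}(2z^2+5z+2)^r$. Another is the finite hypergeometric sum \eqref{eq:ternary-finite-sum}, which is a short rational sum with $r+1$ terms. I expect these values to be positive integers growing roughly by a factor of about $16$, since $(1+8x)$ evaluated near $x=1$ has ratio $9$ and the $\binom{4r-2}{2r-1}$ prefactor contributes about $16$.

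\textbf{Induction.} The inductive claim, for $r\geq r_0$ with $r_0=3$, is
\[
 \gamma_{r+1}\geq16\gamma_r>0 .
\]
Suppose $\gamma_{r+1}-16\gamma_r\geq0$ and $\gamma_{r+1}>0$ for some $r\geq3$. In \eqref{eq:ternary-positive-recurrence}, the right side $\mathsf P_0(r)(\gamma_{r+1}-16\gamma_r)+\mathsf T(r)\gamma_{r+1}$ is then strictly positive. This uses the stated positivity of $\mathsf P_0,\mathsf T$ for $r\geq3$, where for $\mathsf T$ the expansion in $x=r-3$ has only positive coefficients. Since $16\mathsf P_2(r)>0$ as well, it follows that $\gamma_{r+2}>16\gamma_{r+1}>0$, which propagates the claim to $r+1$. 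Thus everything reduces to the base inequalities $\gamma_3>0$ and $\gamma_4\geq16\gamma_3$. Once these hold, $\gamma_r\geq16^{\,r-3}\gamma_3\geq\gamma_3\geq2$ for all $r\geq3$, and the cases $r=1,2$ are covered by the direct computation.

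\textbf{Main obstacle.} The obstacle is the base inequality $\gamma_4\geq16\gamma_3$, which is an exact finite check. If it fails at $r=3$, because the ratio $\gamma_{r+1}/\gamma_r$ approaches $16$ only from below initially, I would start the induction at a later index $r_0$ where $\gamma_{r_0+1}\geq16\gamma_{r_0}$ does hold. I would verify all $\gamma_r$ with $r\leq r_0+1$ individually from \eqref{eq:ternary-finite-sum}. The positivity of $\mathsf P_0,\mathsf P_2,\mathsf T$ persists for all $r\geq3$, so shifting the start of the induction costs only finitely many explicit evaluations. If no such $r_0$ exists, a fallback is to run the induction on $\gamma_{r+1}-c\gamma_r\geq0$ for a slightly smaller constant $c<16$. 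In that case the recurrence must be rearranged with correspondingly adjusted polynomial coefficients, and their positivity rechecked.
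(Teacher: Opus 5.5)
Your approach is essentially the paper's: compute the first few $\gamma_r$ exactly, then use \eqref{eq:ternary-positive-recurrence} to propagate $\gamma_{r+1}\geq16\gamma_r>0$ by induction, relying on $\mathsf P_0,\mathsf P_2,\mathsf T>0$ for $r\geq3$. Your primary base case $r_0=3$ does fail, since $\gamma_3=188$ and $\gamma_4=2977<3008=16\gamma_3$, so the shift you anticipated is needed: the paper starts at $r_0=4$, where $\gamma_5=54468$ gives $\gamma_5-16\gamma_4=6836>0$, and covers $r\leq4$ with the direct values $2,15,188,2977$.
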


\begin{proof}
Direct coefficient extraction gives
\[
 \gamma_1=2,\quad \gamma_2=15,\quad \gamma_3=188,\quad
 \gamma_4=2977,\quad \gamma_5=54468,
\]
and $\gamma_5-16\gamma_4=6836>0$.  Since
$\mathsf P_0(r),\mathsf P_2(r),\mathsf T(r)>0$ for
$r\geq3$, \eqref{eq:ternary-positive-recurrence}, applied successively
for $r=4,5,\ldots$, gives
\[
 \gamma_{r+1}>16\gamma_r>0\qquad(r\geq4).
\]
Together with the displayed initial values, this proves
$\gamma_r\geq2$ for every $r\geq1$.
\end{proof}

\section{Formal two-error quadrature and comparison}
\label{app:formal-comparison}

The two-node quadrature and the formal Delsarte comparison proved below are,
respectively, Lemma~A.1 and Lemma~5.3 of
\cite{Zabokritskiy2026Perfect}.  We reproduce their argument in the notation of
the present paper because the comparison is a load-bearing input to the new
quantitative estimate.  The perturbative subtraction and coefficientwise
regrouping in Section~\ref{sec:higher-error} are new.

\begin{proof}[Proof of Lemma~\ref{lem:formal-two-comparison}]
A direct expansion of the Lloyd polynomial around the binomial
mean gives
\begin{equation}
\label{eq:lloyd-two-centered}
 2L_2(\bar w+y)=q^2y^2+q(q-4)y+2-(q-1)n.
\end{equation}
Thus $L_2(\bar w)<0$, whereas $L_2(1)>0$ and $L_2(n)>0$.
The integral roots therefore satisfy
$1<r<\bar w<s<n$.  Their sum and separation are
\begin{equation}
\label{eq:root-sum-difference}
 r+s=\left(2-\frac2q\right)(n-2)+3,
 \qquad
 (s-r)^2=1+\frac{4(q-1)(n-2)}{q^2}.
\end{equation}
In particular, $s-r\geq2$.  Combining the root sum with
\eqref{eq:lloyd-two-centered} gives
\[
 s-\bar w
 =\frac{s-r}{2}-\frac{q-4}{2q}
 \geq\frac{q+4}{2q}>\frac1q
 \geq\frac{\bar w}{V_2-1}.
\]
The definitions in \eqref{eq:formal-masses} now give $b_r,b_s>0$.

We next establish the quadrature identity
\begin{equation}
\label{eq:radius-two-quadrature}
 \frac{P(0)+b_rP(r)+b_sP(s)}{V_2}
 =\sum_{i=0}^nq^{-n}\binom ni(q-1)^iP(i)
 \qquad(\deg P\leq4).
\end{equation}
The two sides agree on $1$ and $x$ by the definitions of $b_r,b_s$.
They agree on $L_2$ because
\[
 L_2=K_0^{(n,q)}+K_1^{(n,q)}+K_2^{(n,q)},\qquad
 L_2(0)=V_2,\qquad L_2(r)=L_2(s)=0,
\]
and the binomial average of $L_2$ is one.  Hence they agree on every
quadratic.  Every polynomial of degree at most four can be written as
\[
 P(x)=R(x)+xL_2(x)S(x),\qquad \deg R\leq2,\quad \deg S\leq1.
\]
The second term vanishes at $0,r,s$.  Its binomial average also vanishes:
using $x\binom nx=n\binom{n-1}{x-1}$, this is the orthogonality of
$L_2(x)=K_2^{(n-1,q)}(x-1)$ against polynomials of degree at most one.
This proves \eqref{eq:radius-two-quadrature}.  Applying it to
$K_i^{(n,q)}$ for $0\leq i\leq4$ gives $A_0^*=1$ and
$A_1^*=\cdots=A_4^*=0$.  Finally, $Q_0^*=1$ and the transform normalization
give $\sum_iA_i^*=N$, proving \eqref{eq:A-star-low}.

It remains to prove the comparison.  Let $A$ be the distance distribution of
an $N$-word code, and define its MacWilliams--Delsarte transform by
\[
 Q_j=\frac1N\sum_{i=0}^nA_iK_j^{(n,q)}(i).
\]
Then $Q_0=1$, every $Q_j$ is nonnegative, and
$\mathsf KQ=V_2A$.  For a completely monotone function $u$, put
$g=\mathsf K^{\mathsf T}u$.  The Krawtchouk transform preserves complete
monotonicity \cite[Lemma~10]{CohnZhao2014Energy}.  Let $h$ be the cubic
polynomial interpolating $g$ at
\[
 r-1,\quad r,\quad s,\quad s+1.
\]
These nodes form a pair covering.  The interpolation theorem of
\cite[Lemma~19]{CohnZhao2014Energy}, applied in the reverse node order, yields
$h(i)\leq g(i)$ for $1\leq i\leq n$ and a Newton expansion with nonnegative
coefficients in
\[
 1,\qquad s+1-x,\qquad (s+1-x)(s-x),\qquad
 (s+1-x)(s-x)(r-x).
\]
Each nonconstant function in this list is positive definite.  The first two
cases follow from \cite[Lemmas~12--14]{CohnZhao2014Energy}, and the cubic case
follows from
\[
 (s+1-x)(s-x)(r-x)=\frac{2}{q^2}(s+1-x)L_2(x),
\]
because $L_2=K_0^{(n,q)}+K_1^{(n,q)}+K_2^{(n,q)}$.  Consequently,
\[
 h(x)=\sum_{j=0}^3\eta_jK_j^{(n,q)}(x),
 \qquad \eta_j\geq0\quad(j\geq1).
\]

Using $A_j\geq0$, $A_0=1$, and $\mathsf KQ=V_2A$, we obtain
\begin{align*}
 \sum_{i=1}^ng(i)Q_i
 &\geq\sum_{i=1}^nh(i)Q_i\\
 &=V_2\sum_{j=0}^3\eta_jA_j-h(0)\\
 &\geq V_2\eta_0-h(0).
\end{align*}
For $Q^*$, the only nonzero coordinates among $1,\ldots,n$ are $r$ and $s$,
where $h=g$, and \eqref{eq:A-star-low} gives equality with
$V_2\eta_0-h(0)$.  Adding the common coordinate at zero yields
$g^{\mathsf T}Q\geq g^{\mathsf T}Q^*$.  Since
\[
 A=\frac1{V_2}\mathsf KQ,
 \qquad A^*=\frac1{V_2}\mathsf KQ^*,
\]
we conclude that $u^{\mathsf T}(A-A^*)\geq0$, as required.
\end{proof}

\section{Evaluation of the two-error coefficient}
\label{app:vartheta-evaluation}

We first prove the arithmetic implication used in
Theorem~\ref{thm:two-stability}.  Let $r<s$ be the two integral Lloyd roots.
The root identities in \eqref{eq:root-sum-difference} imply
$q\mid2(n-2)$.  If $q$ is even, the first two
possible positive values of $n-2$ are $q/2$ and $q$; the second expression
in \eqref{eq:root-sum-difference} is then $3-2/q$ and $5-4/q$,
respectively.  The former is never an integral square.  The latter is not an
integral square unless $q=4$; in that case $r+s=9$ and $s-r=2$ have
opposite parity, so the roots are not integral.  The next possible value is
$n-2=3q/2$, which already gives $n\geq q+3$.  If $q$ is odd, the first
possible value is $n-2=q$, and again $5-4/q$ is not an integral square.
The next possible value is $n-2=2q$, which again gives $n\geq q+3$.
Thus the stated hypotheses force
\begin{equation}
\label{eq:admissible-length}
 n\geq q+3.
\end{equation}

For the coefficient evaluation below, assume only $q\geq4$ and $n\geq5$.
We now prove Proposition~\ref{prop:vartheta-closed}.  Write $f=f_{n,q}$ and
put $m=n-1$.  Recall the quantities $F_t$ and $M_t$ defined for
$1\leq t\leq n-1$ in Section~\ref{sec:two-error}, and extend the same
definitions to $t=0,n$:
\[
 F_t\triangleq\mathcal D_{n-t,t}f,
 \qquad
 M_t\triangleq\mathcal D_{n-t,t}\mu_2
 \qquad(0\leq t\leq n).
\]
The recurrence
\[
 \mathcal D_{i,w}u
 =\mathcal D_{i,w+1}u+\mathcal D_{i+1,w}u
\]
and Pascal's identity give the standard right-Newton formula
\cite[Lemma~4]{CohnZhao2014Energy}
\begin{equation}
\label{eq:right-newton-general}
 \mathcal D_{i,w}u
 =\sum_{\ell=0}^{n-w-i}\binom{n-w-i}{\ell}
   \mathcal D_{i+\ell,n-i-\ell}u.
\end{equation}
Taking $i=0$ recovers the right-Newton expansion
\eqref{eq:right-newton} used in Section~\ref{sec:two-error}.  Thus $u$ is
completely monotone if and only if all its right-diagonal differences are
nonnegative.
Since $\mu_2$ vanishes from weight five onward, $M_t=0$ for
$5\leq t\leq n$.  The corresponding $F_t$ are positive for
$5\leq t\leq n-1$, while $F_n=f(n)=0$.

For $0\leq w\leq m$, define
\[
 p_w\triangleq\frac1{2\pi}\int_0^{2\pi}
 \left(1-\frac4q\sin^2\frac\theta2\right)^w\,d\theta.
\]
The increment calculation in
Proposition~\ref{prop:distance-complete-monotone} gives
\[
 \Delta f(w)=-q^m p_w
 \qquad(0\leq w\leq m).
\]
Expanding the integral and using the beta integral gives, for
$0\leq t\leq m$,
\begin{equation}
\label{eq:F-diagonal}
 F_t=\binom{2(m-t)}{m-t}
 \sum_{k=0}^t\binom tk(q-4)^{t-k}
 \frac{\binom{2k}{k}}{\binom{m-t+k}{k}}.
\end{equation}
Taking $t=4$ in \eqref{eq:F-diagonal} and writing $a=q-4$ gives
\begin{equation}
\label{eq:F-four-evaluation}
 F_4=
 \frac{\binom{2m-8}{m-4}\mathcal P_4(m,a)}
 {m(m-1)(m-2)(m-3)}.
\end{equation}
Using $M_4=6$ from \eqref{eq:M-diagonal},
\eqref{eq:F-four-evaluation} proves \eqref{eq:F4-vartheta}, equivalently
$\vartheta_{n,q}=F_4/M_4=F_4/6$.
The fixed-$q$ asymptotics in \eqref{eq:vartheta-asymptotics} follow from
\[
 \binom{2m-8}{m-4}
 \sim\frac{4^n}{1024\sqrt{\pi n}}
\]
and the leading term of \eqref{eq:P-four}.  This completes the proof of
Proposition~\ref{prop:vartheta-closed} and supplies the formulas used in
Theorem~\ref{thm:two-stability}.

\section{Golay benchmarks and quantitative finite certificates}
\label{app:finite-certificates}

\subsection{Reproduced benchmark evaluations}

The exact benchmark calculations in this subsection appeared in the Golay
appendices of \cite{Zabokritskiy2026Perfect}.  We reproduce them here because
the attainment assertions in Propositions~\ref{prop:ternary-golay-stability}
and \ref{prop:binary-golay-stability} use their exact normalizations.

For the ternary parameters, let $\Pcode$ be perfect.
Proposition~\ref{prop:lloyd-support} restricts its nonconstant dual spectrum
to the zeros $6,9$ of $c_w(2)$.  Since its minimum distance is at least five,
$A_1=A_2=0$.  The mass and first-moment identities therefore give
\[
 \Adual_6+\Adual_9=242,
 \qquad
 6\Adual_6+9\Adual_9=1782,
\]
so $\Adual_6=132$ and $\Adual_9=110$.  Direct evaluation gives
\[
 W_{11,3}(6)=30\,436,\qquad W_{11,3}(9)=31\,174,
\]
and consequently
\[
 \Dtwo(\Pcode)
 =\frac{132\cdot30\,436+110\cdot31\,174}{3^{11}}
 =\frac{7\,446\,692}{3^{11}}.
\]

For the binary parameters, let $\Pcode$ be perfect.  Its minimum distance is
at least seven, so $A_1=\cdots=A_4=0$.
Proposition~\ref{prop:lloyd-support} restricts its nonconstant dual spectrum
to $w=8,12,16$.  Equations
\eqref{eq:binary-golay-m0}--\eqref{eq:binary-golay-m2} give
\[
 (\Adual_8,\Adual_{12},\Adual_{16})=(506,1288,253).
\]
Moreover,
\[
 p(8)=p(16)=3\,121\,560,\qquad p(12)=705\,432,
\]
and the finite certificate has $W_{23,2}=p$ at these three weights.  Thus
\[
 \Dtwo(\Pcode)
 =\frac{506\cdot3\,121\,560+
        1288\cdot705\,432+
        253\cdot3\,121\,560}{2^{23}}
 =\frac{3\,277\,860\,456}{2^{23}}.
\]

\subsection{Quantitative ratio tables}

All entries below follow by direct exact evaluation of $W_{n,q}(w)$, the
displayed certificate polynomials, and the Lloyd factors.  Thus the tables
are part of the proof; the archived verifier provides an independent
reproduction of the same arithmetic.  The original certificate polynomials
appeared in Appendices~D and E of \cite{Zabokritskiy2026Perfect}; the perturbed
ratios and their minimizing values are used for the quantitative bounds proved
in the present paper.

For the ternary Golay parameters, using the polynomial $p_\tau$ from
\eqref{eq:ternary-perturbed-polynomial}, put
\[
 R_3(w)\triangleq
 \frac{W_{11,3}(w)-p_\tau(w)}{c_w(2)^2}
 \qquad(w\notin\{6,9\}).
\]
The exact values are
\[
\begin{array}{c|r@{\qquad}c|r}
 w&R_3(w)&w&R_3(w)\\ \hline
 1&72\,612\,403/3600&2&317\,159/294\\
 3&17\,236/81&4&3656/45\\
 5&46&7&274/9\\
 8&92/3&10&274/9\\
 11&638/15&&
\end{array}
\]
At the Lloyd roots,
\[
 W_{11,3}(6)=p_\tau(6)=30\,436,\qquad
 W_{11,3}(9)=p_\tau(9)=31\,174,
\]
and $c_6(2)=c_9(2)=0$.  Hence the minimum nonzero ratio is
$274/9$, attained at $w=7,10$, proving
\eqref{eq:ternary-pointwise}.

For the binary Golay parameters, the reflection symmetry proved in the body
reduces the check to $1\leq w\leq12$.  With $p$ as in
\eqref{eq:binary-golay-polynomial}, put
\[
 R_2(w)\triangleq
 \frac{W_{23,2}(w)-p(w)}{c_w(3)^2}
 \qquad(w\notin\{8,12,16\}).
\]
For the nonroot weights in the reduced range,
\[
\begin{array}{c|r@{\qquad}c|r}
 w&R_2(w)&w&R_2(w)\\ \hline
 1&26\,299\,411\,704/29\,645&2&190\,752\,843/4900\\
 3&1\,126\,552/195&4&25\,509/16\\
 5&558\,648/847&6&7023/20\\
 7&4952/25&9&1608/7\\
 10&731/4&11&4136/25
\end{array}
\]
At $w=8,12$, both $W_{23,2}(w)-p(w)$ and $c_w(3)$ vanish; reflection
gives the same statement at $w=16$.  The minimum nonzero ratio is
$4136/25$ at $w=11$, and by reflection also at $w=13$.  This proves
\eqref{eq:binary-pointwise}.

\section*{Acknowledgment}
OpenAI ChatGPT and OpenAI Codex were used as assistive tools to explore proof
strategies, check algebraic identities and small finite cases, identify
relevant literature, and improve English and LaTeX presentation.  The authors
directed the research, critically evaluated the outputs, verified the final
theorem statements, proofs, calculations, and citations, and take full
responsibility for the content of the manuscript.  The exact-arithmetic
verifier archived on Zenodo, version~1.0.3, independently reproduces the
finite Golay tables, factorial-moment coefficients, two-error coefficient
identities, and bounded parameter search listed in its verification scope
\cite{Zabokritskiy2026StabilityVerifier}.

\ifdefined\TITWRAPPER
  \expandafter\endinput
\fi

\begingroup
\hbadness=2000
\enlargethispage{3\baselineskip}
\bibliographystyle{plain}
\bibliography{references}
\endgroup

\end{document}